\begin{document}
\title{A Local-to-Global Theorem for Congested Shortest Paths}

\author{
Shyan Akmal\thanks{MIT CSAIL, \url{naysh@mit.edu}. Supported in part by NSF grant CCF-2129139. }
\and 
Nicole Wein\thanks{DIMACS, Rutgers University, \url{nicole.wein@rutgers.edu}.
Supported by a grant to DIMACS from the Simons Foundation (820931).} 
}

\date{\vspace{-3ex}}

\maketitle

\thispagestyle{empty}

\begin{abstract}
    Amiri and Wargalla proved the following local-to-global theorem about shortest paths in directed acyclic graphs (DAGs): if $G$ is a weighted DAG with the property that for each subset $S$ of 3 nodes there is a shortest path containing every node in $S$, then there exists a pair $(s,t)$ of nodes such that there is a shortest $st$-path containing every node in $G$. 
   We extend this theorem to general graphs. 
   For undirected graphs, we prove that the same theorem holds (up to a difference in the constant 3). 
   For directed graphs, we provide a counterexample to the theorem (for any constant). 
   However, we prove a \emph{roundtrip} analogue of the theorem which guarantees there exists a pair $(s,t)$ of nodes such that every node in $G$ is contained in the union of a shortest $st$-path and a shortest $ts$-path. 
   
   The original local-to-global theorem for DAGs has an application to the \textsf{$k$-Shortest Paths with Congestion $c$} (\textsf{($k,c$)-SPC}) problem.
   In this problem, we are given a weighted graph $G$, together with $k$ node pairs $(s_1,t_1),\dots,(s_k,t_k)$, and a positive integer $c\leq k$, and tasked with finding a collection of paths $P_1,\dots, P_k$ such that each $P_i$ is a shortest path from $s_i$ to $t_i$, and every node in the graph is on at most $c$ paths $P_i$, or reporting that no such collection of paths exists. 
   When $c=k$, there are no congestion constraints, and the problem can be solved easily by running a shortest path algorithm for each pair $(s_i,t_i)$ independently. 
   At the other extreme, when $c=1$, the \textsf{$(k,c)$-SPC} problem is equivalent to the \textsf{$k$-Disjoint Shortest Paths} (\textsf{$k$-DSP}) problem, where the collection of shortest paths must be node-disjoint. 
   For fixed $k$, \textsf{$k$-DSP} is polynomial-time solvable on DAGs and undirected graphs.  
   Amiri and Wargalla interpolated between these two extreme values of $c$, to obtain an algorithm for \textsf{$(k,c)$-SPC} on DAGs that runs in polynomial time when $k-c$ is constant.
   
   In the same way, we prove that \textsf{$(k,c)$-SPC} can be solved in polynomial time on undirected graphs whenever $k-c$ is constant.
For directed graphs, because of our counterexample to the original theorem statement, our roundtrip local-to-global result does not imply such an algorithm \textsf{$(k,c)$-SPC}. 
   Even without an algorithmic application, our proof for directed graphs may be of broader interest because it characterizes  intriguing structural properties of shortest paths in directed graphs. 
   
   
\end{abstract}
\clearpage
\pagenumbering{arabic}


\section{Introduction}

An intriguing question in graph theory and algorithms is: ``can we understand the structure of shortest paths in (directed and undirected) graphs?'' 
Or more specifically: ``can we understand the structure of the \emph{interactions} between shortest paths in graphs?'' 
This question has been approached from various angles in the literature. 
For instance, Bodwin \cite{DBLP:conf/soda/Bodwin19} characterizes which sets of nodes can be realized as \emph{unique} shortest paths in weighted 
graphs, and Cizma and Linial investigate the properties of graphs whose shortest paths satisfy or violate certain geometric properties \cite{Cizma2022,Cizma2023}. 
Additionally, there is a large body of work on distance preservers, where the goal is to construct a subgraph that preserves distances in the original graph (see \cite{DBLP:conf/soda/Bodwin17} and the references therein). 
There are also numerous computational tasks in which structural results about shortest paths inform the creation and analysis of algorithms, including distance oracle construction \cite{DBLP:conf/soda/ChechikZ22}, the $k$-disjoint shortest paths problem, \cite{SDP-undirected-geometric}, and the next-to-shortest path problem \cite{DBLP:conf/cocoa/Wu10}
(the cited papers are the most recent publications in their respective topics).

The angle of our work 
is inspired by the following local-to-global theorem\footnote{This is slightly different from the statement of the theorem in \cite{SDP-congestion-DAG}, where the authors write present their result in the context of the \textsf{Shortest Paths with Congestion} problem, which we discuss in detail later.} of Amiri and Wargalla \cite{SDP-congestion-DAG} concerning the structure of shortest paths in directed acyclic graphs (DAGs):

\begin{theorem}[\cite{SDP-congestion-DAG}]\label{thm:dagnew}
Let $G$ be a weighted DAG with the property that, for each set $S$ of 3 nodes, there is a shortest path containing every node in $S$. 
Then, there exists a pair $(s,t)$ of nodes such that there is a shortest $st$-path containing every node in $G$.
\end{theorem}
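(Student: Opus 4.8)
The plan is to prove that the three‑node hypothesis forces the reachability relation of $G$ to be a \emph{total order}, to then guess the promised pair $(s,t)$ to be the minimum and maximum nodes in this order, and finally to extract a spanning shortest path by an extremal/exchange argument. I would first dispose of the case where $G$ has fewer than $3$ nodes as trivial, and assume $n := |V(G)| \ge 3$.

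\textbf{Step 1 (a total order on the nodes).} For distinct nodes $u,v$, picking any third node $w$ and applying the hypothesis to $\{u,v,w\}$ produces a shortest path through both $u$ and $v$; since any directed path in a DAG traverses its vertices in topological order, this path visits one of $u,v$ before the other. So I would define $u \prec v$ to mean ``some shortest path of $G$ visits $u$ strictly before $v$'', and verify that $\prec$ is a strict total order. Totality is the previous sentence. Antisymmetry holds because $u\prec v$ and $v\prec u$ together would yield, via the relevant subpaths of the two witnessing shortest paths, a directed $u$-to-$v$ path and a directed $v$-to-$u$ path, hence a cycle. Transitivity follows because, given $u\prec v$ and $v\prec w$, the hypothesis applied to $\{u,v,w\}$ yields a shortest path $R$ through all three, and the subpaths witnessing $u\prec v$ and $v\prec w$ show that $G$ has directed $u$-to-$v$ and $v$-to-$w$ paths, forcing the topological order (hence the visiting order along $R$) to be $u,v,w$, so $u\prec w$. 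Write the nodes as $v_1 \prec v_2 \prec \cdots \prec v_n$.

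\textbf{Step 2 (splicing lemma and the guess).} Next I would record the key consequence of Step 1: for any $i<j<k$, applying the hypothesis to $\{v_i,v_j,v_k\}$ gives a shortest path visiting them in the $\prec$-order $v_i,v_j,v_k$ (since $\prec$ refines the topological order), so the $v_i$-to-$v_k$ subpath of that path is a shortest $v_iv_k$-path passing through $v_j$; in particular $d(v_i,v_k)=d(v_i,v_j)+d(v_j,v_k)$, writing $d$ for distance. I then claim that $(s,t)=(v_1,v_n)$ works. A shortest $v_1v_n$-path exists (e.g.\ the $v_1$-to-$v_n$ subpath of the path above with $i=1,k=n$). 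Among all shortest $v_1v_n$-paths, let $P$ be one with the largest number of vertices, and list its vertices in visiting order as $v_{i_1}=v_1,\,v_{i_2},\,\dots,\,v_{i_m}=v_n$; this order is $\prec$-increasing (consecutive vertices of $P$ are $\prec$-comparable via $P$ itself, and $\prec$ is transitive), so $i_1<i_2<\cdots<i_m$.

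\textbf{Step 3 (extremal argument).} Suppose $P$ misses some node $v_k$. Since $i_1=1$ and $i_m=n$, there is an index $\ell$ with $i_\ell<k<i_{\ell+1}$. By Step 2 there is a shortest $v_{i_\ell}v_{i_{\ell+1}}$-path $Q$ through $v_k$; because subpaths of the shortest path $P$ are shortest, $Q$ has the same weight as the segment of $P$ between $v_{i_\ell}$ and $v_{i_{\ell+1}}$. Replace that segment of $P$ by $Q$ to get a walk $P'$ from $v_1$ to $v_n$ of weight $d(v_1,v_n)$; since $G$ is acyclic, $P'$ cannot repeat a vertex, so $P'$ is a shortest $v_1v_n$-path. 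But the segment of $P$ that was deleted contained no $v_i$ in its interior (its endpoints are consecutive on $P$), so $P'$ retains every vertex of $P$ and additionally contains $v_k$, contradicting the maximality of $P$. Hence $P$ contains every node of $G$, completing the proof with $(s,t)=(v_1,v_n)$.

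\textbf{Where the difficulty lies.} I expect the delicate points to be (i) proving transitivity of $\prec$ --- this is exactly the place where the \emph{three}-node hypothesis is essential and a two-node version would not suffice --- and (ii) arguing that the spliced object $P'$ is an honest simple shortest path rather than merely a walk; both of these use acyclicity crucially, which is consistent with the fact (noted in the introduction) that the statement fails for general directed graphs. A smaller point requiring care is to splice only at a pair of \emph{consecutive} vertices of $P$, so that no other nodes are accidentally removed.
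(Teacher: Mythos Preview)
Your proof is correct and shares its core with the paper's argument: both identify the extremal pair as the first and last nodes in the (necessarily unique) topological order and then use the three-node hypothesis to splice in any missing vertex between two consecutive vertices already on the path. The paper's presentation (sketched in \Cref{subsec:prior}) differs only in packaging: rather than your extremal ``take a shortest $v_1v_n$-path with the maximum number of vertices and derive a contradiction'' argument, it builds the Hamiltonian shortest path iteratively, at step~$\ell$ taking a path $P'$ through $v_\ell, v_{\ell+1}, b$ and swapping $P[v_\ell,b]$ with $P'[v_\ell,b]$ to absorb $v_{\ell+1}$. Your extremal argument is slightly cleaner for the bare statement of \Cref{thm:dagnew}; the paper's iterative formulation is tailored to its more general \Cref{thm:daggen}, where one must work inside a fixed collection $\mathcal{R}$ of shortest paths and track explicit subpath swaps, a setting in which ``choose a maximizer over all shortest paths'' is not available. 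Your Step~1 also makes explicit something the paper uses only implicitly, namely that the three-node hypothesis forces reachability to be a total order (so that the nodes $v_\ell, v_{\ell+1}, b$ must appear in that order on \emph{any} path containing them).
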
 

Note that in the statement of \Cref{thm:dagnew}, the condition that ``for each set $S$ of 3 nodes, there is a shortest path containing every node in $S$'' is equivalent to the condition that one of the 3 nodes is on a shortest path between the other two.

\cref{thm:dagnew} is ``local-to-global'' in the sense that from a highly congested local structure (every small subset of nodes is contained in a shortest path) we deduce a global structure (all nodes in the graph live on a single shortest path).

At first glance, \cref{thm:dagnew} may appear rather specialized, since the existence of shortest paths through \emph{all} triples of nodes is a rather strong condition. 
However, Amiri and Wargalla \cite{SDP-congestion-DAG} show that \cref{thm:dagnew} has applications to the \textsf{$k$-Shortest Paths with Congestion $c$} (\textsf{($k,c$)-SPC}) on problem on DAGs: 
in this problem we are given a DAG, and are tasked with finding a collection of shortest paths between $k$ given source/sink pairs, such that each node in the graph is on at most $c$ of the paths. 
We discuss this application in detail in \Cref{sec:back}.

Amiri and Wargalla \cite{SDP-congestion-DAG} raised the question of whether their results can be extended to general graphs, both undirected and directed. 
Our work answers this question.

\subsection{Structural Results}

We ask the following question:

\begin{center}\emph{Is \cref{thm:dagnew} true for general (undirected and directed) graphs?}\end{center}

\noindent Our first result  answers this question  affirmatively for undirected graphs (with constant 4 instead of 3). 

\begin{theorem}[Undirected graphs]\label{thm:undirnew}
Let $G$ be a weighted undirected graph with the property that, for each set $S$ of 4 nodes, there is a shortest path containing every node in $S$. 
Then, there exists a pair $(s,t)$ of nodes  such that some shortest $st$-path contains every node in $G$.
\end{theorem}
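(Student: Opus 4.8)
The plan is to mimic the structure of a local-to-global argument: use the hypothesis on $4$-subsets to first produce two "extreme" nodes $s,t$ that will serve as the endpoints, and then argue that every other node lies on a shortest $st$-path. First I would fix any pair of nodes and, among all nodes, try to pick $s$ and $t$ so that $d(s,t)$ is maximized; the intuition (as in the DAG case) is that the diameter-realizing pair is a natural candidate for endpoints of a Hamiltonian-like shortest path. For any node $v$, applying the hypothesis to a suitable $3$- or $4$-set containing $s$, $t$, $v$ forces $v$ to lie on a shortest path among these nodes; the triangle inequality together with maximality of $d(s,t)$ should force that shortest path to actually be a shortest $st$-path through $v$. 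So the "each individual node is on a shortest $st$-path" part should follow in the same spirit as Amiri–Wargalla, with the extra slack of using $4$-sets (rather than $3$-sets) to compensate for the lack of a topological order.

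The genuinely new difficulty, compared to the DAG setting, is \emph{simultaneity}: knowing that each $v$ lies on \emph{some} shortest $st$-path does not give a \emph{single} shortest $st$-path through all of them, because in undirected graphs shortest paths between a fixed pair need not be "nested" or consistently ordered — two shortest $st$-paths can diverge and reconverge. The key step will therefore be to show that the nodes on shortest $st$-paths can be linearly ordered by their distance from $s$ (equivalently, every node $v$ on a shortest $st$-path satisfies $d(s,v)+d(v,t)=d(s,t)$, and no two such nodes are at the same distance), and then that any two nodes $u,v$ with $d(s,u)<d(s,v)$ are \emph{compatible}, i.e. some shortest $st$-path passes through $u$ then $v$. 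To get compatibility I would apply the $4$-set hypothesis to $\{s,u,v,t\}$: this yields a shortest path $P$ through all four, and I would argue via the triangle inequality and the distance-from-$s$ values that $P$ must visit them in the order $s,u,v,t$ and have total length $d(s,t)$, hence $P$ is a shortest $st$-path through both $u$ and $v$. This is exactly where the upgrade from $3$ to $4$ is used: we need all four special nodes on one path at once.

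From pairwise compatibility I would build the global path. Order all nodes lying on shortest $st$-paths as $s=v_0,v_1,\dots,v_m=t$ by increasing distance from $s$ (arguing first that these distances are distinct — if $d(s,v_i)=d(s,v_j)$ then applying the hypothesis to $\{s,v_i,v_j,t\}$ would force one of $v_i,v_j$ strictly between $s$ and the other, contradicting equal distances). Then I would show consecutive nodes $v_i,v_{i+1}$ are adjacent with $w(v_iv_{i+1})=d(s,v_{i+1})-d(s,v_i)$: otherwise some node on a shortest $v_iv_{i+1}$-path would lie strictly between them in distance, contradicting consecutiveness. Concatenating the edges $v_0v_1,\dots,v_{m-1}v_m$ gives a walk of length $\sum (d(s,v_{i+1})-d(s,v_i)) = d(s,t)$, i.e. a shortest $st$-path, and it contains every node of $G$, since every node is on some shortest $st$-path and hence appears among the $v_i$.

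The main obstacle I anticipate is rigorously establishing the "distinct distances from $s$" and "consecutive nodes are adjacent" claims: these are where undirectedness bites, because a shortest path realizing the $4$-set condition could a priori backtrack or route through nodes not among the chosen four, so I will need to carefully combine the triangle inequality with the definition of shortest paths (and possibly iterate the $4$-set hypothesis on intermediate triples) to pin down the order and adjacency. A secondary subtlety is verifying that the diameter pair $(s,t)$ actually forces \emph{every} node — not just "many" nodes — onto a shortest $st$-path; if some node $v$ resists, I would derive a contradiction with maximality of $d(s,t)$ by examining the shortest path guaranteed through $\{s,t,v\}$ (or a $4$-set) and showing its endpoints would be farther apart than $s$ and $t$.
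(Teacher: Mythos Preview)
Your proposal is correct. Its setup coincides with the paper's: pick $(s,t)$ to maximize $\dist(s,t)$, list the remaining nodes $v_1,\dots,v_m$ by increasing distance from $s$, and apply the $4$-set hypothesis to $\{s,v_i,v_{i+1},t\}$ to see that these four must appear in that order on some shortest path. Where you diverge from the paper is in how the final path is assembled. The paper actually proves the more general \Cref{thm:undirgen} (where one is only given a subset $W\subseteq V(G)$ and a collection $\mathcal{R}$ of shortest paths robust under subpath swaps), and so builds the path inductively: having a path in $\mathcal{R}$ through $s,v_1,\dots,v_\ell,t$, it swaps the $[v_\ell,t]$-segment with that of a path through $s,v_\ell,v_{\ell+1},t$ guaranteed by the hypothesis, thereby absorbing $v_{\ell+1}$. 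Your argument is instead purely structural: since every node of $G$ lies on a shortest $st$-path, any internal node of a shortest $v_iv_{i+1}$-path would have distance from $s$ strictly between those of $v_i$ and $v_{i+1}$, violating consecutiveness, so $v_iv_{i+1}$ is an edge of the correct weight and concatenation finishes. Your route is more elementary for \Cref{thm:undirnew} itself and sidesteps the swap machinery entirely, but it genuinely uses $W=V(G)$ (the intermediate node must itself be one of the $v_j$) and so does not extend to the subset version the paper needs for the \textsf{$(k,c)$-SPC} application.
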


The constant $4$ in the statement of \Cref{thm:undirnew} cannot be replaced with $3$, as seen by considering an undirected cycle on four vertices.

\cref{thm:undirnew} implies a faster algorithm for \textsf{$(k,c)$-SPC} on \emph{undirected} graphs, answering an open question raised in \cite[Section 4]{SDP-congestion-DAG}.
We discuss the background of this problem and our improvement in \cref{sec:back}.

Our second result is for directed graphs. 
First, we observe that there is actually a \emph{counterexample} to \cref{thm:dagnew} for general directed graphs: 
let $a$ be the constant for which we desire a counterexample (i.e., the constant that is 3 in \cref{thm:dagnew}, and 4 in \cref{thm:undirnew}). 
Let $G$ be a cycle with bidirectional edges, where all clockwise-pointing edges have weight 1 and all counterclockwise-pointing edges have weight $a$. 
One can verify that the precondition of the theorem holds: for each set $S$ of $a$ nodes there is a shortest path containing every node in $S$, just by taking the shortest clockwise path through all nodes in $S$. 
However, no single shortest path contains every node in $G$, so the theorem does not hold.

Even though a direct attempt at generalizing \cref{thm:dagnew} to all directed graphs fails, one might hope for some analogue of \cref{thm:dagnew} that does hold.
One interpretation of the above counterexample is that the exact statement of \cref{thm:dagnew} is not the ``right'' framework for getting a local-to-global shortest path phenomenon in general directed graphs. 
To that end, we consider the \emph{roundtrip} analogue of \cref{thm:dagnew}, where the final path through every node is a shortest roundtrip path, i.e., the union of a shortest $st$-path and a shortest $ts$-path 
(roundtrip distances are a common object of study in directed graphs, with there being much research, for example, in roundtrip routing \cite{DBLP:conf/soda/CowenW99}, roundtrip spanners \cite{DBLP:journals/talg/RodittyTZ08}, and roundtrip diameter computation \cite{DBLP:conf/soda/AbboudWW16}). 
Note that the above counterexample does not apply to the roundtrip analogue of \cref{thm:dagnew} since there exists a pair $(s,t)$ of nodes such that the union of a shortest $st$-path and a shortest $ts$-path are both in the clockwise direction and thus contain all nodes in the graph. 

For our second result, we present a roundtrip analogue of \cref{thm:dagnew} which holds true for general directed graphs (with the constant 11 instead of 3):

\begin{theorem}[Directed graphs]
\label{thm:dirnew}
Let $G$ be a weighted directed graph with the property that, for each set $S$ of 11 nodes, there is a shortest path containing every node in $S$. Then, there exists a pair $(s,t)$ of nodes such that the union of a shortest $st$-path and a shortest $ts$-path contains every node in $G$.
\end{theorem}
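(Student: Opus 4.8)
The plan is to build up from the three-node congestion hypothesis (now applied with eleven nodes to leave slack for the roundtrip setting) toward a single pair $(s,t)$ whose shortest $st$-path together with a shortest $ts$-path sweeps up all of $V(G)$. The first step is to set up a "betweenness" relation: say $b$ lies between $a$ and $c$ if $b$ is on some shortest $ac$-path. The hypothesis (with any constant $\ge 3$) says every triple has one element between the other two, so betweenness is a tournament-like relation on triples. I would first prove a directed analogue of the classical fact that such a relation yields a linear-like order, except that in directed graphs "reversal" is cheap, so the order is only defined up to reversal on certain pieces — this is exactly where the roundtrip framework is forced on us, and where the constant must grow. Concretely, I expect to extract a sequence $v_1,\dots,v_n$ of all nodes such that for $i<j<k$, either $v_j$ lies on a shortest $v_iv_k$-path or $v_j$ lies on a shortest $v_kv_i$-path, with local consistency of the direction.

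The second step is to upgrade this combinatorial order to a metric statement. Using the congestion hypothesis on larger constant-size subsets (this is what the jump from 3 to 11 buys), I would show that along the extracted sequence the (directed) distances behave additively in one consistent direction on each maximal "monotone run," i.e., there are breakpoints $1 = i_0 < i_1 < \cdots < i_m = n$ so that on each block $[i_{r},i_{r+1}]$ the nodes lie on a single shortest path in a fixed direction. The key sub-lemma is that $m$ is small — ideally $m \le 2$ — because a third change of direction among carefully chosen representatives of three blocks would produce a triple (or small set) with no shortest path through all of it, contradicting the hypothesis. Bounding the number of direction changes is the main obstacle: one has to rule out "staircase" configurations where distances in the two directions differ, and this is precisely the phenomenon the bidirectional-cycle counterexample exhibits, so the argument must use that we have congestion for sets of size up to $11$, not just $3$, to kill all bad local patterns. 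I anticipate a case analysis on the relative order of endpoints of two putative shortest paths pointing in opposite directions, combined with the triangle inequality for directed distances, to force the overlap structure.

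Once the sequence is organized into at most two monotone runs meeting at a common node (or, in the worst case, a constant number of runs that can be merged pairwise), the final step is to name $s$ and $t$: take $s = v_1$ and $t = v_n$ if the whole sequence is monotone, so that a shortest $st$-path already contains everything and the $ts$-path is free; otherwise, if there are two runs meeting at a "peak" node $p$ with one run going "up to" $p$ and the other coming "down from" $p$, set $s$ and $t$ to be the two extreme endpoints and argue that one shortest $st$-path realizes one run while a shortest $ts$-path realizes the other, using the additivity established in step two. Verifying that these chosen paths are genuinely shortest (not merely concatenations of shortest subpaths) will require invoking the congestion hypothesis one more time on the set consisting of $s$, $t$, and a bounded number of sampled interior nodes, which is the last place the constant $11$ gets spent.

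I expect the bound on the number of direction changes to be the crux: everything else is bookkeeping on top of the betweenness order, but controlling the global "shape" of the sequence against the adversarial flexibility of directed distances is where the real work — and the need for a constant strictly larger than in the undirected case — lies.
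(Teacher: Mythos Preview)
Your proposal has a fundamental gap at the very first step. You want to extract a single sequence $v_1,\dots,v_n$ of all nodes along which betweenness is locally consistent, but in directed graphs no such global order need exist: the paper explicitly notes (see ``Challenge~2'' in the technical overview) that if $v_1,\dots,v_\ell$ appear in that order on one shortest path, they can appear in exponentially many other orders on other shortest paths. The hypothesis only guarantees that every $11$-set sits on \emph{some} shortest path, and different $11$-sets may sit on shortest paths oriented incompatibly, so there is no tournament-to-linear-order argument available. Your ``monotone runs'' and the bound $m\le 2$ on direction changes are therefore built on an object you have not shown how to construct, and your sketch of why a third direction change would produce a bad small set is not convincing: representatives of three putative blocks could perfectly well all lie on a single shortest path that threads them in yet another order, so no contradiction with the hypothesis arises.

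The paper's argument is structurally quite different and never attempts to order all of $V(G)$ at once. It reduces the theorem to a statement about a collection $\mathcal{R}$ of shortest paths that is stable under \emph{subpath swaps}, and then iteratively grows the path $P\in\mathcal{R}$ containing the most target nodes. At each step either (Case~1) every missing node $v$ is forced to appear strictly between $b$ and $a$ on every path through $\{a,b,v\}\cup\text{trap}(v)$, in which case one more round of swaps produces a second path $P'$ so that $P\cup P'$ covers everything and $(a,b)$ is the desired roundtrip pair; or (Cases~2--5) one can name a constant-size set of ``critical nodes'' on $P$, invoke the $11$-node hypothesis to get a path $Q$ through them, and perform subpath swaps with $Q$ to reroute some path through strictly more nodes. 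The key structural input is the Reversing/Non-Reversing Path Lemma, which partitions $P[a,b]$ into three segments according to which shortest-path orderings of $a,w,b$ are realizable for each internal $w$; the critical nodes are chosen at the segment boundaries (with additional nodes $x_s,x_{s+1},x_t,x_{t+1}$ in the hardest non-reversing subcase), and a separate Critical Node Lemma packages the swap bookkeeping. The constant $11$ is spent on the size of this critical set $K$, not on ``sampling interior nodes to certify shortestness'' as you suggest. If you want to salvage a global-order approach, you would first need a substitute for the nonexistent consistent ordering --- and the reversing/non-reversing dichotomy is precisely the local structure the paper uses in its place.
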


Proving \cref{thm:dirnew} requires overcoming a number of technical challenges involving the c omplex structure of shortest paths in directed graphs.
Due to its roundtrip nature, unlike \Cref{thm:undirnew}, \Cref{thm:dirnew} does not appear to have any immediate algorithmic applications.

\subsection{Disjoint and Congested Shortest Path Problems}\label{sec:back}

In this section we introduce the \textsf{$k$-Shortest Paths with Congestion $c$} (\textsf{($k,c$)-SPC}) problem
and state the implications of our work for this problem. 

\subsubsection{Background}
We begin by discussing the related \textsf{$k$-Disjoint Shortest Paths ($k$-DSP)} problem. 
For more related work on disjoint path problems in general, see \cref{app:rel}.

\paragraph*{Disjoint Shortest Paths}

Formally, the \textsf{$k$-Disjoint Shortest Paths ($k$-DSP)} problem is defined as follows:

\begin{mdframed}{\textsf{$k$-Disjoint Shortest Paths ($k$-DSP)}: Given a graph $G$ and $k$ node pairs $(s_1,t_1),\dots,(s_k,t_k)$, find a collection of node-disjoint paths $P_1,\dots, P_k$ such that each $P_i$ is a shortest path from $s_i$ to $t_i$, or report that no such collection of paths exists.}\end{mdframed}

The \textsf{$k$-DSP} problem was introduced in the 90s by Eilam-Tzoreff \cite{eilam1998disjoint}, who gave a polynomial-time algorithm for undirected graphs when $k=2$, and conjectured that there is a polynomial-time algorithm for any fixed $k$ in both undirected and directed graphs. 
Recently, Lochet \cite{SDP-undirected-original} proved Eilam-Tzoreff's conjecture for undirected graphs by showing that \textsf{$k$-DSP} can be solved in polynomial time for any fixed $k$. Subsequently, the dependence on $k$ in the running time was improved by Bentert, Nichterlein, Renken, and Zschoche \cite{SDP-undirected-geometric}. 
It is known that \textsf{$k$-DSP} on undirected graphs is \textsf{W[1]}-hard  \cite[Proposition 36]{SDP-undirected-geometric},
so this problem is unlikely to be fixed-parameter tractable. 

For directed graphs, B{\'{e}}rczi and Kobayashi \cite{DBLP:conf/esa/Berczi017} showed that \textsf{$2$-DSP} can be solved in polynomial time.
For $k\ge 3$ however, determining the complexity of \textsf{$k$-DSP} on directed graphs remains a major open problem. 
This problem is only known to be polynomial-time solvable for special classes of directed graphs, such as DAGs and planar graphs \cite{DBLP:conf/esa/Berczi017}.

\paragraph*{Shortest Paths with Congestion}

The \textsf{$k$-Shortest Paths with Congestion $c$} (\textsf{($k,c$)-SPC}) problem is the variant of \textsf{$k$-DSP} where some amount of \emph{congestion} (paths overlapping at nodes) is allowed. 
In general, problems of finding paths with limited congestion are well-studied studied in both theory and practice. 
For instance, there is much work on the problem in undirected graphs where the goal is to find a maximum cardinality subset of node pairs $(s_i,t_i)$ that admit (not necessarily shortest) paths with congestion at most $c$ \cite{DBLP:journals/combinatorica/RaghavanT87, DBLP:conf/ipco/KolliopoulosS98, DBLP:journals/mor/BavejaS00, DBLP:journals/algorithmica/AzarR06, DBLP:journals/siamcomp/AndrewsZ07, DBLP:journals/siamcomp/ChekuriKS09, DBLP:conf/focs/Andrews10, DBLP:conf/stoc/KawarabayashiK11, DBLP:conf/soda/ChekuriE13,DBLP:journals/siamcomp/Chuzhoy16}. 
As another example, \cite{kawarabayashi2014excluded} provides, for fixed $k$, a polynomial-time algorithm for the problem on directed graphs of determining that either there is no set of disjoint paths between the node pairs $(s_i,t_i)$, or finding a set of such paths with congestion at most 4. 
Another example for directed graphs is the problem of finding paths between the node pairs $(s_i,t_i)$ where only some nodes in the graph have a congestion constraint \cite{DBLP:journals/tcs/LopesS22}. 

Formally, the \textsf{$k$-Shortest Paths with Congestion $c$ ($(k,c)$-SPC)} problem is defined as follows:

\begin{mdframed}{\textsf{$k$-Shortest Paths with Congestion $c$ (($k,c$)-SPC)}: Given a graph $G$, along with $k$ node pairs \\$(s_1,t_1),\dots,(s_k,t_k)$, and a positive integer $c\leq k$, find a collection of paths $P_1,\dots, P_k$ such that each $P_i$ is a shortest path from $s_i$ to $t_i$, and every node in $V(G)$ is on at most $c$ paths $P_i$, or report that no such collection of paths exists.}\end{mdframed}

The \textsf{$(k,c)$-SPC} problem was introduced by Amiri and Wargalla \cite{SDP-congestion-DAG}. 
Before that, the version of \textsf{$(k,c)$-SPC} where the paths are not restricted to be shortest paths was studied by Amiri, Kreutzer, Marx, and Rabinovich \cite{DBLP:journals/ipl/AmiriKMR19}.

When $c=1$, the \textsf{$(k,c)$-SPC} problem is equivalent to the \textsf{$k$-DSP} problem. 
At the other extreme, when $c=k$, there are no congestion constraints, so the problem can be easily solved in polynomial time by simply finding a shortest path for each pair $(s_i,t_i)$ independently. Amiri and Wargalla \cite{SDP-congestion-DAG} asked the following question: \emph{can we interpolate between these two extremes?} 
In particular, can we get algorithms for \textsf{$(k,c)$-SPC} where the exponential dependence on $k$ for \textsf{$k$-DSP} can be replaced with some dependence on $O(k-c)$ instead?

Amiri and Wargalla \cite{SDP-congestion-DAG} achieved this goal for DAGs. In particular, they gave a reduction from \textsf{$(k,c)$-SPC} on DAGs to \textsf{$k$-DSP} on DAGs of the following form: 
letting $d=k-c$, if \textsf{$k$-DSP} on DAGs can be solved in time $f(n,k)$, then \textsf{$(k,c)$-SPC} on DAGs can be solved in time $O\left(\binom{k}{3d}\cdot f(2dn, 3d)\right)$.
The essential aspect of this running time is that the second input to the function $f$ is not $k$ but rather an $O(d)$ term.
A key tool in their reduction is \cref{thm:dagnew} (stated in a different way). 

Since \textsf{$k$-DSP} can be solved in $n^{O(k)}$ time on DAGs \cite{DBLP:conf/esa/Berczi017}, the above result implies that \textsf{$(k,c)$-SPC} can be solved in $\binom{k}{3d}\cdot (2dn)^{O(d)}$ time on DAGs. 
That is, \textsf{$(k,c)$-SPC} on DAGs is polynomial-time solvable for arbitrary $k$ whenever $d$ is constant.
We note that for for every $c$, the \textsf{$(k,c)$-SPC} problem on DAGs is \textsf{W[1]}-hard with respect to $d$, so the problem is unlikely to be fixed-parameter tractable with respect to $d$ \cite[Proof of Theorem 3]{SDP-congestion-DAG}.

\subsubsection{Algorithmic Results}

Similarly to Amiri and Wargalla's result for DAGs, our result for undirected graphs, \cref{thm:undirnew}, implies a reduction from \textsf{$(k,c)$-SPC} to \textsf{$k$-DSP} on undirected graphs.

\begin{restatable}{lemma}{thmundir}\label{thm:undir}
    If \textsf{$k$-DSP} can be solved in $f(n,k)$ time on undirected graphs, then \textsf{$(k,c)$-SPC} can be solved in $O\left(\binom{k}{4d}\cdot f(3dn, 4d)\right)$ time on undirected graphs.
\end{restatable}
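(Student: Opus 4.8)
The plan is to mimic Amiri and Wargalla's reduction from \textsf{$(k,c)$-SPC} to \textsf{$k$-DSP} on DAGs, substituting \Cref{thm:undirnew} (the undirected local-to-global theorem, with constant $4$) wherever they used \Cref{thm:dagnew} (constant $3$). First I would set $d = k-c$ and observe that a solution to \textsf{$(k,c)$-SPC} is a collection $P_1,\dots,P_k$ of shortest paths in which every node lies on at most $c = k-d$ of the paths; equivalently, for every node $v$ there are at most $d$ indices $i$ with $v \notin P_i$. The key structural idea is: if many of the paths pairwise ``agree'' in a strong sense on some region of the graph, then by \Cref{thm:undirnew} applied to a suitable subgraph, all of those paths can be routed along one common shortest path, which uses up congestion but in a controlled way. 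Concretely, I would guess (by brute force over all $\binom{k}{4d}$ choices) which set of at most $4d$ indices are the ``special'' ones, and argue that the remaining $k - 4d$ paths can be forced to lie on a single shortest $st$-path: for these indices the pairwise/triple-wise overlap conditions hold, so the precondition of \Cref{thm:undirnew} is met on the union of the relevant shortest-path subgraphs, yielding one path that simultaneously serves as (a sub-path of) each of them.

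The next step is to handle the remaining $4d$ ``special'' paths. After fixing the single common shortest path $Q$ that absorbs the $k - 4d$ non-special demands, I would build an auxiliary undirected graph $G'$ on $O(dn)$ nodes: roughly, take $2d$ (or $3d$, to match the claimed $f(3dn, 4d)$ bound) disjoint copies of the portion of $G$ where the special paths may deviate from $Q$, wired together so that a node of $G$ used by the common path $Q$ has its ``budget'' of $c$ already consumed and hence is effectively forbidden (or available only to the limited extent allowed) in $G'$. In this gadget graph, the special demands become a \textsf{$k'$-DSP} instance with $k' = 4d$: node-disjointness in $G'$ exactly encodes the congestion-$c$ constraint in $G$, because the $3d$ copies give each relevant node enough multiplicity to be shared by up to $c$ original paths but no more. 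Calling the assumed \textsf{$k$-DSP} oracle on this instance in time $f(3dn, 4d)$, and iterating over all $\binom{k}{4d}$ guesses for the special index set, gives total time $O\!\left(\binom{k}{4d} \cdot f(3dn, 4d)\right)$.

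The correctness argument has two directions. For soundness, any collection of disjoint shortest paths found in $G'$ (together with $Q$ repeated for the non-special indices) projects back to a valid congestion-$c$ solution in $G$, since each original node's total load is the number of copies used plus the $Q$-contribution, which is at most $c$ by construction. For completeness, given any genuine \textsf{$(k,c)$-SPC} solution, I need to show that some guess of the $4d$-element special set ``works'': the non-special $k - 4d$ paths must then all be realizable along a single shortest path. This is where \Cref{thm:undirnew} enters --- but to invoke it I must verify its $4$-node precondition holds for the subgraph spanned by those paths, which requires a pigeonhole/counting argument showing that if more than $4d$ paths fail to be co-linear in this sense, some node would be missed by more than $d$ of them, contradicting congestion $c$. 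Formalizing this counting step --- pinning down exactly why the threshold is $4d$ and not some other multiple of $d$, and checking that the ``for every $4$ nodes there is a shortest path through all of them'' hypothesis really does hold on the restricted instance rather than just pairwise agreement --- is the main obstacle; the graph-gadget bookkeeping ($2d$ vs.\ $3d$ copies, edge weights in $G'$) is routine by comparison and essentially identical to the DAG construction in \cite{SDP-congestion-DAG}.
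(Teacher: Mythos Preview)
Your overall shape---enumerate $\binom{k}{4d}$ subsets, reduce each to a $4d$-\textsf{DSP} instance on a graph of size $O(dn)$---matches the paper, but the structural core is inverted, and the way you invoke \Cref{thm:undirnew} does not work.

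You claim that the $k-4d$ \emph{non-special} paths can be forced onto a single common shortest path $Q$ via \Cref{thm:undirnew}. But \Cref{thm:undirnew} is a statement about a single path covering all \emph{nodes} of a graph satisfying a $4$-node precondition; it says nothing about routing many distinct $(s_i,t_i)$ demands along subpaths of one path. Your proposed ``pigeonhole/counting argument'' for the $4$-node precondition on the subgraph spanned by those paths is not just a formality---there is no reason such a precondition should hold there, and the obstacle you flag at the end is in fact fatal.

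The paper's argument goes the other way around. The relevant structural input is \Cref{lm:heavy-path}: in any $(k,c)$-\textsf{SPC} solution with $k>4d$, some single solution path $P_j$ passes through \emph{all} max-congestion nodes. Deleting $P_j$ from the solution and decreasing $c$ by one yields a valid $(k-1,c-1)$-\textsf{SPC} solution on the remaining pairs, with $(k-1)-(c-1)=d$ unchanged; so \Cref{lm:heavy-path} applies again. Iterating until only $4d$ pairs remain gives \Cref{lm:structure}: some size-$4d$ subset $I$ admits a $(4d,3d)$-\textsf{SPC} solution, while the remaining $k-4d$ paths may be \emph{arbitrary} shortest paths (not subpaths of one common $Q$). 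Correctness of the enumeration then follows because any $(4d,3d)$-\textsf{SPC} solution on $I$, completed by arbitrary shortest paths on the other pairs, has congestion at most $3d+(k-4d)=c$.

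Finally, the auxiliary-graph step is simpler than you describe: it is just the generic reduction from $(k',c')$-\textsf{SPC} to $k'$-\textsf{DSP} by replacing every node with $c'$ parallel copies (\Cref{eq:reduce-uncongested}), applied with $k'=4d$ and $c'=3d$, which is where the $3dn$ and $4d$ in $f(3dn,4d)$ come from. There is no need to treat nodes on some path $Q$ specially.
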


\cref{thm:undir} follows from \cref{thm:undirnew} using an argument nearly identical to the one presented for DAGs in \cite{SDP-congestion-DAG} (up to a difference in constants). 
For completeness, we include a full proof of this result in \cref{app:cor}.

Since it is known that \textsf{$k$-DSP} can be solved in undirected graphs in time $n^{O(k\cdot k!)}$ \cite{SDP-undirected-geometric}, applying \Cref{thm:undirnew} together with \Cref{thm:undir}, we deduce the following result.

\begin{corollary}\label{cor}
    \textsf{$(k,c)$-SPC} on undirected graphs can be solved in
    $\binom{k}{4d}\cdot (3dn)^{O(d\cdot (4d)!)}$ time.
\end{corollary}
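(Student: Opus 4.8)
The plan is to obtain the corollary as a direct instantiation of the reduction in \Cref{thm:undir}, fed the fastest known algorithm for \textsf{$k$-DSP} on undirected graphs. Write $d = k - c$, as in the rest of the section. By the work of Bentert, Nichterlein, Renken, and Zschoche \cite{SDP-undirected-geometric}, \textsf{$k$-DSP} on undirected graphs can be solved in time $f(n,k) = n^{O(k \cdot k!)}$. The first step is to plug this bound into \Cref{thm:undir}, which produces \textsf{$4d$-DSP} subinstances on graphs with $3dn$ nodes; this gives a running time for \textsf{$(k,c)$-SPC} on undirected graphs of
\[
O\!\left( \binom{k}{4d} \cdot f(3dn,\, 4d) \right) \;=\; \binom{k}{4d}\cdot (3dn)^{O\left( (4d)\cdot (4d)! \right)}.
\]

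The second and only remaining step is to tidy the exponent: since $(4d)\cdot(4d)! = 4\cdot d\cdot (4d)! = O\!\left(d\cdot (4d)!\right)$, the constant factor is absorbed into the big-$O$, yielding the stated bound $\binom{k}{4d}\cdot (3dn)^{O(d\cdot(4d)!)}$. In particular, for constant $d$ this is polynomial in $n$, and since $\binom{k}{4d}\le k^{4d}$ it is polynomial in $k$ as well, which is the content we want.

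There is no real obstacle here: all the difficulty lies in \Cref{thm:undirnew} and in the reduction \Cref{thm:undir} built on top of it, both of which I would simply invoke as black boxes. The single point deserving a remark is the degenerate regime $4d \ge k$: there the bound is to be read as vacuous, since whenever $d$ fails to be small one can solve \textsf{$(k,c)$-SPC} by other means, and the reduction of \Cref{thm:undir} is only meant to be applied when $4d \le k$ so that its \textsf{$4d$-DSP} subinstances are well defined.
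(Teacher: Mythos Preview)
Your proposal is correct and matches the paper's argument exactly: plug the $n^{O(k\cdot k!)}$ bound for \textsf{$k$-DSP} from \cite{SDP-undirected-geometric} into \Cref{thm:undir} and absorb the factor of $4$ into the big-$O$. One small quibble: your remark about the degenerate regime $4d\ge k$ is unnecessary (and slightly off), since the statement and proof of \Cref{thm:undir} already cover that case internally rather than requiring $4d\le k$.
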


Thus \textsf{$(k,c)$-SPC} on undirected graphs is in polynomial time whenever $d = k-c$ is constant; that is, it is in the complexity class $\mathsf{XP}$ with respect to the parameter $d$.
Prior to our work, no polynomial-time algorithm for this problem appears to have been known in this regime, even for simple cases such as \textsf{$(k,k-1)$-SPC} on undirected graphs. 

In contrast, our structural result for directed graphs, \Cref{thm:dirnew}, does not imply a faster algorithm for \textsf{$(k,c)$-SPC} in directed graphs. 
This is because \Cref{thm:dirnew} does not appear to imply a reduction from \textsf{$(k,c)$-SPC} to \textsf{$k$-DSP} in the manner of \cref{thm:undir}. 
Moreover, even if such a reduction did exist, this would not imply an algorithm for directed graphs analogous to \cref{cor}. 
This is because while \textsf{$k$-DSP} is polynomial-time solvable for constant $k$ in undirected graphs, it remains open whether even \textsf{3-DSP} over directed graphs can be solved in polynomial time.

\subsection{The Structure of Shortest Paths in Directed Graphs}\label{sec:struct}

Although our result for directed graphs, \cref{thm:dirnew}, does not appear to have immediate algorithmic applications, we believe it is still interesting from a graph theoretic perspective, especially in light of the current scarcity of results for shortest disjoint path problems in directed graphs.
In this section, we expand upon this idea with some remarks, and then state a lemma from our proof concerning the structure of shortest paths in directed graphs.

We currently have a poor understanding of the complexity of the \textsf{$k$-DSP} problem in directed graphs: for fixed $k\ge 3$, it is still not known if this problem is either polynomial-time solvable or \textsf{NP}-hard.
In fact, even the complexity of the seemingly easier \textsf{$(3,2)$-SPC} problem on directed graphs is open.
In this context, \cref{thm:dirnew} is compelling because it presents an example of interesting behavior which holds for collections of shortest paths in DAGs, and then continues to hold, under suitable generalization, for systems of shortest paths in general directed graphs. 
This sort of characterization appears to be rare in the literature.

More generally, the methods we use to establish \cref{thm:dirnew} involve combinatorial observations about the structure of shortest paths in directed graphs, and the interactions between them. 
We believe our analysis could offer more insight into resolving other problems that concern systems of shortest paths in directed graphs. 
There are many such problems, where the undirected case is well-understood, but in the directed case not much is known.
This barrier is in-part due to the relatively complex patterns of shortest paths which can appear in directed graphs. 
We hope that our analysis of directed shortest paths may shed light on problems for which the structural complexity of directed shortest paths is the bottleneck towards progress. 

One example of a problem where the undirected case is well-understood while the directed case remains poorly understood is, as discussed previously, the \textsf{$k$-DSP} problem.
Another curious example is the \textsf{Not-Shortest Path} 
problem, where the goal is to find an $st$-path that is not a shortest path.
Although \textsf{Not-Shortest Path} can be solved over undirected graphs in polynomial time \cite{DBLP:journals/ipl/KrasikovN04}, no polynomial time algorithm is known for this problem in directed graphs.
A third example is the problem of approximating the diameter of a graph. 
For undirected graphs there is an infinite hierarchy of algorithms that trade off between time and accuracy \cite{DBLP:conf/soda/CairoGR16}, while only two points on the hierarchy are known for directed graphs. 
Additionally, the roundtrip variant of diameter is the least understood of any studied variant of the diameter problem \cite{DBLP:journals/sigact/RubinsteinW19}. 
Another example is the construction of approximate hopsets: for directed graphs there is there a polynomial gap between upper and lower bounds \cite{DBLP:journals/siamdm/HuangP21,nicole-aaron-amazing-work,BodwinWinning}, while for undirected graphs the gap is subpolynomial \cite{DBLP:conf/spaa/ElkinN19, DBLP:journals/ipl/HuangP19}.
The preponderance of such examples motivates proving results like \Cref{thm:dirnew}, which characterize interesting behavior of shortest paths in directed graphs.

\paragraph*{Structural Lemma for Directed Shortest Paths}

One of the lemmas we establish on the way to proving \cref{thm:dirnew} is a general statement about the structure of shortest paths in directed graphs. It can be stated independently of the context of the proof of \cref{thm:dirnew} and
we highlight it here.

We categorize any shortest path $P$ into one of two main types, based on the ways that other shortest paths intersect with it. 
The following simple definition will be useful for defining our path types.

\begin{restatable}{definition}{defsp}\label{def:sp}
For any nonnegative integer $\ell$ and set of $\ell$ nodes, $v_1,v_2,\dots,v_\ell$, we say that the order $v_1 \rightarrow v_2 \rightarrow \dots \rightarrow v_\ell$ is a \emph{shortest-path ordering} if there is a shortest path containing all of the nodes $v_1,v_2,\dots,v_\ell$ in that order.
\end{restatable}

In addition to our two main path types, there is a third path type  which we call a \emph{trivial path} because it is easy to handle:

\begin{restatable}[Trivial Path]{definition}{deftriv}
Given a directed graph, nodes $a$ and $b$, and a shortest path $P$ from $a$ to $b$, we say $P$ is a \emph{trivial path} if $P$ contains at least one node $w$ such that $a\rightarrow w\rightarrow b$ is the \emph{only} shortest-path ordering of $a,w,b$.
\end{restatable}

Now we are ready to state our two main types of shortest paths. 
The first type is a \emph{reversing path}:

\begin{definition}[Reversing path] Given a directed graph, nodes $a$ and $b$, and a non-trivial shortest path $P$ from $a$ to $b$, $P$ is \emph{reversing} if $P$ contains at least one node $w$ such that $w$ falls on some shortest path from $b$ to $a$. A \emph{non-reversing path} is a non-trivial path that is not reversing. 
\end{definition}

We prove a lemma that characterizes the structure of reversing and non-reversing paths in terms of the possible shortest-path orderings of each node on the path and the endpoints of the path. See \cref{fig:F,fig:G} for a depiction of the structure enforced by the lemma.

\begin{restatable}[Reversing/Non-Reversing Lemma]{lemma}{lemrev}\label{lem:rev} Let $P$ be a non-trivial shortest path and let $a$ and $b$ be the first and last nodes of $P$ respectively. Then $P$ can be partitioned into three contiguous ordered segments with the following properties (where $a$ is defined to be in Segment 1, $b$ is defined to be in Segment 3, and Segment 2 could be empty).\\[3mm]
\underline{Segment 1} consists of nodes $w$ such that the shortest-path orderings of $a,w,b$ are precisely \\\centerline{$a\rightarrow w \rightarrow b$, and $b\rightarrow a \rightarrow w$.}\\[3mm]
\underline{Segment 2} consists of nodes $w$ such that the shortest-path orderings of $a,w,b$ are precisely \\
\[
\begin{cases}
\text{$a\rightarrow w \rightarrow b$, and $b\rightarrow w \rightarrow a$} & \text{if $P$ is a reversing path}\\
\text{$a\rightarrow w \rightarrow b$, and $b\rightarrow a \rightarrow w$, and $w\rightarrow b\rightarrow a$} & \text{if $P$ is a non-reversing path.}
\end{cases}\]\\[3mm]
\underline{Segment 3} consists nodes $w$ such that the shortest-path orderings of $a,w,b$ are precisely \\\centerline{$a\rightarrow w \rightarrow b$, and $w\rightarrow b\rightarrow a$.}
\end{restatable}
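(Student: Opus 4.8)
The plan is to record, for each node $u$ of $P$, exactly which of the six orderings of $\{a,u,b\}$ are shortest-path orderings, and then to extract the desired partition from two monotone ``potential functions'' along $P$. Write $P=(a=u_0,u_1,\dots,u_m=b)$, and set $\alpha_i=d(a,u_i)$ and $\beta_i=d(u_i,b)$ (so $\alpha_i$ strictly increases, $\alpha_i+\beta_i=d(a,b)$, and by sub-path optimality $d(u_{i-1},u_i)=\alpha_i-\alpha_{i-1}=\beta_{i-1}-\beta_i$), and set $\gamma=d(b,a)$. A first observation is that non-triviality of $P$ forces $a\neq b$ and $\gamma<\infty$: if $\gamma=\infty$, then for every node $u$ on $P$ the only shortest-path ordering of $\{a,u,b\}$ is $a\to u\to b$, so $P$ is trivial. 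Next, a handful of one-line triangle-inequality checks show that for an \emph{internal} node $u_i$ (meaning $u_i\notin\{a,b\}$), the orderings $a\to b\to u_i$ and $u_i\to a\to b$ never hold, and among the remaining three orderings $b\to a\to u_i$, $b\to u_i\to a$, and $u_i\to b\to a$, neither the pair $\{b\to a\to u_i,\,b\to u_i\to a\}$ nor the pair $\{b\to u_i\to a,\,u_i\to b\to a\}$ can both hold (each coincidence forces $u_i\in\{a,b\}$). Since $P$ is non-trivial, every internal $u_i$ realizes at least one of those three orderings, so its full ordering-set is exactly one of the four profiles listed in the lemma; the whole task is to show that these four profiles occur along $P$ in the claimed contiguous order.

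The crux is to identify the monotone potentials. Define $F_i=d(b,u_i)-\alpha_i$ and $G_i=d(u_i,a)-\beta_i$. Combining $d(u_{i-1},u_i)=\alpha_i-\alpha_{i-1}=\beta_{i-1}-\beta_i$ with the triangle inequality gives $F_i\le F_{i-1}$ and $G_i\ge G_{i-1}$, so $F$ is non-increasing and $G$ is non-decreasing along $P$, with $F_0=\gamma$, $F_m=-d(a,b)$, $G_0=-d(a,b)$, and $G_m=\gamma$. The point of these definitions is the dictionary: $b\to a\to u_i$ is a shortest-path ordering iff $d(b,u_i)=\gamma+\alpha_i$ iff $F_i=\gamma$; $u_i\to b\to a$ holds iff $G_i=\gamma$; and $b\to u_i\to a$ holds iff $d(b,u_i)+d(u_i,a)=\gamma$, i.e.\ iff $u_i$ lies on some shortest $b$-to-$a$ path. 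Since $F$ is non-increasing with maximum value $F_0=\gamma$ and $G$ is non-decreasing with maximum value $G_m=\gamma$, the nodes admitting $b\to a\to u_i$ form a prefix $\{u_0,\dots,u_p\}$ of $P$ and the nodes admitting $u_i\to b\to a$ form a suffix $\{u_q,\dots,u_m\}$, where $0\le p\le m-1$ and $1\le q\le m$.

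To finish I would split on whether $q>p+1$. If $q>p+1$, then every $u_i$ with $p+1\le i\le q-1$ is internal and admits neither $b\to a\to u_i$ nor $u_i\to b\to a$, hence by non-triviality admits $b\to u_i\to a$; in particular $u_{p+1}$ lies on a shortest $b$-to-$a$ path, so $P$ is reversing (equivalently, some internal node of $P$ lies on a shortest $b$-to-$a$ path), and I take the three segments to be $\{u_0,\dots,u_p\}$, $\{u_{p+1},\dots,u_{q-1}\}$, $\{u_q,\dots,u_m\}$. If instead $q\le p+1$, the prefix and suffix above already cover all of $P$, so every internal node admits $b\to a\to u_i$ or $u_i\to b\to a$; since either of these precludes $b\to u_i\to a$, no internal node lies on a shortest $b$-to-$a$ path, so $P$ is non-reversing, and I take the three segments to be $\{u_0,\dots,u_{q-1}\}$, $\{u_q,\dots,u_p\}$ (empty when $q=p+1$), and $\{u_{p+1},\dots,u_m\}$. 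In each case, combining the facts ``$u_i$ admits $b\to a\to u_i$ iff $i\le p$'' and ``$u_i$ admits $u_i\to b\to a$ iff $i\ge q$'' with the impossibility of $a\to b\to u_i$ and $u_i\to a\to b$ and the pairwise exclusions from the first paragraph yields precisely the ordering-set claimed for each segment. The endpoints $a$ and $b$ are assigned to Segments $1$ and $3$ by convention, and this is consistent because once $\gamma<\infty$ the orderings of the pair $\{a,b\}$ are exactly $a\to b$ and $b\to a$.

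I expect the main obstacle to be discovering the right framework, namely recognizing that $d(b,u_i)-d(a,u_i)$ and $d(u_i,a)-d(u_i,b)$ are monotone along $P$ and that their $\gamma$-level sets are exactly the node sets realizing $b\to a\to u_i$ and $u_i\to b\to a$. Once that is in hand, the remaining work --- the short mutual-exclusion computations, the translation into conditions on $F$ and $G$, and the bookkeeping around the endpoint degeneracies so that the segment profiles come out exactly rather than as mere containments --- is routine.
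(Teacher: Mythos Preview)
Your proof is correct and arrives at the same decomposition as the paper, but the route is packaged differently. The paper argues combinatorially: it defines $w_1$ as the last node on $P$ admitting $b\to a\to w$ but not $w\to b\to a$, and shows directly (by concatenating shortest-path orderings) that every node before $w_1$ has the same profile; a symmetric argument handles $w_2$ and Segment~3, and then a short case check determines whether the middle segment is of reversing or non-reversing type. You instead introduce the potential functions $F_i=d(b,u_i)-d(a,u_i)$ and $G_i=d(u_i,a)-d(u_i,b)$, observe they are monotone along $P$, and read off the prefix/suffix structure from their $\gamma$-level sets. The underlying content is the same---both proofs boil down to ``ordering-1 nodes form a prefix, ordering-2 nodes form a suffix''---but your potential-function framing makes the monotonicity explicit and handles the reversing/non-reversing dichotomy as a clean disjunction on whether the prefix and suffix overlap, whereas the paper treats that dichotomy as a separate check that Segment~2 cannot contain both profile types. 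Neither approach is more general, but yours is arguably more systematic and would scale more readily if one needed finer control over the segment boundaries.
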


\begin{figure}[ht]
\centering
\begin{subfigure}[t]{0.49\textwidth}
\centering
\def\svgwidth{\linewidth}
\begingroup%
  \makeatletter%
  \providecommand\color[2][]{%
    \errmessage{(Inkscape) Color is used for the text in Inkscape, but the package 'color.sty' is not loaded}%
    \renewcommand\color[2][]{}%
  }%
  \providecommand\transparent[1]{%
    \errmessage{(Inkscape) Transparency is used (non-zero) for the text in Inkscape, but the package 'transparent.sty' is not loaded}%
    \renewcommand\transparent[1]{}%
  }%
  \providecommand\rotatebox[2]{#2}%
  \newcommand*\fsize{\dimexpr\f@size pt\relax}%
  \newcommand*\lineheight[1]{\fontsize{\fsize}{#1\fsize}\selectfont}%
  \ifx\svgwidth\undefined%
    \setlength{\unitlength}{283.46456693bp}%
    \ifx\svgscale\undefined%
      \relax%
    \else%
      \setlength{\unitlength}{\unitlength * \real{\svgscale}}%
    \fi%
  \else%
    \setlength{\unitlength}{\svgwidth}%
  \fi%
  \global\let\svgwidth\undefined%
  \global\let\svgscale\undefined%
  \makeatother%
  \begin{picture}(1,0.48)%
    \lineheight{1}%
    \setlength\tabcolsep{0pt}%
    \put(0,0){\includegraphics[width=\unitlength,page=1]{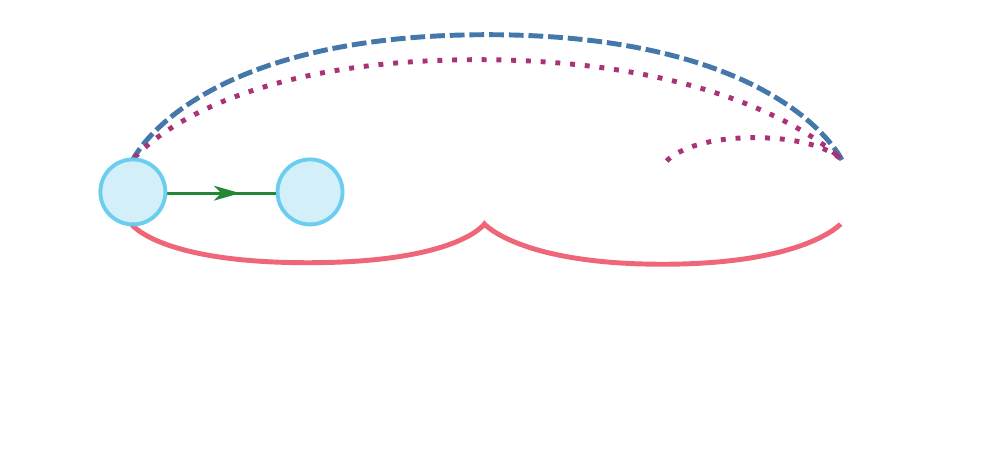}}%
    \put(0.1268755,0.27697765){\color[rgb]{0,0,0}\makebox(0,0)[lt]{\lineheight{1.25}\smash{\begin{tabular}[t]{l}$a$\end{tabular}}}}%
    \put(0,0){\includegraphics[width=\unitlength,page=2]{spc-F-smallertext.pdf}}%
    \put(0.8468734,0.27697765){\color[rgb]{0,0,0}\makebox(0,0)[lt]{\lineheight{1.25}\smash{\begin{tabular}[t]{l}$b$\end{tabular}}}}%
    \put(0,0){\includegraphics[width=\unitlength,page=3]{spc-F-smallertext.pdf}}%
    \put(0.16920879,0.13410252){\color[rgb]{0,0,0}\makebox(0,0)[lt]{\lineheight{1.25}\smash{\begin{tabular}[t]{l}Segment $1$\end{tabular}}}}%
    \put(0.41791737,0.13410252){\color[rgb]{0,0,0}\makebox(0,0)[lt]{\lineheight{1.25}\smash{\begin{tabular}[t]{l}Segment $2$\end{tabular}}}}%
    \put(0.6666259,0.13410252){\color[rgb]{0,0,0}\makebox(0,0)[lt]{\lineheight{1.25}\smash{\begin{tabular}[t]{l}Segment $3$\end{tabular}}}}%
  \end{picture}%
\endgroup%

\caption{{\bf Reversing Path.} The structure of a reversing path, as given by \cref{lem:rev}. The blue dashed path, pink solid path, and purple dotted path are examples of the allowed orderings for nodes in segments 1, 2, and 3 respectively.}
\label{fig:F}
\end{subfigure}
\hfill
\begin{subfigure}[t]{0.49\textwidth}
\centering
\def\svgwidth{\linewidth}
\begingroup%
  \makeatletter%
  \providecommand\color[2][]{%
    \errmessage{(Inkscape) Color is used for the text in Inkscape, but the package 'color.sty' is not loaded}%
    \renewcommand\color[2][]{}%
  }%
  \providecommand\transparent[1]{%
    \errmessage{(Inkscape) Transparency is used (non-zero) for the text in Inkscape, but the package 'transparent.sty' is not loaded}%
    \renewcommand\transparent[1]{}%
  }%
  \providecommand\rotatebox[2]{#2}%
  \newcommand*\fsize{\dimexpr\f@size pt\relax}%
  \newcommand*\lineheight[1]{\fontsize{\fsize}{#1\fsize}\selectfont}%
  \ifx\svgwidth\undefined%
    \setlength{\unitlength}{286.2992126bp}%
    \ifx\svgscale\undefined%
      \relax%
    \else%
      \setlength{\unitlength}{\unitlength * \real{\svgscale}}%
    \fi%
  \else%
    \setlength{\unitlength}{\svgwidth}%
  \fi%
  \global\let\svgwidth\undefined%
  \global\let\svgscale\undefined%
  \makeatother%
  \begin{picture}(1,0.47524752)%
    \lineheight{1}%
    \setlength\tabcolsep{0pt}%
    \put(0,0){\includegraphics[width=\unitlength,page=1]{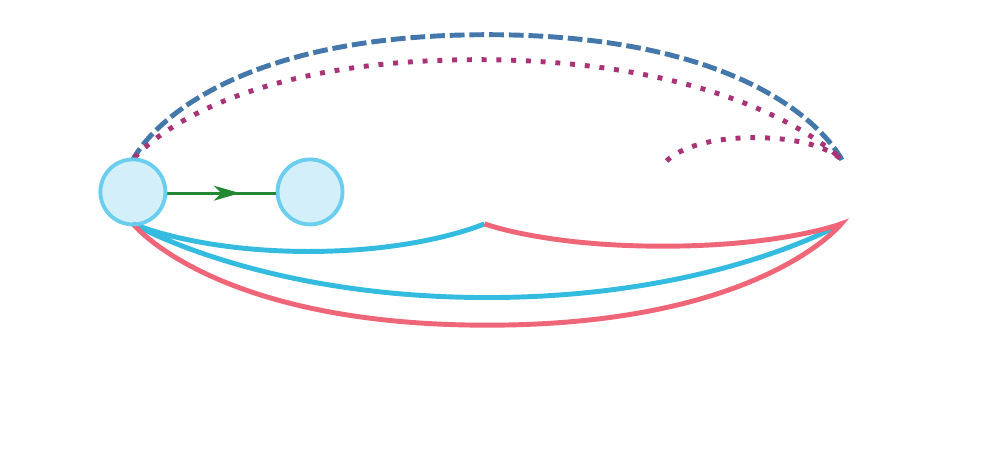}}%
    \put(0.1256193,0.2742353){\color[rgb]{0,0,0}\makebox(0,0)[lt]{\lineheight{1.25}\smash{\begin{tabular}[t]{l}$a$\end{tabular}}}}%
    \put(0,0){\includegraphics[width=\unitlength,page=2]{spc-G-smallertext.pdf}}%
    \put(0.83848851,0.2742353){\color[rgb]{0,0,0}\makebox(0,0)[lt]{\lineheight{1.25}\smash{\begin{tabular}[t]{l}$b$\end{tabular}}}}%
    \put(0,0){\includegraphics[width=\unitlength,page=3]{spc-G-smallertext.pdf}}%
    \put(0.16753346,0.0832731){\color[rgb]{0,0,0}\makebox(0,0)[lt]{\lineheight{1.25}\smash{\begin{tabular}[t]{l}Segment $1$\end{tabular}}}}%
    \put(0.41377957,0.0832731){\color[rgb]{0,0,0}\makebox(0,0)[lt]{\lineheight{1.25}\smash{\begin{tabular}[t]{l}Segment $2$\end{tabular}}}}%
    \put(0.66002564,0.0832731){\color[rgb]{0,0,0}\makebox(0,0)[lt]{\lineheight{1.25}\smash{\begin{tabular}[t]{l}Segment $3$\end{tabular}}}}%
    \put(0,0){\includegraphics[width=\unitlength,page=4]{spc-G-smallertext.pdf}}%
  \end{picture}%
\endgroup%

\caption{{\bf Non-Reversing Path.} The structure of a non-reversing path, as given by \cref{lem:rev}. The possible shortest-path orderings for nodes in segment 2 are represented by pink and light blue solid paths, while the orderings allowed for nodes in segments 1 and 3 are represented by blue dashed and purple dotted segments respectively.}
\label{fig:G}
\end{subfigure}
\caption{Possible orderings of vertices on shortest paths in the reversing and non-reversing cases. The blue circles are representative examples of the types of nodes on the path from $a$ to $b$ (in general this path will contain more than just five nodes).}
\end{figure}

In the proof of \cref{thm:dirnew} we employ the strategy of categorizing shortest paths as reversing or non-reversing (or trivial), and applying \cref{lem:rev} to glean some structure. 
We note, however,  that \cref{lem:rev} itself is not the main technical piece of the proof.

\section{Preliminaries}\label{sec:pre}

All graphs are assumed to have positive edge weights. 
Graphs are either undirected or directed, depending on the section. For any pair of nodes $(u,v)$, we use $\dist(u,v)$ to denote the shortest path distance from $u$ to $v$. 
Given a path $P$ and two nodes $u$ and $v$ occurring on $P$ in that order, we let $P[u,v]$ denote the subpath of $P$ with $u$ and $v$ as endpoints.  

For the \textsf{$(k,c)$-SPC} problem, we always let $d$ denote the difference $d=k-c$. When considering a particular solution to a \textsf{$(k,c)$-SPC} instance, we refer to the paths $P_1, \dots, P_k$ between $(s_1,t_1), \dots, (s_k,t_k)$ respectively as \emph{solution paths}. 
Any node in the graph which lies in $c$ of the solution paths is referred to as a \emph{max-congestion} node.

\subsection{Subpath Swapping}
\label{sec:swap}

In our proofs, we will frequently modify collections of shortest paths by ``swapping subpaths'' between intersecting paths. 
This procedure is depicted in \Cref{fig:A}, and we formally describe it below.
\begin{definition}[Subpath Swap]\label{defn:swap}
Let $\mathcal{R}$ be a collection of shortest paths in a directed graph.
Let $P$ and $Q$ be two paths in $\mathcal{R}$.
Let $a, b\in P\cap Q$ be nodes in these paths, such that $a$ appears before $b$ in both $P$ and $Q$.
We define \emph{swapping} the subpaths of $P$ and $Q$ between $a$ and $b$ to be updating the set of paths $\mathcal{R}$ by simultaneously replacing $P$ with $(P\setminus P[a,b])\cup Q[a,b]$ and $Q$ with $(Q\setminus Q[a,b])\cup P[a,b]$. 
We often refer to this process simply as ``swapping $P[a,b]$ and $Q[a,b]$.''
\end{definition}

\begin{figure}[ht]
\centering
\def\svgwidth{.9\linewidth}
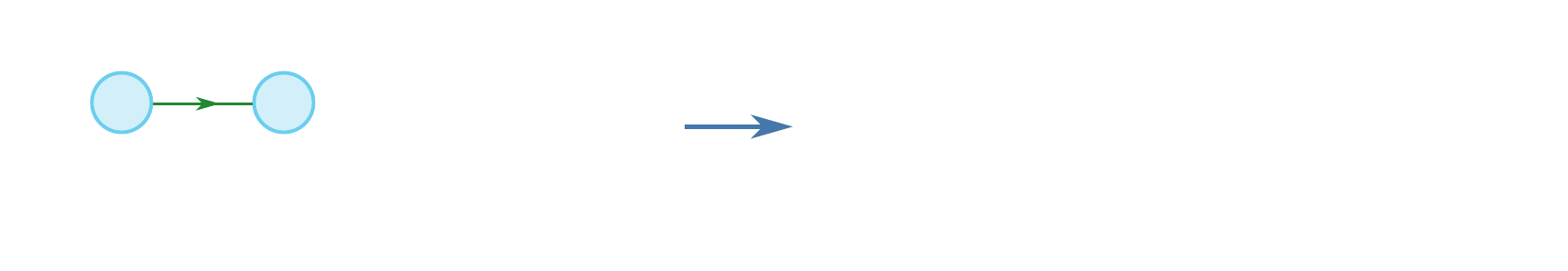
\caption{A simple subpath swap, where the subpaths from $a$ to $b$ of the green path (from $s_i$ to $t_i$) and pink path (from $s_j$ to $t_j$) are switched.}
\label{fig:A}
\end{figure}

\begin{observation}[Subpath Swap]
    \label{obs:subpath-swap}
    Let $\mathcal{R}$ be the solution to some \textsf{$(k,c)$-SPC} problem.
    Then swapping subpaths in $\mathcal{R}$ produces a new solution to the same \textsf{$(k,c)$-SPC} instance with the same set of max-congestion nodes.
\end{observation}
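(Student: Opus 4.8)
The plan is to verify the two things a ``solution'' must satisfy: that each modified path is still a shortest path between its prescribed endpoints, and that the congestion at every node is unchanged (the latter immediately gives both that the bound $c$ is still respected and that the set of max-congestion nodes is the same). Fix the swap in question: say $P$ is the solution path from $s_i$ to $t_i$, $Q$ the one from $s_j$ to $t_j$, and $a,b\in P\cap Q$ with $a$ before $b$ on both paths. After the swap, $P$ is replaced by $P'$, the concatenation of $P[s_i,a]$, $Q[a,b]$, and $P[b,t_i]$, and $Q$ is replaced by $Q'$, the concatenation of $Q[s_j,a]$, $P[a,b]$, and $Q[b,t_j]$; all other solution paths are untouched.

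First I would show $P'$ and $Q'$ are shortest paths. Since $P$ is a shortest path and $a$ precedes $b$ on it, the subpath $P[a,b]$ has length exactly $\dist(a,b)$, and likewise $Q[a,b]$ has length $\dist(a,b)$. Hence $P'$ is a walk from $s_i$ to $t_i$ whose length is (length of $P[s_i,a]$) $+\ \dist(a,b)\ +$ (length of $P[b,t_i]$) $=$ length of $P = \dist(s_i,t_i)$. Because all edge weights are positive, a walk of length $\dist(s_i,t_i)$ cannot repeat a vertex (a repeated vertex would enclose a closed subwalk of positive length, whose deletion yields a strictly shorter $s_i$-$t_i$ walk, a contradiction), so $P'$ is in fact a simple shortest $s_i$-$t_i$ path; by the symmetric argument $Q'$ is a simple shortest $s_j$-$t_j$ path. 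Thus the new collection again consists of shortest paths between the required pairs.

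Next I would handle congestion. Since only $P$ and $Q$ changed, it suffices to prove that for every node $v$ the number of paths among $\{P',Q'\}$ passing through $v$ equals the number among $\{P,Q\}$ passing through $v$; summing over all $k$ paths then shows the congestion of every node is unchanged, so in particular the bound is still $\le c$ and the set of nodes with congestion exactly $c$ is the same. To see this, decompose each of $P,Q,P',Q'$ at the cut vertices $a$ and $b$ into its ``pre-$a$'', ``$a$-to-$b$'', and ``post-$b$'' pieces; here $P$ and $Q$ supply their own three pieces, $P'$ uses $P$'s outer two and $Q$'s middle, and $Q'$ uses $Q$'s outer two and $P$'s middle. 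Because all four paths are simple (the previous step), the vertex multiset of each is the multiset sum of its three pieces with one copy each of $a$ and $b$ deleted (the junction vertices being the only repetitions). Therefore $V(P')\uplus V(Q')$ and $V(P)\uplus V(Q)$ are both equal to the same sum of these six pieces minus two copies each of $a$ and $b$, hence are equal as multisets, which is exactly the per-node claim. The only genuinely delicate point in the whole argument — and the only place anything could go wrong — is the implicit well-definedness assertion that $P'$ and $Q'$ are honest simple paths rather than vertex-revisiting walks; this is precisely what the length computation plus positivity of the edge weights rules out, and it is also what licenses the clean multiset bookkeeping above.
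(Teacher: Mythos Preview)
Your proof is correct and follows the same approach as the paper's: the paper's one-line argument simply asserts that swapping preserves the number of solution paths through every node, keeps endpoints fixed, and keeps all paths shortest, while you carefully verify each of these claims (including the simplicity of $P'$ and $Q'$ via positive edge weights and the per-node congestion preservation via the multiset decomposition). There is no substantive difference in strategy, only in level of detail.
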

\begin{proof}
    This observation holds because
    swapping subpaths does not change the number of solution paths any given node is contained in, does not change the endpoints of any solution path, and all solution paths remain shortest paths.
\end{proof}

\subsection{Correspondence Between Our Results and \texorpdfstring{$(k,c)$}{(k,c)}-SPC}\label{sec:nuance}

In this section we detail some nuances regarding the correspondence between the statement of our results and the \textsf{$(k,c)$-SPC} problem. 
For the sake of generality and simplicity, we stated Theorems \ref{thm:dagnew}-\ref{thm:dirnew} independently of the \textsf{$(k,c)$-SPC} problem. 
In contrast, the original result of Amiri and Wargalla, corresponding to \cref{thm:dagnew}, was stated as follows:

\begin{restatable}[\cite{SDP-congestion-DAG}]{lemma}{lemdag}\label{lem:dag}
If $k > 3d$, then any instance of \textsf{$(k,c)$-SPC} on DAGs either has no solution, or has a solution where some solution path $P_i$ passes through all max-congestion nodes.
\end{restatable}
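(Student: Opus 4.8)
The plan is to prove the lemma by an extremal argument over subpath swaps, using the hypothesis $k>3d$ only through a one-line pigeonhole bound. Fix a topological order of the DAG, so that every shortest path lists its vertices in this order. Suppose the instance has a solution $\mathcal R$ (otherwise the first alternative of the lemma holds), and let $M$ be its set of max-congestion nodes. If $M=\emptyset$ the conclusion is vacuous, so assume $M\neq\emptyset$ and let $m_{\min}$ and $m_{\max}$ be the first and last elements of $M$ in the topological order.

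\textbf{Local pigeonhole and the extremal choice.} In \emph{any} solution to the instance, each node of $M$ fails to lie on exactly $d=k-c$ of the $k$ paths, so for any set $T$ of at most three max-congestion nodes the number of paths missing some node of $T$ is at most $3d<k$; hence some solution path contains all of $T$, listing them in topological order. By \cref{obs:subpath-swap}, every solution reachable from $\mathcal R$ by subpath swaps has max-congestion set exactly $M$, so this bound applies to all of them. Applying it with $T=\{m_{\min},m_{\max}\}$, fix an index $i$ for which the path $P_i$ of $\mathcal R$ contains both $m_{\min}$ and $m_{\max}$. Now, among all solutions obtainable from $\mathcal R$ by finitely many subpath swaps whose $i$-th path contains both $m_{\min}$ and $m_{\max}$ — a nonempty family, and finite because each pair $(s_j,t_j)$ admits only finitely many shortest paths — choose one, with paths $P_1,\dots,P_k$, maximizing $|M\cap P_i|$. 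I claim $P_i\supseteq M$, which proves the lemma.

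\textbf{The swap.} Suppose not, and pick $m\in M\setminus P_i$. Since $m_{\min},m_{\max}\in P_i$ but $m\notin P_i$, we have $m_{\min}<m<m_{\max}$, so there is a largest $u\in M\cap P_i$ with $u<m$ and a smallest $v\in M\cap P_i$ with $v>m$; by the choice of $u$ and $v$, no node of $M\cap P_i$ lies strictly between $u$ and $v$. Applying the pigeonhole bound above to the triple $\{u,m,v\}$ in the current solution, some solution path $Q$ contains $u,m,v$ (in that order). Since $u,v\in P_i\cap Q$ with $u$ before $v$ in both, we may swap $P_i[u,v]$ with $Q[u,v]$ in the sense of \cref{defn:swap}; by \cref{obs:subpath-swap} the result is again a solution with max-congestion set $M$, and its $i$-th path still contains $m_{\min}$ and $m_{\max}$ (which lie in the untouched prefix $P_i[s_i,u]$ and suffix $P_i[v,t_i]$, or else equal $u$ or $v$). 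This new $i$-th path agrees with $P_i$ outside the interval $[u,v]$, and inside it now meets $M$ in a set containing $\{u,m,v\}$, whereas $P_i$ met $M$ there only in $\{u,v\}$; hence $|M\cap P_i|$ strictly increased, contradicting maximality. Therefore $P_i$ passes through every max-congestion node.

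\textbf{Main obstacle.} The delicate step is the swap: naively rerouting one path through a missing max-congestion node $m$ risks costing it a max-congestion node it already visited. This is precisely why $u$ and $v$ are taken to be the max-congestion nodes of $P_i$ that immediately bracket $m$ (so nothing of value can be lost in $[u,v]$), and why $P_i$ is kept anchored at the extreme nodes $m_{\min},m_{\max}$ throughout — this guarantees such a bracketing pair exists and excludes the degenerate case in which $m$ precedes, or follows, every max-congestion node already on $P_i$. As an alternative route to the pigeonhole fact, one could first apply \cref{thm:dagnew} to the auxiliary DAG on vertex set $M$ with an edge $x\to y$ of weight $\dist(x,y)$ for each $x<y$ — whose triple hypothesis is exactly the bound above — to conclude that all of $M$ already lies on a single shortest path of $G$; but since that path need not join any pair $(s_j,t_j)$, the subpath-swap argument is still needed to turn it into a genuine solution path.
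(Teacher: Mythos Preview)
Your proof is correct and takes essentially the same approach as the paper's (sketched in \cref{subsec:prior}): both use the DAG's topological order, the pigeonhole bound on triples of max-congestion nodes (\cref{lem:super-key}), and subpath swaps to grow a solution path until it covers all of $M$. The only cosmetic difference is that the paper phrases this as an explicit induction adding the nodes of $M$ in topological order via suffix swaps $P[v_\ell,b]\leftrightarrow P'[v_\ell,b]$, whereas you phrase it as an extremal argument with a middle-segment swap at a bracketing pair $u,v$.
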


Although the statement of \cref{lem:dag} may initially seem unrelated to the statement of \cref{thm:dagnew}, their correspondence becomes clearer with the following observation, which is a simple generalization of an observation from \cite{SDP-congestion-DAG}:

\begin{observation}\label{lem:super-key}
Let $N$ be a positive integer.
Suppose $k > Nd$, and let $\mathcal{R}$ be a solution to an arbitrary \textsf{$(k,c)$-SPC} instance. Then for any set $S$ of $N$ max-congestion nodes in $\mathcal{R}$, there exists some solution path in $\mathcal{R}$ which contains every node in $S$.
\end{observation}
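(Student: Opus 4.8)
The plan is to prove Observation~\ref{lem:super-key} by a direct counting argument: a union bound over the solution paths that \emph{fail} to contain some node of $S$.

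First I would record the key numerical fact about max-congestion nodes. By the definition of the \textsf{$(k,c)$-SPC} problem, every node of $G$ lies on at most $c$ of the solution paths $P_1,\dots,P_k$, and by definition a max-congestion node lies on exactly $c$ of them. Hence each max-congestion node $v$ is \emph{avoided} by exactly $k-c = d$ of the solution paths; write $A_v \subseteq \{P_1,\dots,P_k\}$ for this set of avoiding paths, so that $|A_v| = d$ for every $v \in S$.

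Next, fix the set $S$ of $N$ max-congestion nodes. A solution path $P_i$ contains every node of $S$ precisely when $P_i \notin \bigcup_{v \in S} A_v$. By the union bound, $\left|\bigcup_{v \in S} A_v\right| \le \sum_{v \in S} |A_v| = Nd$. Since we assumed $k > Nd$, this union has size strictly less than $k$, so it cannot exhaust $\{P_1,\dots,P_k\}$; therefore some solution path lies outside it, i.e.\ contains every node of $S$, which is exactly the claim.

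Honestly, I do not expect a real obstacle here: the statement is a pigeonhole/union-bound argument and the only step requiring any care is noticing that a max-congestion node is avoided by \emph{exactly} $d$ paths (not merely at most $d$), which is immediate from the definitions. I would also remark that the argument is uniform in $N$, so it specializes to $N=3$ (DAGs), $N=4$ (undirected graphs), and $N=11$ (directed graphs); combined with the structural theorems this is what recovers congestion-to-\textsf{DSP} statements such as Lemma~\ref{lem:dag}.
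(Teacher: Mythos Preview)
Your argument is correct and essentially identical to the paper's own proof: both observe that each max-congestion node is avoided by exactly $d$ solution paths, then use a union bound to conclude that at most $Nd < k$ paths miss some node of $S$, leaving at least one path containing all of $S$. The only difference is cosmetic phrasing; your added remark on specializing $N$ is accurate but not part of the paper's proof.
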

\begin{proof}
	By assumption, each max-congestion node lies on exactly $c = k-d$ of the solution paths of $\mathcal{R}$, and does \emph{not} lie on exactly $d$ of these solution paths.
	This means that the total number of solution paths of $\mathcal{R}$ avoided by at least one node in $S$ is $|S|d = Nd < k$.
	Consequently there exists at least one solution
	path passing through every node in $S$.
\end{proof}

To prove our results for the \textsf{$(k,c)$-SPC} problem in undirected graphs (\cref{thm:undir,cor}), we need to prove a lemma analogous to \cref{lem:dag} but for undirected graphs:

\begin{restatable}{lemma}{lmheavy}
\label{lm:heavy-path}
    If $k > 4d$, then any instance of \textsf{$(k,c)$-SPC} either has no solution, or has a solution where some solution path $P_i$ passes through all max-congestion nodes.
\end{restatable}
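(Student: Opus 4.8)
The plan is to mirror the strategy Amiri and Wargalla used to derive Lemma~\ref{lem:dag} from Theorem~\ref{thm:dagnew}, but feeding in Theorem~\ref{thm:undirnew} (the undirected version) in place of Theorem~\ref{thm:dagnew}, which is exactly why the threshold degrades from $3d$ to $4d$. Concretely: suppose we are given a \textsf{$(k,c)$-SPC} instance on an undirected graph with $k > 4d$ that has at least one solution, and fix such a solution $\mathcal{R} = \{P_1,\dots,P_k\}$. Let $M \subseteq V(G)$ be the set of max-congestion nodes of $\mathcal{R}$. By Observation~\ref{lem:super-key} applied with $N = 4$, since $k > 4d$, every set $S$ of $4$ nodes of $M$ is contained in some single solution path of $\mathcal{R}$, and each such solution path is a shortest path. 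Hence the graph $G$ — or rather, it suffices to look at the subgraph/structure induced by $M$ together with the shortest-path relation — satisfies the hypothesis of Theorem~\ref{thm:undirnew}: for every $4$-subset $S$ of $M$ there is a shortest path of $G$ containing all of $S$.

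The subtlety is that Theorem~\ref{thm:undirnew} concludes there exist $s,t$ with a shortest $st$-path through \emph{all} of $V(G)$, but what we actually want is a \emph{solution path} $P_i$ through all of $M$. So the first real step is to set up the right auxiliary object: restrict attention to $M$, and argue that the "all $4$-subsets lie on a common shortest path" property, interpreted appropriately (e.g. by contracting or by passing to the metric/shortest-path structure on $M$, or by the equivalent ternary condition noted after Theorem~\ref{thm:dagnew}: of any three nodes of $M$, one lies on a shortest path between the other two — here the $4$-subset condition certainly implies the $3$-subset condition), lets us invoke Theorem~\ref{thm:undirnew} to get a single shortest-path ordering $v_1 \to v_2 \to \cdots \to v_{|M|}$ of all of $M$. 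The second step converts this ordering into an actual solution path: take $s = v_1$ and $t = v_{|M|}$; by Observation~\ref{lem:super-key} with $N=4$ (or even $N=3$) there is a solution path $P_i$ containing, say, $v_1$, some middle vertex, and $v_{|M|}$ — but to get \emph{all} of $M$ onto one solution path we instead use subpath swapping (Definition~\ref{defn:swap}, Observation~\ref{obs:subpath-swap}): starting from the solution path $P_{i_1}$ that contains the first block of $M$-nodes along the ordering, and repeatedly swapping subpaths with the solution path covering the next block at a shared node (such shared nodes exist precisely because consecutive short blocks of $M$ co-occur on a common shortest path by Observation~\ref{lem:super-key}), we stitch together a single solution path passing through every node of $M$, while Observation~\ref{obs:subpath-swap} guarantees the swapped collection is still a valid solution with the same max-congestion set $M$.

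The main obstacle I expect is the bookkeeping in this stitching argument: one has to be careful that (a) the shortest-path ordering of $M$ obtained from Theorem~\ref{thm:undirnew} is consistent with the orderings in which overlapping solution paths traverse the $M$-nodes they share (in undirected graphs a shortest path through a set of nodes could in principle traverse them in a reversed order, so one needs the uniqueness/consistency of orderings along shortest paths — this is where the analogue of the remark "one of the 3 nodes is on a shortest path between the other two" and positivity of edge weights gets used), and (b) that at each swap step the node where we splice is genuinely common to both the current stitched path and the next solution path, which again comes from applying Observation~\ref{lem:super-key} to a suitable small subset (of size $\le 4$) of $M$ straddling the boundary between two blocks. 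Modulo these consistency checks, the argument is essentially identical to the DAG case of \cite{SDP-congestion-DAG}, with every "$3$" replaced by "$4$"; since the excerpt says the full proof is deferred to an appendix and "follows ... using an argument nearly identical to the one presented for DAGs," I would present the reduction at the level of these steps and relegate the routine swapping details to the appendix.
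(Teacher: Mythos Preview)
Your plan has a genuine gap at the first step. Theorem~\ref{thm:undirnew} is stated for the \emph{entire} vertex set: its hypothesis is that every $4$-subset of $V(G)$ lies on a shortest path, and its conclusion is a shortest path through all of $V(G)$. In the \textsf{$(k,c)$-SPC} setting, Observation~\ref{lem:super-key} only gives you the $4$-subset property for the max-congestion set $M$, not for all of $V(G)$, so Theorem~\ref{thm:undirnew} does not apply as a black box. Your suggested fixes---``contracting'' or ``passing to the metric structure on $M$''---are not well-defined operations that preserve the needed hypotheses; in particular, a shortest path in a contracted or induced-metric graph need not correspond to a shortest path in $G$, which is what you need both to invoke the theorem and to perform subpath swaps afterwards.

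The paper sidesteps this by not deriving Lemma~\ref{lm:heavy-path} from Theorem~\ref{thm:undirnew} at all. Instead it formulates and proves a common generalization, Lemma~\ref{thm:undirgen}, whose hypothesis is exactly ``every $4$-subset of a designated set $W$ lies on some path of a collection $\mathcal{R}$, and this remains true after subpath swaps,'' and whose conclusion is that after swaps some path of $\mathcal{R}$ contains all of $W$. Then Lemma~\ref{lm:heavy-path} is an immediate instantiation with $W=M$ and $\mathcal{R}$ the solution paths (Observation~\ref{obs:subpath-swap} plus Observation~\ref{lem:super-key} verify the stability hypothesis); Theorem~\ref{thm:undirnew} is a separate instantiation with $W=V(G)$. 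The proof of Lemma~\ref{thm:undirgen} itself is essentially your ``step~2'' swapping argument done directly: pick $a,b\in W$ at maximum distance, order the remaining $W$-nodes by distance from $a$, and inductively swap the tail $[v_\ell,b]$ with a path in $\mathcal{R}$ through $a,b,v_\ell,v_{\ell+1}$. No separate ``ordering'' theorem is invoked; the ordering is just the distance-from-$a$ ordering, and the consistency worry you flag in~(a) is handled for free because the anchor pair $a,b$ has maximum distance in $W$. So you have the right ingredients, but the architecture should be ``prove the swap argument for an arbitrary $W$'' rather than ``invoke Theorem~\ref{thm:undirnew} and then swap.''
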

\noindent The following generalizes \Cref{thm:undirnew} and \Cref{lm:heavy-path}.

\begin{restatable}[General Undirected Result]{lemma}{thmundirgen}\label{thm:undirgen}

Given an undirected graph and subset $W$ of nodes, let $\mathcal{R}$ be a collection of shortest paths with the following property:

\begin{center}
\hfill for every set $S\subseteq W$ of 4 nodes, some path in $\mathcal{R}$ contains every node in $S$. \hfill $(\star)$
\end{center}
Further suppose that applying any sequence of $O(n^3)$ subpath swaps to $\mathcal{R}$ yields a collection of shortest paths that still has property $(\star)$. Then starting from $\mathcal{R}$, there exists a sequence of subpath swaps that results in a collection of shortest paths in which some path $P$ passes through all nodes in $W$.
\end{restatable}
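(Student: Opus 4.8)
The plan is to argue by a potential/extremal argument over the reachable configurations of $\mathcal{R}$. Call a collection of shortest paths obtained from $\mathcal{R}$ by a sequence of at most $O(n^3)$ subpath swaps \emph{admissible}; by hypothesis every admissible collection still satisfies $(\star)$. Among all admissible collections, pick one, call it $\mathcal{R}^*$, that maximizes some carefully chosen potential — the natural first candidate is the number of nodes of $W$ lying on a single longest (in terms of $|W|$-nodes) path of the collection, with ties broken by, say, the total number of $W$-nodes summed with multiplicity over all paths, or by a lexicographic rule on the multiset of per-path $W$-counts. Let $P$ be a path in $\mathcal{R}^*$ achieving this maximum, with endpoints $s,t$, and suppose for contradiction that some node $x\in W$ is \emph{not} on $P$. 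The goal is to produce, via $O(n^3)$ further swaps (so staying admissible), a collection whose best path strictly beats $P$, contradicting extremality.

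The engine for this will be a ``merging'' step built from $(\star)$ and subpath swaps. Since $x\notin P$, consider the set $S$ consisting of $x$ together with three suitably chosen nodes of $W$ on $P$ — for instance $s$, $t$, and one interior witness node $y\in W\cap P$; actually, since $|S|=4$ is exactly the budget, we take $S=\{x,s,t,y\}$ for an appropriate interior $y\in W$ on $P$ (if $|W\cap P|\le 2$ a degenerate version of the argument applies). By $(\star)$ there is a path $Q\in\mathcal{R}^*$ containing all of $x,s,t,y$. Now $P$ and $Q$ share $s,t$ (and $y$), so $P$ and $Q$ intersect; I want to swap subpaths of $P$ and $Q$ so that the resulting path through $s$ and $t$ picks up $x$ in addition to all the $W$-nodes it already had. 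The key subclaim to nail down is: \emph{whenever two shortest paths $A$ and $B$ share two nodes $u$ and $v$ (in the same order on both — automatic in the undirected setting up to reversal), one can, by a single subpath swap of $A[u,v]$ and $B[u,v]$, route all of $(W\cap A)\cup(W\cap B)$ that lies ``between $u$ and $v$ on either path'' onto a single one of the two resulting paths}, provided one first arranges, by earlier swaps, that $A$ and $B$ only cross a bounded number of times between $u$ and $v$. This is where the undirected structure is essential: because subpaths of shortest paths between common endpoints have equal length, any alignment of $W$-nodes onto one path is ``free'' in cost.

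I would carry this out as follows. Step 1: set up the extremal admissible collection $\mathcal{R}^*$ and its witness path $P$; assume $x\in W\setminus P$. Step 2: invoke $(\star)$ on $\{x,s,t,y\}$ to get $Q\ni x,s,t,y$. Step 3 (the crucial combinatorial lemma): using $O(n)$ — certainly $O(n^3)$ — subpath swaps among $P$, $Q$, and possibly one or two other paths, transform the collection so that some path $P'$ contains $W\cap P$ \emph{plus} the node $x$; the idea is to walk along $P$ and $Q$ simultaneously, and at each maximal common segment decide which of the two incident private detours carries $W$-nodes, swapping so that the cumulative path always inherits the $W$-rich side, and ensuring $x$'s detour on $Q$ gets grafted in. One must check that the resulting $P'$ is still a shortest path with the same endpoints (Observation~\ref{obs:subpath-swap}) and that the total swap count stays within the $O(n^3)$ budget so $(\star)$ is preserved throughout. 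Step 4: conclude $|W\cap P'|\ge |W\cap P|+1$, contradicting the extremality of $\mathcal{R}^*$; hence no such $x$ exists and $P$ passes through all of $W$.

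The main obstacle is Step 3 — making the ``merge two paths and keep all the $W$-nodes'' operation actually go through. The subtlety is that $P$ and $Q$ may weave in and out of each other many times, and naively swapping at every crossing could either fail to be a legal swap (the shared nodes must appear in the same relative order, which in undirected graphs can fail once we've reoriented a path), or could shuffle $W$-nodes off of $P'$ as fast as we add them, or could blow past the $O(n^3)$ swap budget. I expect the right fix is to first prove a ``normalization'' claim: by $O(n^2)$ swaps one can make any two paths in the collection \emph{laminar between any two common nodes} (they share a single contiguous segment or are otherwise nested), after which the merge is a single clean swap; the constant $4$ in $(\star)$ (versus $3$) is presumably exactly what's needed to carry $x$, both endpoints, and one interior anchor through this normalization simultaneously. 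Verifying that normalization preserves shortest-path status and all endpoint constraints, and that the whole procedure composes within $O(n^3)$ swaps, is the technical heart of the proof.
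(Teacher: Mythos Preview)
Your high-level strategy (extremal choice of $P$, invoke $(\star)$ on four nodes to get an auxiliary path $Q$, then swap) is the right shape, but Step~3 as stated has a genuine gap, and the ``normalization'' fix you sketch is both unnecessary and unjustified. The missing idea is how to choose the anchors and the order in which to insert the remaining $W$-nodes so that a \emph{single} swap per step suffices and provably loses no previously collected $W$-node.

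Concretely: the paper does not take the endpoints of the current best path (which need not lie in $W$ at all, a minor issue), nor does it work with a generic extremal $P$. Instead it fixes once and for all $a,b\in W$ to be a pair \emph{maximizing} $\dist(a,b)$, and orders the remaining nodes $v_1,\dots,v_{|W|-2}$ of $W$ by increasing distance from $a$. The maximality of $\dist(a,b)$ forces every other $W$-node to lie \emph{between} $a$ and $b$ on any shortest path containing all three, and the distance-from-$a$ ordering is consistent across all shortest paths through $a$. With this in place the induction is clean: if $P$ already contains $a,v_1,\dots,v_\ell,b$ in that order, apply $(\star)$ to $\{a,b,v_\ell,v_{\ell+1}\}$ to obtain $Q$; on $Q$ these four nodes must appear as $a,v_\ell,v_{\ell+1},b$, so swapping $P[v_\ell,b]$ with $Q[v_\ell,b]$ leaves $a,v_1,\dots,v_\ell$ untouched on $P$ and grafts in $v_{\ell+1}$. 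No weaving, no laminarity argument, one swap per node.

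Your worry that ``$P$ and $Q$ may weave in and out of each other many times'' and that a swap ``could shuffle $W$-nodes off of $P'$ as fast as we add them'' is exactly what the farthest-pair choice and distance ordering eliminate: they guarantee the nodes already collected all sit in $P[a,v_\ell]$, which the swap never touches. Without this structural choice, your Step~3 is genuinely stuck; there is no reason the swap $P[a,b]\leftrightarrow Q[a,b]$ (or any bounded sequence of swaps you describe) retains the $W$-nodes interior to $P$, and the normalization claim you float would itself need a proof at least as hard as the lemma.
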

We note that the quantity $O(n^3)$ is an unimportant technicality used in the proof of the following observation, and is chosen as a loose upper bound on the number of subpath swaps we will perform.

\begin{observation}
\label{obs:obv}
\cref{thm:undirgen} generalizes both \cref{thm:undirnew} and \cref{lm:heavy-path}.
\end{observation}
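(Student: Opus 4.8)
The plan is to derive both \cref{thm:undirnew} and \cref{lm:heavy-path} as two instances of \cref{thm:undirgen}, in each case choosing the collection $\mathcal{R}$ and the node set $W$ and then checking the two hypotheses of \cref{thm:undirgen}: that $\mathcal{R}$ satisfies property $(\star)$ relative to $W$, and that $(\star)$ is preserved by every sequence of $O(n^3)$ subpath swaps. Granting these, \cref{thm:undirgen} yields a sequence of subpath swaps after which some shortest path $P$ in the collection contains all of $W$, which is precisely the desired conclusion --- with $W = V(G)$ and $(s,t)$ the endpoints of $P$ for \cref{thm:undirnew}, and with $W$ the set of max-congestion nodes and $P$ a solution path for \cref{lm:heavy-path}.

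For \cref{lm:heavy-path} I would take, for a \textsf{$(k,c)$-SPC} instance with $k>4d$ that has a solution (if it has none, the lemma holds trivially), $\mathcal{R}$ to be that solution $\{P_1,\dots,P_k\}$ and $W$ to be its set of max-congestion nodes. Property $(\star)$ is then immediate from \cref{lem:super-key} applied with $N=4$, since a set of $4$ nodes of $W$ is a set of $4$ max-congestion nodes, which lies on a common solution path. Robustness is also essentially free: by \cref{obs:subpath-swap}, applied once per swap along the sequence, any subpath swaps leave $\mathcal{R}$ a solution of the same instance with the same set $W$ of max-congestion nodes, so \cref{lem:super-key} re-establishes $(\star)$ for the new collection; moreover the collection that \cref{thm:undirgen} finally outputs is again a valid solution by \cref{obs:subpath-swap}, with one of its solution paths now passing through all of $W$.

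For \cref{thm:undirnew} I would set $W=V(G)$, fix for each $4$-subset $S\subseteq V(G)$ a shortest path $P_S$ through $S$ (one exists by hypothesis), and let $\mathcal{R}$ consist of $N$ copies of $P_S$ for every such $S$, where $N$ is chosen larger than twice the $O(n^3)$ bound on the number of subpath swaps performed inside the proof of \cref{thm:undirgen}. Property $(\star)$ is clear, since $P_S\in\mathcal{R}$ contains $S$. For robustness, one first checks that a subpath swap keeps the collection a collection of shortest paths: in the notation of \cref{defn:swap}, the replacement path $(P\setminus P[a,b])\cup Q[a,b]$ is a walk between the endpoints of $P$ whose length equals their distance (since $P[a,b]$ and $Q[a,b]$ are both shortest $ab$-paths and $a$ precedes $b$ on $P$ and on $Q$), and with positive weights such a walk cannot revisit a vertex, so it is itself a shortest path; likewise for the replacement of $Q$. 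Second, since each swap alters at most two paths, a sequence of at most $O(n^3)$ swaps alters fewer than $N$ paths of $\mathcal{R}$, so for every $4$-set $S$ at least one copy of $P_S$ survives untouched and $(\star)$ persists. Thus \cref{thm:undirgen} applies, and the path $P$ it produces is a shortest path through all of $V(G)$; taking $s,t$ to be its endpoints gives the pair required by \cref{thm:undirnew}.

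The one step that requires a genuine idea is verifying robustness in the \cref{thm:undirnew} instance: a single subpath swap could in principle destroy the unique shortest path certifying $(\star)$ for some $4$-set, so I would guard against this by padding $\mathcal{R}$ with more duplicate copies of each $P_S$ than the (polynomially bounded) number of subpath swaps in \cref{thm:undirgen} can ever disturb. No such padding is needed for the \textsf{$(k,c)$-SPC} instance, because \cref{lem:super-key} regenerates $(\star)$ from scratch after any swaps. Everything else is direct bookkeeping with \cref{defn:swap}, \cref{obs:subpath-swap}, and \cref{lem:super-key}.
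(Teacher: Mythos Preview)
Your proposal is correct and follows essentially the same approach as the paper. The only cosmetic difference is in the \cref{thm:undirnew} case: the paper pads $\mathcal{R}$ with $Cn^3$ copies of \emph{every} shortest path in the graph, whereas you fix one witness $P_S$ per $4$-set and pad with $N$ copies of each; both rely on the identical counting argument that $O(n^3)$ swaps touch only $O(n^3)$ paths, so enough untouched copies survive to keep $(\star)$.
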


\begin{proof}
We begin by showing that \cref{thm:undirgen} generalizes \cref{thm:undirnew}. In the statement of \cref{thm:undirgen}, let $W$ be the entire vertex set. To see that the precondition of \cref{thm:undirnew} meets the precondition of \cref{thm:undirgen}, let $\mathcal{R}$ be a multiset of shortest paths containing the (perhaps exponentially large) set of all shortest paths in the graph each with multiplicity $Cn^3$ for a sufficiently large constant $C$. Then, after any sequence of $O(n^3)$ subpath swaps, $\mathcal{R}$ still contains at least one copy of every shortest path in the graph, so property $(\star)$ still holds. Finally, it is clear that the conclusion of \cref{thm:undirgen} implies the conclusion of \cref{thm:undirnew}.

Now we show that \cref{thm:undirgen} generalizes \cref{lm:heavy-path}. 
In the statement of \cref{thm:undirgen}, let $\mathcal{R}$ be a solution (if one exists) to the \textsf{$(k,c)$-SPC} instance  from the statement of \cref{lm:heavy-path} and let $W$ be the set of max-congestion nodes. 
To see that the precondition of \cref{lm:heavy-path} meets the precondition of \cref{thm:undirgen}, notice that by \cref{obs:subpath-swap}, after any sequence of subpaths swaps $\mathcal{R}$ is still a solution to the \textsf{$(k,c)$-SPC} instance with the same set of max-congestion nodes. Thus, by \cref{lem:super-key} property $(\star)$ holds after any sequence of subpath swaps applied to $\mathcal{R}$. Finally, it is clear that the conclusion of \cref{thm:undirgen} implies the conclusion of \cref{lm:heavy-path}.
\end{proof}

We also prove an analogue of \cref{lem:dag} and \cref{thm:undirgen} for directed graphs:

\begin{restatable}[General Directed Result]{lemma}{thmdirgen}\label{thm:dirgen}
Given a directed graph and subset $W$ of nodes, let $\mathcal{R}$ be a collection of shortest paths with the following property: 

\begin{center}
\hfill 
for every set $S\subseteq W$ of 11 nodes, some path in $\mathcal{R}$ contains every node in $S$.\hfill $(\dagger)$\end{center}
Further suppose that applying any sequence of $O(n^3)$ subpath swaps to $\mathcal{R}$ yields a collection of paths that still has property $(\dagger)$. Then starting from $\mathcal{R}$, there exists a sequence of subpath swaps that results in a collection of shortest paths in which either: \begin{enumerate} \item some path $P$ passes through all nodes in $W$, or \item the union of two paths $P,P'$ contain all nodes in $W$, and the first and last nodes on $P$ from $W$ are the same as the last and first nodes on $P'$ from $W$, respectively. \end{enumerate}
\end{restatable}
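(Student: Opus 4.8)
The plan is to split the argument into a \emph{structural} part, which identifies the target configuration, and a \emph{swapping} part, which realizes it inside $\mathcal{R}$. Throughout I would work with the \emph{$W$-traces} of the paths in $\mathcal{R}$: for a path $Q$, record only the subsequence of its vertices lying in $W$. A first easy observation is that $(\dagger)$ makes the relation ``$u\to v$ is a shortest-path ordering'' (\cref{def:sp}) total on $W$: any two nodes of $W$ can be padded to an $11$-set, which by $(\dagger)$ lies in order on a common simple shortest path of $\mathcal{R}$, exhibiting $u\to v$ or $v\to u$. (If $|W|\le 11$, then $(\dagger)$ already puts all of $W$ on one shortest path and, after the swapping step below, we are in Case~1; so assume $|W|$ is large.) Unlike the DAG setting of \cref{thm:dagnew}, this relation need not be antisymmetric: pairs admitting \emph{both} $u\to v$ and $v\to u$ are exactly the ``reversing'' phenomenon quantified by \cref{lem:rev}, and are what can force us into Case~2.

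For the structural part, I would fix a shortest-path ordering $\pi = (a = u_1 \to u_2 \to \dots \to u_r = b)$ realized on a subset $U = \{u_1,\dots,u_r\}\subseteq W$ of \emph{maximum size}; by \cref{def:sp} there is a single shortest $a$--$b$ path through all of $U$ in this order. If $U = W$ we are headed for Case~1. Otherwise, for each $c\in W\setminus U$ I would invoke $(\dagger)$ on $c$ together with $a$, $b$, and a constant-size ``anchor'' subset of $U$ --- this is where the constant $11$ is spent --- to obtain a shortest path $R_c$ through all of them; maximality of $U$ forbids $c$ from extending $\pi$ (the anchors are chosen precisely to certify this), and feeding $R_c$ into the trichotomy of \cref{lem:rev} then pins $c$ onto a shortest $b$--$a$ path in a location that is consistent as $c$ varies. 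Collating over all such $c$ yields a single shortest-path ordering $b = z_1 \to z_2 \to \dots \to z_q = a$ with $\{z_1,\dots,z_q\}\supseteq (W\setminus U)\cup\{a,b\}$. Then $U$ and $\{z_j\}$ together cover $W$, the forward ordering runs from $a$ to $b$, and the backward ordering runs from $b$ to $a$, matching the endpoint condition of Case~2 (and if $W$ is in fact covered by one of the two orderings alone, we get Case~1 instead).

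For the swapping part, I would realize the two orderings as genuine members of the collection using $O(n^3)$ subpath swaps, which by hypothesis preserves $(\dagger)$ and, by the reasoning behind \cref{obs:subpath-swap}, keeps every member a shortest path. Build the forward path $P$ incrementally: start from any member of $\mathcal{R}$ through $a$ (one exists by $(\dagger)$) and successively reroute it through $u_2, u_3, \dots, u_r$, each step swapping with a member of $\mathcal{R}$ that contains a shortest $u_j$--$u_{j+1}$ subpath (obtained from $(\dagger)$ applied to $\{u_j,u_{j+1}\}$ padded with dummy $W$-nodes, with the direction forced because $\to$ restricted to $U$ is the linear order $u_1 \to \dots \to u_r$); aligning a common swap vertex may cost a bounded number of intermediate swaps per step, but the total stays $O(n^3)$. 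This produces a member of $\mathcal{R}$ that is a shortest $a$--$b$ path through all of $U$, so its $W$-trace begins at $a$ and ends at $b$; the same construction on the $z_j$ gives a member that is a shortest $b$--$a$ path through all of $W\setminus U$. Together these witness Case~2, and the degenerate variants witness Case~1.

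The hard part will be the structural step --- specifically, showing that the positions forced on the various $c\in W\setminus U$ are mutually consistent, so that they assemble into a \emph{single} backward shortest-path ordering rather than several pairwise-incomparable ones. This is where \cref{lem:rev} and the precise value of the constant are used; as the paper already remarks after stating \cref{lem:rev}, it is this collation argument, not \cref{lem:rev} itself, that is the technical heart. The swapping step is comparatively routine once the structure is in hand, the only care being to keep the swap count within the allotted $O(n^3)$.
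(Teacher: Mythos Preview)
There is a genuine gap in both halves of your plan.

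\textbf{Structural step.} Your key inference is: ``maximality of $U$ forbids $c$ from extending $\pi$, and feeding $R_c$ into \cref{lem:rev} then pins $c$ onto a shortest $b$--$a$ path.'' But \cref{lem:rev} does not say this. In the \emph{non-reversing} case, a node $w$ in Segment~2 has exactly the orderings $a\to w\to b$, $b\to a\to w$, and $w\to b\to a$; the ordering $b\to w\to a$ is \emph{explicitly absent}. So $c$ can fail to extend $U$ and still lie on no $b$--$a$ shortest path whatsoever --- your dichotomy is false. This is not a technicality: the paper's proof spends most of its effort (its Cases~4 and~5) on exactly this non-reversing situation, where one must instead find a carefully chosen set of up to nine ``critical'' anchor nodes on $P$ and argue (via a separate ``Critical Node Lemma'' plus a delicate cyclic argument) that some path through them can be rerouted to absorb $c$. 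The backward ordering $b\to\cdots\to a$ that you aim for is only available in what the paper calls Case~1, under the strong hypothesis that \emph{every} path in $\mathcal R$ through $\{a,b,v\}\cup\text{trap}(v)$ has order $b\to v\to a$ for \emph{all} $v\in W\setminus W'$; the bulk of the work is handling the failure of that hypothesis, which your plan does not address.

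\textbf{Swapping step.} You write that the direction $u_j\to u_{j+1}$ is ``forced because $\to$ restricted to $U$ is the linear order $u_1\to\cdots\to u_r$.'' This conflates the existence of one shortest path realizing that order with a consistency property for \emph{all} paths. In directed graphs a pair $u_j,u_{j+1}$ can appear in either order on different shortest paths in $\mathcal R$ (this is exactly ``Challenge~2'' in the paper's overview), so the member of $\mathcal R$ you pull from $(\dagger)$ need not have $u_j$ before $u_{j+1}$, and the swap is not available. The paper handles this by never assuming a global order: it works with whichever path $P\in\mathcal R$ currently hits the most $W$-nodes, classifies $P[a,b]$ as trivial / reversing / non-reversing, and for each case designs a bespoke set of critical nodes whose positions on \emph{any} containing path are constrained enough to permit swaps. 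Your plan would need an analogue of that case analysis, not a one-line appeal to the order on $U$.
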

\noindent \cref{thm:dirgen} generalizes \cref{thm:dirnew}, in exactly the same way as  \cref{thm:undirgen} generalizes \cref{thm:undirnew} for undirected graphs.
In the same way that \cref{thm:undirgen} generalizes \cref{lm:heavy-path} for undirected graphs, \cref{thm:dirgen} implies the following lemma:

\begin{lemma}\label{thm:relax}
If $k > 11d$, then any instance of \textsf{$(k,c)$-SPC} on directed graphs either has no solution, or has a solution where the union of some two solution paths $P_i$ and $P_j$ contains all max-congestion nodes.
\end{lemma}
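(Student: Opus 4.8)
The plan is to deduce Lemma~\ref{thm:relax} from \cref{thm:dirgen} by the exact analogue of the reduction used to obtain \cref{lm:heavy-path} from \cref{thm:undirgen}. First I would dispense with the trivial alternative: if the given \textsf{$(k,c)$-SPC} instance has no solution, the statement holds immediately. So assume a solution $\mathcal{R}$ exists, and let $W \subseteq V(G)$ be its set of max-congestion nodes. I would then apply \cref{thm:dirgen} with this choice of $\mathcal{R}$ and $W$.

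The substantive step is verifying that the hypotheses of \cref{thm:dirgen} are met. Being a \textsf{$(k,c)$-SPC} solution, $\mathcal{R}$ is a collection of shortest paths, as required. For property $(\dagger)$ and, crucially, its persistence under subpath swaps: by \cref{obs:subpath-swap}, applying any sequence of subpath swaps to $\mathcal{R}$ again produces a \textsf{$(k,c)$-SPC} solution with the \emph{same} max-congestion set $W$; and since $k > 11d$, \cref{lem:super-key} with $N = 11$ then guarantees that every $11$-element subset $S \subseteq W$ lies on a common path of the swapped collection. Hence $(\dagger)$ holds for $\mathcal{R}$ and survives any sequence of subpath swaps — in particular any $O(n^3)$ of them — so the robustness clause of \cref{thm:dirgen} is satisfied.

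Now I would invoke \cref{thm:dirgen}: it yields a sequence of subpath swaps after which the resulting collection of shortest paths satisfies its conclusion (1) or (2). By \cref{obs:subpath-swap} this collection is still a \textsf{$(k,c)$-SPC} solution with max-congestion set $W$, so its paths are exactly the solution paths $P_1, \dots, P_k$ (up to reindexing). Conclusion (2) says the union of two of these solution paths contains all of $W$, which is precisely Lemma~\ref{thm:relax}; conclusion (1) gives a single solution path through all of $W$, which implies the statement a fortiori (take $P_j$ to be any other solution path, or note the claim is vacuous when there is no congestion). I do not anticipate a genuine obstacle here: the argument is a black-box assembly of \cref{thm:dirgen}, \cref{obs:subpath-swap}, and \cref{lem:super-key}. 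The one point requiring a moment's attention is confirming that property $(\dagger)$ is \emph{robust} under subpath swaps — i.e.\ matching the somewhat unusual "$O(n^3)$ swaps" hypothesis of \cref{thm:dirgen} — and this is exactly what \cref{obs:subpath-swap} combined with \cref{lem:super-key} supplies.
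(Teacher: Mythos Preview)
Your proposal is correct and follows essentially the same route the paper indicates: the paper states that \cref{thm:relax} follows from \cref{thm:dirgen} ``in the same way that \cref{thm:undirgen} generalizes \cref{lm:heavy-path},'' and the derivation you spell out---using \cref{obs:subpath-swap} to preserve the max-congestion set under swaps and \cref{lem:super-key} with $N=11$ to verify property~$(\dagger)$ robustly---is exactly that argument (cf.\ the proof of \cref{obs:obv}).
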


Although \cref{thm:relax} does not immediately lead to an algorithm for \textsf{$(k,c)$-SPC} on directed graphs, it specifies some structure which may be useful for future work on \textsf{$(k,c)$-SPC} and related problems.

One might wonder whether \cref{thm:relax} can be modified to have only one solution path $P_i$ that contains all max-congestion nodes, like for DAGs (\cref{lem:dag}) and undirected graphs (\cref{lm:heavy-path}), since this would imply interesting algorithms for \textsf{$(k,c)$-SPC} on directed graphs. 
Unfortunately, the answer to this question turns out to be no.
Similar to the counterexample against extending \cref{thm:undirnew} to directed graphs, we present a counterexample in \cref{app:counter} which rules out replacing two solution paths with a single solution path in \cref{thm:relax}.

\subsection{Organization} 
As suggested in the discussion from \cref{sec:nuance}, the goal of this work is to establish our general undirected result (\cref{thm:undirgen}) and our general directed result (\cref{thm:dirgen}). 
To that end, in \cref{sec:tech} we provide a technical overview of our results, then in \cref{sec:un} we prove \cref{thm:undirgen}, and in \cref{sec:dir} we prove \cref{thm:dirgen}.

In \cref{app:rel} we provide background on related work on disjoint paths problems. 
In \cref{app:counter} we provide a counterexample against extending \cref{lm:heavy-path} to directed graphs. 
Finally, in \cref{app:cor} we prove our corollary for undirected graphs (\cref{thm:undir}) given \cref{lm:heavy-path}, which is identical to Amiri and Wargalla's argument for DAGs, except with different constants.

\section{Technical Overview}\label{sec:tech}

\subsection{Prior Work on DAGs}
\label{subsec:prior}

As a starting point for our proofs for general graphs, we use Amiri and Wargalla's proof of \cref{thm:dagnew} for DAGs \cite{SDP-congestion-DAG}.
We describe their proof differently than they do for the sake of comparison to our work. 
We actually describe their proof of the following lemma (which they do not explicitly state), which is analogous to the statements of our general results (\cref{thm:undirgen,thm:dirgen}).

\begin{restatable}[\cite{SDP-congestion-DAG}]{lemma}{thmdaggen}\label{thm:daggen}
Given a DAG and subset $W$ of nodes, let $\mathcal{R}$ be a collection of shortest paths with the following property:

\begin{center}
\hfill
    for every set $S\subseteq W$ of 3 nodes, some path in $\mathcal{R}$ contains every node in $S$. \hfill $(\ast)$
\end{center}
Further suppose that applying any sequence of $O(n^3)$ subpath swaps to $\mathcal{R}$ yields a collection of paths that still has property $(\ast)$. Then starting from $\mathcal{R}$, there exists a sequence of subpath swaps that results in a collection of shortest paths in which some path $P$ passes through all nodes in $W$.
\end{restatable}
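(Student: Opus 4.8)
The plan is to induct on the number of distinct ``endpoint behaviors'' among the paths in $\mathcal{R}$, using subpath swaps to repeatedly merge paths until a single path covers all of $W$. Throughout, exploit the fact that $G$ is a DAG, so it has a topological order; every shortest path (indeed every path) visits its vertices in increasing topological order, and in particular any two vertices $u,v \in W$ have a well-defined ``$G$-order'' (either $u$ precedes $v$ or $v$ precedes $u$ topologically), and this is the order in which they appear on \emph{any} path containing both. Consequently property $(\ast)$ already gives us a total order on $W$: for $u,v,w \in W$, the triple lies on a common shortest path, so in particular every pair does, and these pairwise orders are mutually consistent (they all agree with the topological order). Call this total order $\prec$, and relabel $W = \{w_1 \prec w_2 \prec \dots \prec w_m\}$.

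The key structural step is: for each consecutive pair $w_i, w_{i+1}$, property $(\ast)$ applied to any triple $\{w_i, w_{i+1}, x\}$ (or just to the pair) guarantees a shortest path in $\mathcal{R}$ through both $w_i$ and $w_{i+1}$; by the distance-additivity of shortest paths, the subpath between them is a shortest $w_i w_{i+1}$-path of the correct length $\dist(w_i,w_{i+1})$. Now I would build the desired covering path greedily from left to right. Maintain a path $P \in \mathcal{R}$ that currently contains a prefix $w_1, \dots, w_j$ of $W$; pick a path $Q \in \mathcal{R}$ through $w_j$ and $w_{j+1}$. Both $P$ and $Q$ pass through $w_j$. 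In a DAG, $w_j$ is a common node and $w_j$ precedes (topologically, hence on both paths) everything after it, so we may perform the subpath swap of $P[w_j, t]$ and $Q[w_j, t']$ — more carefully, swap along a shared node: after $w_j$ the two paths might diverge, but we swap the entire tails $P[w_j, \cdot]$ with $Q[w_j,\cdot]$, which is a valid subpath swap (it is the degenerate case $a = w_j$, $b$ = common endpoint; if the endpoints differ this is still fine for a plain collection of shortest paths — note Lemma~\ref{thm:daggen} only requires the outputs to be shortest paths, not to preserve endpoints). After the swap, $P$ contains $w_1, \dots, w_j$ (unchanged prefix up to $w_j$) followed by the tail of $Q$, hence now contains $w_{j+1}$ as well. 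Repeating $m-1$ times yields a path through all of $W$. The total number of swaps is $O(m) = O(n)$, comfortably within the $O(n^3)$ budget, so property $(\ast)$ is preserved throughout and each $Q$ we need still exists at the moment we need it.

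The main obstacle to watch is the bookkeeping around subpath swaps when paths share more than one node or share a node but diverge immediately after: I must phrase each step as a literal instance of Definition~\ref{defn:swap} (choosing $a, b \in P \cap Q$ with $a$ before $b$ in both). The clean way is: since $P$ and $Q$ both contain $w_j$, let $a = w_j$ and let $b$ be the last common node of the tails $P[w_j,\cdot]$ and $Q[w_j,\cdot]$ (which exists and is $a$ itself in the worst case — but if $b = a$ the swap is trivial and we instead just need $w_{j+1}$ to already be reachable, which forces a small case analysis); alternatively, argue that we can always take $Q$ to be a shortest path through the \emph{whole} prefix $w_1,\dots,w_{j+1}$ by invoking $(\ast)$ on the triple $\{w_1, w_j, w_{j+1}\}$ and noting, via distance-additivity and the DAG order, that the relevant subpath of that guaranteed path is a shortest $w_1 w_{j+1}$-path through $w_j$ — then no swap is needed at all for that step and we simply advance to a longer guaranteed path. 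Iterating this ``extend the guaranteed prefix path'' argument, property $(\ast)$ on triples $\{w_1, w_j, w_{j+1}\}$ for $j = 2, \dots, m-1$ pushes the guaranteed-prefix length up by one each time, and the final path through $\{w_1, w_{m-1}, w_m\}$, combined with the inductively maintained path through $\{w_1, \dots, w_{m-1}\}$, is merged by a single swap at node $w_1$. I expect the genuinely delicate point to be making this induction airtight — i.e., ensuring at each stage that the path witnessing the triple $\{w_1, w_j, w_{j+1}\}$ can be spliced with the current prefix-covering path via a legal swap (they share $w_1$, and in the DAG everything after $w_1$ is topologically later, so $a=w_1$ works) — but the DAG's topological order makes all the ordering consistencies automatic, which is exactly why the constant here is $3$ rather than the larger constants needed in the undirected and directed generalizations.
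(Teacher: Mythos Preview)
Your overall strategy matches the paper's: order $W$ topologically and inductively grow a path covering an initial segment, invoking property $(\ast)$ on a well-chosen triple at each step. The paper carries this out with the triple $\{v_\ell, v_{\ell+1}, b\}$, where $b$ is the topologically \emph{last} node of $W$. The point of always including $b$ is that the current path $P$ (which inductively contains $a, v_1, \dots, v_\ell, b$) and the new path $P'$ (through $v_\ell, v_{\ell+1}, b$) are then guaranteed to share \emph{two} nodes, $v_\ell$ and $b$, so swapping $P[v_\ell, b]$ with $P'[v_\ell, b]$ is a literal instance of Definition~\ref{defn:swap} and immediately places $v_{\ell+1}$ onto $P$.

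Your proposal circles around this but never pins down the two swap endpoints. The tail-swap at the single node $w_j$ is not a subpath swap as defined (you notice this), and your fallbacks---``no swap is needed at all for that step,'' ``merged by a single swap at node $w_1$,'' ``$a = w_1$ works''---don't resolve it: each still names only one shared node. The fix is the observation you are one step from making: when you invoke $(\ast)$ on $\{w_1, w_j, w_{j+1}\}$, the resulting path $Q$ shares \emph{both} $w_1$ and $w_j$ with your prefix-covering path $P$, so swapping $P[w_1, w_j]$ into $Q$ gives $Q$ all of $w_1,\dots,w_j$ together with $w_{j+1}$. Equivalently and a bit cleaner, keep both extremal nodes $a = w_1$ and $b = w_m$ on $P$ from the outset (base case: apply $(\ast)$ to get a path through $a$ and $b$), and anchor every subsequent triple at $b$, exactly as the paper does; then the growing path is always $P$ itself and no re-pointing is needed.
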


The proof of \cref{thm:daggen} is as follows. Let $a$ and $b$ be the first and last nodes in $W$ (respectively) in a topological ordering of the DAG. Let  $v_1,\dots v_{|W|-2}$ be the remaining nodes in $W$ in order topologically.  
For ease of notation, we consider $\mathcal{R}$ to be changing over time via subpath swaps, and we will let $\mathcal{R}$ denote the current value of $\mathcal{R}$. 

The argument is inductive. For the base case, by property $(\ast)$ there is a path in $ \mathcal{R}$ that contains $a$ and $b$. Suppose inductively that a path $P\in \mathcal{R}$ currently contains $a$, $b$, and $v_1,\dots,v_{\ell}$ for some $\ell$. We would like to perform a subpath swap to augment $P$ by adding $v_{\ell+1}$ to $P$. To do so, we consider a path $P'\in \mathcal{R}$ that contains $v_{\ell}$, $v_{\ell+1}$, and $b$, where such a path exists by property $(\ast)$. Importantly, because the graph is a DAG, $v_{\ell}$, $v_{\ell+1}$, and $b$ appear in that order on both $P$ and $P'$. 
Thus, according to the definition of a subpath swap (\cref{defn:swap}) we can swap $P[v_{\ell},b]$  with $P'[v_{\ell},b]$, as shown in \Cref{fig:B}.
As a result of this subpath swap, $P$ now contains $v_{\ell+1}$ in addition to all of the nodes in $W$ that $P$ originally contained.
By induction, this completes the proof.

\begin{figure}[ht]
\centering
\def\svgwidth{.65\linewidth}
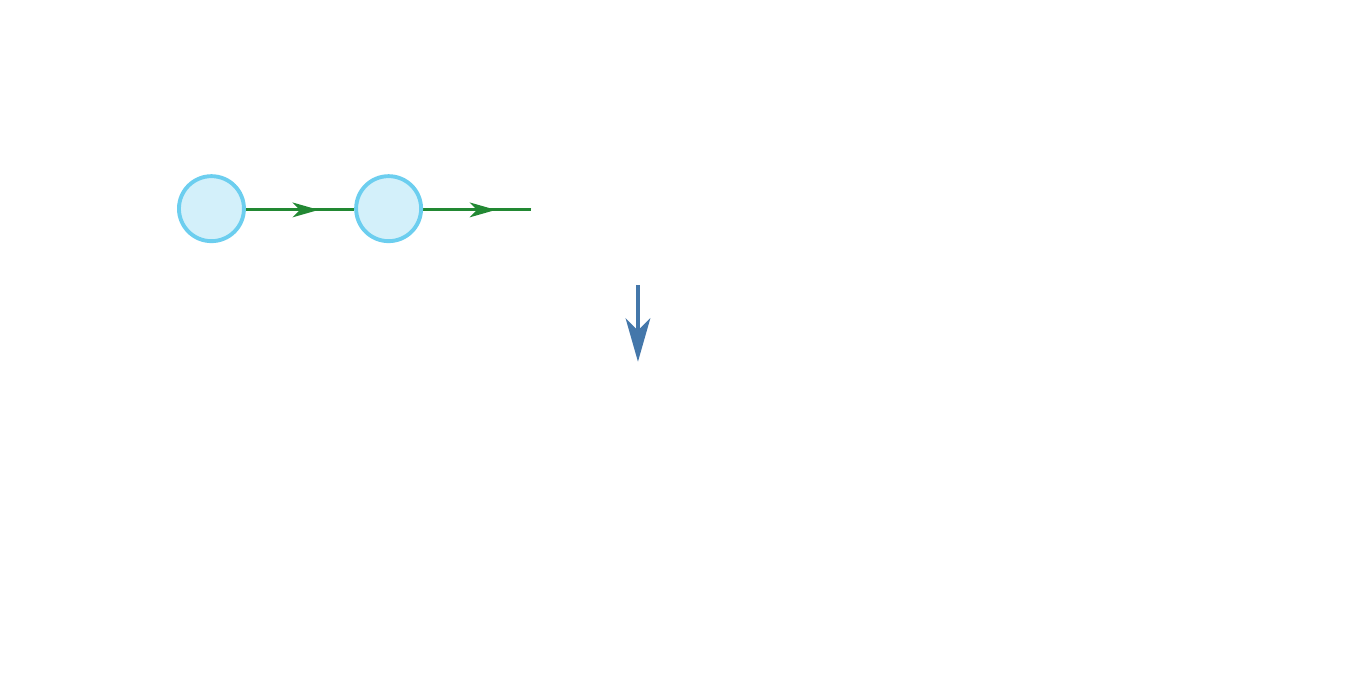
\caption{In a DAG, the topological ordering of the nodes allows us to perform a sequence of subpath swaps, each adding the next node in $W$ in order to a path $P$.}
\label{fig:B}
\end{figure}

\subsection{Undirected Graphs}

Recall that our goal for undirected graphs is to prove the following theorem:

\thmundirgen*

The essential property that enables the subpath swapping in the above argument for DAGs is the fact that for any triple of nodes in a DAG, there is \emph{only one} possible order this triple can appear on any path. 
This property is not exactly true for general undirected graphs, but we observe that a similar ``consistent ordering'' property is true: if there is a \emph{shortest} path containing nodes $u,v,w$ in that order, then any shortest path containing these nodes, has them in that order (or in the reverse order $w,v,u$, but since the graph is undirected we can without loss of generality assume they are in the order $u,v,w$). This property is true simply because $\dist(u,w)$ is larger than both $\dist(u,v)$ and $\dist(v,w)$, so $v$ must appear between $u$ and $w$ on any shortest path. 

To perform subpath swapping on undirected graphs, we need an initial pair of nodes in $W$ such that the rest of the nodes in $W$ will be inserted between this initial pair (in the DAG algorithm, this initial pair $a,b$ was the first and last nodes in $W$ in the topological order).
For undirected graphs, our initial pair is the pair $a,b$ of nodes in $W$ whose distance is maximum. We order the rest of the nodes in $W$ by their distance from $a$, to form $v_1,\dots,v_{|W|-2}$. 

Now, our consistent ordering property from above implies the following: for any shortest path $P$ containing $a$, all nodes in $W\cap P$ are ordered as a subsequence of $a,v_1,v_2,\dots,v_m,b$ on $P$. As a result, we can perform the same type of iterative subpath swapping argument as the DAG algorithm.

\subsection{Roundtrip Paths in Directed Graphs}

The situation for directed graphs is significantly more involved than the previous cases. 
There are several challenges that are present for directed graphs that were not present for either undirected graphs or DAGs. 
These difficulties stem from the fact that the \emph{interactions} between shortest paths is much more complicated in directed graphs than in DAGs or undirected graphs. 

We first outline these challenges, and then provide an overview of how we address them. 
Our techniques for addressing these issues exemplify that despite the possibly complex arrangement of shortest paths in directed graphs, there still exists an underlying structure to extract. 
We hope that our methods might illuminate some structural properties of shortest paths in directed graphs in a way that could apply to other directed-shortest-path problems.

Recall that our goal is to prove the following theorem:

\thmdirgen*

\subsubsection*{Challenges for directed graphs}

\paragraph*{Challenge 1: No ``extremal'' nodes.} In the proofs for DAGs and undirected graphs, to begin building the path $P$ containing all nodes in $W$, we chose two initial extremal nodes $a,b\in W$ and added the rest of the nodes of $W$ between $a$ and $b$. These nodes $a,b$ were straightforward to choose because there was only one pair of nodes in $W$ that could possibly appear first and last on a path containing all nodes in $W$ 
(for DAGs $a$ and $b$ were the first and last  nodes in a topological ordering of $W$, and for undirected graphs $a$ and $b$ were the pair of nodes in $W$ with largest distance). 

For directed graphs however, it is entirely unclear how to pick these two initial extremal nodes. For instance, choosing the pair of nodes $a,b\in W$ with largest directed distance $\dist(a,b)$ does not work because there could be a shortest path $Q$ containing $a$ and $b$, where $a$ and $b$ are not the first and last nodes in $W$ on $Q$ (in particular, if $b$ appears before $a$ on the shortest path). 

To circumvent this issue for directed graphs, we avoid selecting a pair of initial nodes at all. 
Without initial nodes as an anchor, we cannot build our path $P$ in order from beginning to end as we did for DAGs and undirected graphs. Instead, our goal is to ensure the following weaker property: 
as we iteratively transform the overall collection of shortest paths $\mathcal{R}$, the set of nodes in $W$ on the path in $\mathcal{R}$ containing the most nodes in $W$ grows over time. 
That is, the path of $\mathcal{R}$ containing the most nodes in $W$ might currently be $P$, but at the previous iteration, the path of $\mathcal{R}$ with the most nodes in $W$ might have been a different path $P'$. 
In this case, our weaker property ensures that $P$ currently contains a \emph{superset} of the nodes in $W$ that $P'$ contained at the previous iteration.

To perform a single iteration with this guarantee, we may need to significantly reconfigure \emph{many} different paths of $\mathcal{R}$ via many subpath swaps. 
As a result, our path building procedure is much more intricate than the procedures employed for DAGs and undirected graphs. 

\paragraph*{Challenge 2: No consistent ordering of nodes on shortest paths.} 
In the proof for DAGs and undirected graphs, we were able to perform subpath swaps due to the following crucial property: consider an arbitrary set of nodes $v_1,\dots, v_{\ell}$ (for any $\ell$) in a DAG or an undirected graph. If there is a shortest path containing the nodes $v_1,\dots, v_\ell$ in that order, then \emph{every} shortest path containing these nodes has them in that same order.

For directed graphs, however, this property is not even close to being true. In fact, given that the nodes $v_1,\dots, v_\ell$ appear in that order on some shortest path, there are \emph{exponentially many} other possible orderings of these nodes on other shortest paths. For instance, when $\ell=4$, given that the nodes $v_1,v_2,v_3,v_4$ appear in that order on some shortest path, there are eight possible orderings of these nodes on shortest paths, as depicted in \Cref{fig:C} (note that despite the large number of possible orderings, not all orderings are possible; for instance the ordering $v_1,v_2,v_4,v_3$ is not possible, as this would imply that $\dist(v_1,v_4)<\dist(v_1,v_3)$, which contradicts our assumption that some shortest path contains $v_1,v_2,v_3,v_4$ in that order).

\begin{figure}[ht]
\centering
\def\svgwidth{\linewidth}
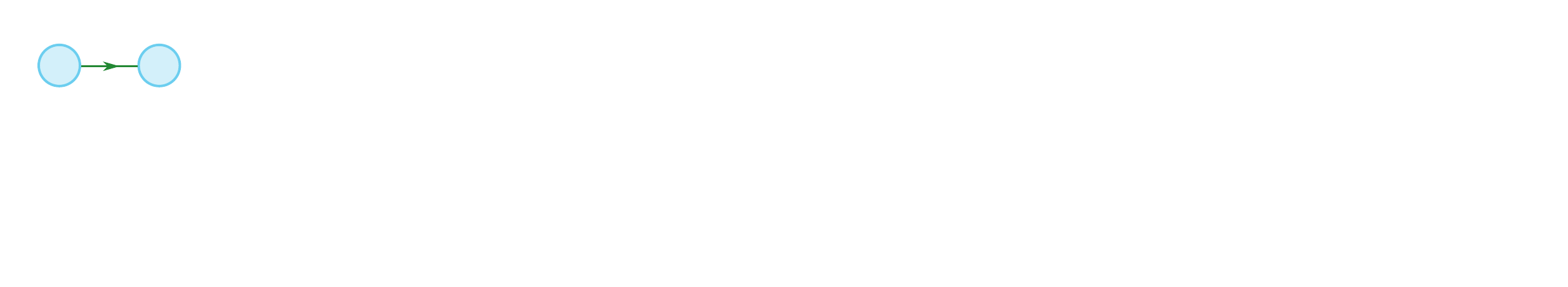
\caption{If nodes $v_1, v_2, v_3, v_4$ appear in that order on some shortest path, they can still appear on different shortest paths in up to seven other distinct possible orders.}
\label{fig:C}
\end{figure}

Because directed graphs have no consistent ordering of nodes on shortest paths, it becomes much more difficult to perform subpath swaps like those in the algorithms for DAGs and undirected graphs.

To address this challenge, we provide a structural analysis of the ways in which shortest paths in directed graphs can interact with each other. First, as introduced in \cref{sec:struct}, we categorize shortest paths into two main types, \emph{reversing} paths and \emph{non-reversing} paths, and we prove the Reversing/Non-Reversing Path Lemma (\cref{lem:rev}).
Roughly speaking, this lemma is useful because it helps us construct sets of nodes that exhibit some sort of consistent ordering property. This, in turn, enables us to perform sequences of subpath swaps.
Defining these consistently ordered sets of nodes and the corresponding subpath swaps is the most involved part of the proof, and works differently for each of the two path types.

\subsubsection*{Proof structure} 
Our proof is structured as follows. Initially, we define $P$ to be the path in $\mathcal{R}$ that contains the most nodes in $W$ (breaking ties arbitrarily).
Then we proceed with the following two cases:
\begin{itemize}
    \item (Case 1) We first check whether, roughly speaking, $P$ is contained in a \emph{cycle} that contains \emph{all} nodes in $W$. In this case, we can use a sequence of subpath swaps to build a second path $P'$ so that the union of $P$ and $P'$ contains all nodes in $W$ and have the ``roundtrip'' structure specified in the theorem statement, in which case we are done.
    \item (Case 2) If we are not in Case 1, our goal is to augment some path in $\mathcal{R}$ so that the set of nodes of $W$ on the path in $\mathcal{R}$ with the most nodes in $W$ grows (the goal introduced in the discussion of Challenge 1).
\end{itemize}  
After going through these cases, if we are not done we redefine $P$ as the path in $\mathcal{R}$ containing the most nodes in $W$ and repeatedly apply the appropriate case, until we are done.
Most technical details of our proof are in the path augmentation procedure of Case 2.
We elaborate on the main ideas for this procedure next.

\subsubsection*{Handling Case 2: Path Augmentation} Recall that our goal is the following: Let $P$ be the path in $\mathcal{R}$ containing the most nodes in $W$, and let $W'=W\cap P$. Fix a node $u\in W\setminus W'$. We wish to perform subpath swaps to yield a path $P'$ that contains every node in $W'\cup \{u\}$.

We begin with a few warm-up cases to motivate our general approach. 
\paragraph*{Warm-up cases}

Let $a$ and $b$ be the first and last nodes on $P$ that are in $W$, respectively. By property $(\dagger)$ there exists a path in $\mathcal{R}$ containing $a$, $b$, and $u$. We will suppose in all of the subsequent examples, that for \emph{all} paths in $\mathcal{R}$ containing the nodes $a$, $b$, and $u$, the node $u$ is always the last node in $W$ on the path; this is not a conceptually important assumption and it makes the description simpler. 

We claim that if there is a path $P'\in \mathcal{R}$ containing $a$, $b$, and $u$, such that $a$ appears before $b$, then we are done. 
Indeed, as depicted in \Cref{fig:D}, in this case we can simply swap the subpath $P[a,b]$ with $P'[a,b]$,
to form a new solution where $P'$ contains $W'\cup \{u\}$. 

\begin{figure}[ht]
\centering

\begin{subfigure}{0.43\textwidth}
\centering
\def\svgwidth{\linewidth}
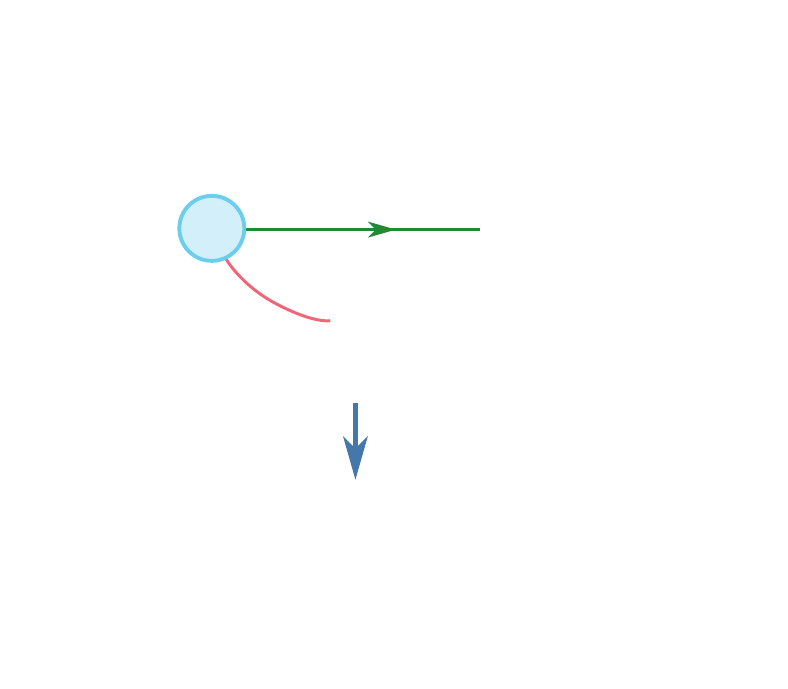
\caption{If $a$ appears before $b$ on $P'$, a subpath swap lets $P'$ pass through $u$ together with all of the nodes in $W'$.}
\label{fig:D}
\end{subfigure}
\hfill
\begin{subfigure}{0.54\textwidth}
\centering
\def\svgwidth{\linewidth}
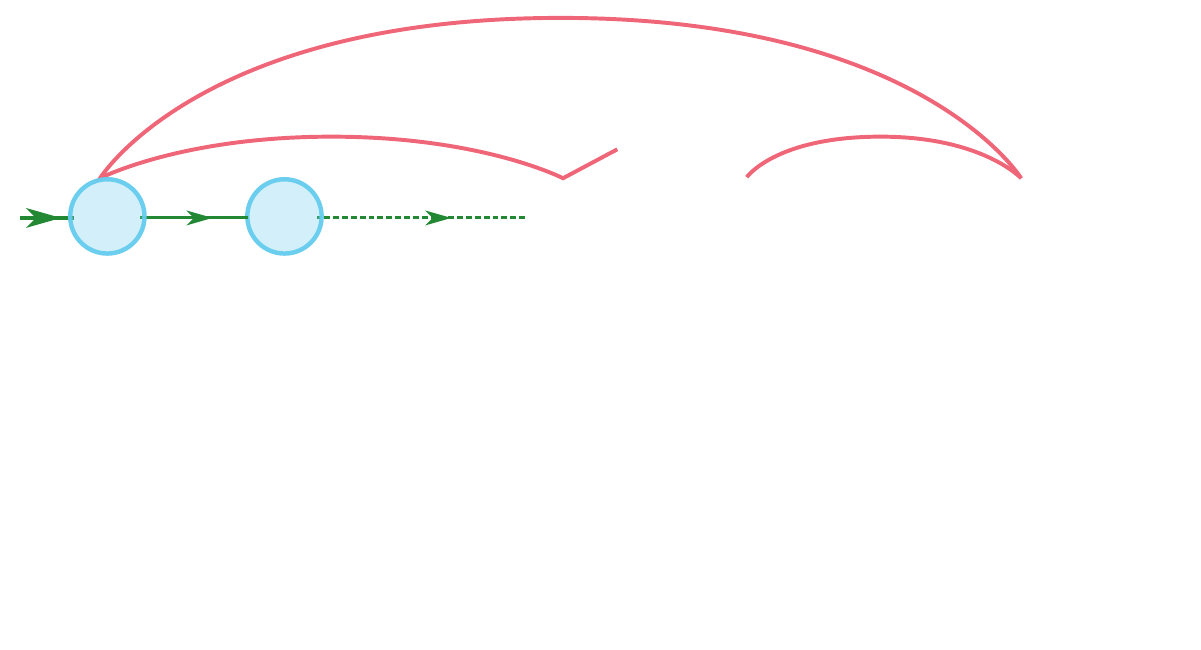
\caption{More generally, when $P'$ passes through $v_{\ell+1}, b, a, v_\ell$ in that order, we can perform two subpath swaps to get $P'$ to pass through the rest of $W'$. 
Here, the dotted segments indicate portions of the paths which pass through the nodes of $W'$ that are not labeled in the figure.}
\label{fig:E}
\end{subfigure}
\caption{The two warm-up cases.}
\end{figure}

Now we will slightly generalize this warm-up case. Let $a,v_1,v_2,\dots,b$ be the nodes in $W'$ in the order they appear on $P$. Consider $v_\ell$ and $v_{\ell+1}$ for any $\ell$. By property $(\dagger)$ there exists a path in $\mathcal{R}$ containing $a$, $b$, $u$, $v_\ell$, and $v_{\ell+1}$. We know from the previous warm-up case that if there is a pathin $\mathcal{R}$ containing these nodes such that $a$ appears before $b$, then we are done. We also claim that if there exists a path $P'\in\mathcal{R}$ containing these nodes such that $v_{\ell+1}, b,a, v_\ell$ appear in that order, then we are done. 
This is because as depicted in \Cref{fig:E}, we can swap the subpath $P[a,v_{\ell}]$ with $P'[a,v_{\ell}]$, and swap the subpath $P[v_{\ell+1},b]$ with $P'[v_{\ell+1},b]$.
Now, $P'$ contains $W'\cup \{u\}$.

\paragraph*{General Approach: ``Critical nodes''}

We will motivate our general approach in the context of the above warm-up cases. In the second warm-up case, we considered 5 nodes  ($a$, $b$, $u$, $v_\ell$, and $v_{\ell+1}$) in $W'$, and argued that if there is a path $P'\in \mathcal{R}$ containing these 5 nodes in one of several ``good'' orders, then we are done because we can perform subpath swaps to reroute $P'$ through all of $W'\cup \{u\}$. 
Thus, our goal is to choose these 5 (or in general, at most $11$) nodes carefully, to guarantee that they indeed fall into a ``good'' order on some path in $\mathcal{R}$. We refer to these at most $11$ nodes as \emph{critical nodes}:

\begin{definition}[Critical Nodes (Informal)] 
Given a path $P\in \mathcal{R}$, a set of nodes $T\subseteq W$ of size $|T|\leq 11$ are \emph{critical nodes of $P$} if there exists a path $P'\in\mathcal{R}$ containing the nodes of $T$ in an order that allows us to perform subpath swaps to reroute some path in $\mathcal{R}$ through all of $W'\cup \{u\}$ (where $W'=W\cap P$).
\end{definition}

\emph{Our general approach is to show that any path $P\in \mathcal{R}$ contains a set of at most 11 critical nodes.} We remind the reader that this section only concerns Case 2, so our goal is to show that $P$ contains a set of critical nodes only if $P$ does not already fall into Case 1. After showing that $P$ contains a set of critical nodes, we are done, because performing subpath swaps to yield a path containing $W'\cup \{u\}$ was our stated goal. 

It is not at all clear a priori that any $P$ should contain a set of critical nodes. Indeed, the critical nodes need to be chosen very carefully. Specifically, they need to be chosen based on the structure of the path $P$. 

This is where the definitions from \cref{sec:struct} come into play.  We categorize $P$ based on whether it is \emph{reversing} or \emph{non-reversing}, (or trivial). Then we construct the critical nodes of $P$ using a different procedure specialized for which type of path $P$ is.

If $P$ is a reversing path, then we argue that a valid choice of critical nodes are $a$, $b$, and $u$, along with the two nodes at the two boundaries between the segments defined in \cref{lem:rev} (and a few other nodes for technical reasons). To make this argument, which we will not detail here, we construct an involved series of subpath swaps to reroute some path through all of $W'\cup \{u\}$.

On the other hand, if $P$ is a non-reversing path, the critical nodes are less straightforward to define than if $P$ is a reversing path. Indeed, defining the critical nodes for non-reversing paths is the most conceptually difficult part of our proof. The high-level reason for this difficulty  is the fact that the nodes of Segment 2 of a non-reversing path are quite unconstrained because they admit 3 possible shortest-path orderings instead of only 2. 

To be more concrete, suppose every node on $P$ falls into Segment 2. For simplicity, suppose we were to choose critical nodes $a,b,w$, where $a$ and $b$ are the endpoints of $P$ and also happen to be in $W'$, and $w$ is any other node in $W'$. Consider the path $P'\in\mathcal{R}$ containing $a$, $b$, and $w$, which exists by property $(\dagger)$. By the definition of Segment 2, there are 3 possible orderings of $a,b,w$ on $P'$ ($a\rightarrow w \rightarrow b$, or $b\rightarrow a \rightarrow w$, or $w\rightarrow b\rightarrow a$). Note that these 3 orderings are cyclic shifts of one another. Suppose, as an illustrative example, that $P'$ has the ordering $b\rightarrow a \rightarrow w$. We would like to reroute $P'$ through all of $W'\cup \{u\}$, but we have a problem. For any node $s\in W'$ that appears between $a$ and $w$ on $P$, we can reroute $P'$ through $s$ by swapping the subpath $P[a,w]$ with $P'[a,w]$; however, for a node $s\in W'$ that appears between $w$ and $b$ on $P$, we cannot do this because there is no path segment from $w$ to $b$ on $P'$, since $b$ appears before $w$ on $P'$. Thus, we cannot reroute $P'$ through $s$. That is, no matter how we choose the critical nodes, if we do not impose extra structural constraints, we can always identify a segment of the path that we cannot reroute $P'$ through. Overcoming this issue is our main technical challenge, and we defer it to the full proof.

\section{Undirected Graphs}\label{sec:un}

As detailed in \cref{sec:pre}, our goal for undirected graphs is to prove the following lemma:

\thmundirgen*

\begin{proof}

	For ease of notation, we consider $\mathcal{R}$ to be changing over time via subpath swaps, and we will let $\mathcal{R}$ denote the current value of $\mathcal{R}$. 
	
	Among all nodes in $W$, select two nodes $a,b$ which maximize $\dist(a,b)$. 
	Let $v_1, v_2, \dots, v_{|W|-2}$ be the remaining nodes of $W$, ordered by increasing distance from the node $a$. (For the sake of this definition we break ties arbitrarily, but in reality, there cannot be any ties.)
	
	We now inductively show that for any positive integer $\ell$, there is a sequence of subpath swaps that yields a path that
	passes through the nodes $a, v_1, \dots, v_\ell$, and $b$, in that order.
	
		For the base case where $\ell = 1$, $\mathcal{R}$ contains a path passing through $a, b$, and $v_1$ by property $(\star)$.
	In this path, $v_1$ must be between $a$ and $b$ since $a$ and $b$ have maximum distance among all pairs of nodes in $W$.
(Since the graph is undirected we can define the ``beginning'' of the path to be $a$ to get a path through $a, v_1$, and $b$, in that order.)

	For the inductive step, let $\ell$ be a fixed positive integer with $1 < \ell < |W|$.
	By the inductive hypothesis, we may assume there is a path $P\in\mathcal{R}$ that passes through $a, v_1, \dots, v_\ell$, and $b$ in that order, for some fixed positive integer $\ell$.
	Our goal is to show that we can perform some subpath swaps to $\mathcal{R}$, to produce a path which passes through $a, v_1, \dots, v_{\ell+1}$, and $b$ in that order. 

	By property $(\star)$, some path $Q\in\mathcal{R}$ must pass through $a, b, v_\ell$, and $v_{\ell+1}$.
	
		By our choice of $a$ and $b$, we know that $v_{\ell}$ and $v_{\ell+1}$ must be between $a$ and $b$ on the path $Q$. 
	Since we ordered the $v_i$ nodes by distance from $a$, it follows that $P$ contains $a, v_\ell, v_{\ell+1}$, and $b$ in that order.
		Now we swap the subpath $P[v_\ell,b]$ with $Q[v_\ell,b]$.
	After the swap, path $P$ passes through the node  $v_{\ell+1}$ in addition to $a, v_1, \dots, v_\ell$, and $b$, in the desired order. 
    This completes the induction, and proves the result.
\end{proof}

\section{Directed Graphs}
\label{sec:dir}

As detailed in \cref{sec:pre}, our goal for directed graphs is to prove the following lemma:

\thmdirgen*

The section is organized as follows. In \cref{subsec:struct}, we
prove the Reversing/Non-Reversing Path Lemma introduced in the introduction (\cref{lem:rev}).
In \cref{subsec:swap} we prove a ``critical node'' lemma that specifies conditions under which we can perform subpath swaps on directed graphs. 
  In \cref{subsec:lem}, we complete the proof of \cref{thm:dirgen}.

  \subsection{Structural Lemma for Reversing/Non-Reversing Paths}\label{subsec:struct}

In this section, we will prove \cref{lem:rev}, which was introduced in the introduction. We actually prove something slightly more general than \cref{lem:rev}, so the lemma statement and some of the definitions are slightly different here than in the introduction.

First, we recall the definition of a shortest-path ordering (which is the same as before):

\defsp*

The following simple lemma about shortest-path orderings will be useful.

\begin{lemma}\label{lem:sp}
For any triple $\set{u,v,w}$ of nodes, if $u\rightarrow v\rightarrow w$ is a shortest-path ordering, then neither $v\rightarrow u\rightarrow w$ nor $u\rightarrow w\rightarrow v$ is a shortest-path ordering.
\end{lemma}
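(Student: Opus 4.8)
The statement is that if $u \to v \to w$ is a shortest-path ordering, then neither $v \to u \to w$ nor $u \to w \to v$ is one. The plan is to argue purely from the triangle-inequality consequences of shortest-path orderings, using the fact that a shortest-path ordering $x \to y \to z$ forces $\dist(x,z) = \dist(x,y) + \dist(y,z)$, hence in particular $\dist(x,z) > \dist(x,y)$ and $\dist(x,z) > \dist(y,z)$ since all edge weights are positive (so every sub-walk of a shortest path has strictly smaller length than the whole, as long as it omits at least one edge, which it does here because the three nodes are distinct).

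\textbf{Step 1.} From the hypothesis that $u \to v \to w$ is a shortest-path ordering, extract $\dist(u,w) = \dist(u,v) + \dist(v,w)$, and therefore $\dist(u,w) > \dist(u,v)$ and $\dist(u,w) > \dist(v,w)$, using positivity of edge weights and distinctness of $u,v,w$.

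\textbf{Step 2 (ruling out $v \to u \to w$).} If $v \to u \to w$ were a shortest-path ordering, then $\dist(v,w) = \dist(v,u) + \dist(u,w) > \dist(u,w)$, contradicting $\dist(u,w) > \dist(v,w)$ from Step 1.

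\textbf{Step 3 (ruling out $u \to w \to v$).} Symmetrically, if $u \to w \to v$ were a shortest-path ordering, then $\dist(u,v) = \dist(u,w) + \dist(w,v) > \dist(u,w)$, contradicting $\dist(u,w) > \dist(u,v)$ from Step 1.

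There is essentially no obstacle here: the lemma is a direct consequence of the additivity of distances along shortest paths together with strict positivity of edge weights, and the only mild point to be careful about is that the three nodes are genuinely distinct (so the omitted sub-walk is nonempty and the inequalities are strict); this is implicit in treating $\{u,v,w\}$ as a triple and in the notion of a shortest-path ordering through all three nodes. I would write the three steps above as a short paragraph rather than a displayed computation.
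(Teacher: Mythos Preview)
Your proposal is correct and follows essentially the same approach as the paper's proof: both derive the strict inequalities $\dist(u,v)<\dist(u,w)$ and $\dist(v,w)<\dist(u,w)$ from the hypothesis, and then obtain a contradiction from each of the two forbidden orderings by the reverse inequality. You are simply a bit more explicit about the additivity equation and the role of positive edge weights in making the inequalities strict.
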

\begin{proof}
    Suppose $u\rightarrow v\rightarrow w$ is a shortest-path ordering. This means that $\dist(u,v)<\dist(u,w)$ and $\dist(v,w)<\dist(u,w)$. If $v\rightarrow u\rightarrow w$ were a shortest-path ordering, this would imply that $\dist(v,w)>\dist(u,w)$, a contradiction. If $u\rightarrow w\rightarrow v$ were a shortest-path ordering, this would imply that $\dist(u,v)>\dist(u,w)$, a contradiction.
\end{proof}

Now we define trivial paths, reversing paths, and non-reversing paths in a slightly more general way than in the introduction.

\begin{definition}[Trivial Path] Given a directed graph, nodes $a,b$, and a shortest path $P$ from $a$ to $b$, $P$ is a \emph{trivial path} with respect to a node $w$ on $P$ if $a\rightarrow w\rightarrow b$ is the \emph{only} shortest-path ordering of $a,w,b$.
\end{definition}

\begin{definition}[Reversing path] Suppose we are given a directed graph, nodes $\set{a,b}$, a shortest path $P$ from $a$ to $b$, and a subset $W'$ of the nodes on $P$ such that $P$ is not a trivial path with respect to any node in $W'$. Then, $P$ is a \emph{reversing path} with respect to $W'$ if $P$ contains at least one node $w\in W'$ such that $w$ falls on some $b$-to-$a$ shortest path. Otherwise, $P$ is a \emph{non-reversing} path with respect to $S$.
\end{definition}

We call the subset of nodes $W'$ intentionally, to hint that when we apply this definition to prove \cref{thm:dirgen}, $W'$ will be set to $P\cap W$.

Now, we state our structural lemma for reversing and non-reversing paths, which is slightly more general than \cref{lem:rev}:

\begin{restatable}[Reversing/Non-Reversing Path Lemma]{lemma}{lemcombo}
\label{lem:combo}
Suppose we are given a directed graph, nodes $\set{a,b}$, a shortest path $P$ from $a$ to $b$, and a subset $W'$ of the nodes on $P$ such that $P$ is not a trivial path with respect to any node in $W'$. Then, $P$ can be partitioned into three contiguous segments with the following properties (where $a$ is defined to be in Segment 1, $b$ is defined to be in Segment 3, and Segment 2 could be empty).\\[3mm]
\underline{Segment 1:} nodes $w$ such that if $w\in W'$ then the shortest-path orderings of $a,w,b$ are precisely \\\centerline{$a\rightarrow w \rightarrow b$, and $b\rightarrow a \rightarrow w$.}\\[3mm]
\underline{Segment 2:} nodes $w$ such that if $w\in W'$ then the shortest-path orderings of $a,w,b$ are precisely \\
\[
\begin{cases}
\text{$a\rightarrow w \rightarrow b$, and $b\rightarrow w \rightarrow a$} & \text{if $P$ is a reversing path wrt $W'$}\\
\text{$a\rightarrow w \rightarrow b$, and $b\rightarrow a \rightarrow w$, and $w\rightarrow b\rightarrow a$} & \text{if $P$ is a non-reversing path wrt $W'$.}
\end{cases}\]\\[3mm]
\noindent\underline{Segment 3:} nodes $w$ such that if $w\in W'$ then the shortest-path orderings of $a,w,b$ are precisely \\\centerline{$a\rightarrow w \rightarrow b$, and $w\rightarrow b\rightarrow a$.}
\end{restatable}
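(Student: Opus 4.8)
The plan is to classify each node of $W'$ according to which shortest-path orderings of its triple with $a$ and $b$ are realized, and then show that these classes occur in contiguous runs along $P$, with the reversing/non-reversing dichotomy controlling how the runs fit together. First some reductions: set $R=\dist(b,a)$, and note that if $R=\infty$ then for $w\in W'$ no ordering other than $a\to w\to b$ can occur, so $P$ is trivial with respect to every node of $W'$ and $W'=\emptyset$; hence assume $R$ is finite. Place $a$ into Segment $1$ and $b$ into Segment $3$ by fiat (they are the first and last nodes of $P$, so this never conflicts with contiguity), and write $W''=W'\setminus\{a,b\}$; every $w\in W''$ lies strictly between $a$ and $b$ on $P$. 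For such $w$, the ordering $a\to w\to b$ is a shortest-path ordering, so by \cref{lem:sp} the orderings $w\to a\to b$ and $a\to b\to w$ are not; since $P$ is not trivial with respect to $w$, at least one of $b\to a\to w$, $w\to b\to a$, $b\to w\to a$ is a shortest-path ordering. Let $\mathcal A,\mathcal B,\mathcal C\subseteq W''$ be the sets of $w$ for which, respectively, $b\to a\to w$ (equivalently $\dist(b,w)=R+\dist(a,w)$), $w\to b\to a$ (equivalently $\dist(w,a)=\dist(w,b)+R$), $b\to w\to a$ (equivalently $\dist(b,w)+\dist(w,a)=R$) is a shortest-path ordering. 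Then $W''=\mathcal A\cup\mathcal B\cup\mathcal C$, and applying \cref{lem:sp} to $b\to w\to a$ shows $\mathcal C$ is disjoint from both $\mathcal A$ and $\mathcal B$. Note $P$ is reversing with respect to $W'$ precisely when $\mathcal C\neq\emptyset$.

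Next I would prove the key structural fact: ordering the nodes of $W''$ as they appear along $P$, the set $\mathcal A$ is a prefix and $\mathcal B$ is a suffix. For $\mathcal A$: if $w'\in\mathcal A$ and $w\in W''$ precedes $w'$ on $P$, then since $P$ is a shortest path we have $\dist(a,w')=\dist(a,w)+\dist(w,w')$, hence $\dist(b,w')=R+\dist(a,w)+\dist(w,w')$; comparing with the triangle inequality $\dist(b,w')\le\dist(b,w)+\dist(w,w')$ gives $\dist(b,w)\ge R+\dist(a,w)$, and the reverse inequality is the triangle inequality through $a$, so $\dist(b,w)=R+\dist(a,w)$ and $w\in\mathcal A$. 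The claim for $\mathcal B$ follows by the mirror argument, swapping the roles of $a$ and $b$ and using $\dist(w,b)=\dist(w,w')+\dist(w',b)$.

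Given that $\mathcal A$ is a prefix and $\mathcal B$ a suffix of $W''$, the set $G:=W''\setminus(\mathcal A\cup\mathcal B)$ is a (possibly empty) contiguous middle block; since each of its nodes is non-trivial and lies in neither $\mathcal A$ nor $\mathcal B$, it lies in $\mathcal C$, and conversely $\mathcal C\subseteq W''\setminus(\mathcal A\cup\mathcal B)=G$, so $\mathcal C=G$. Thus there are exactly two regimes. If $G=\emptyset$ then $\mathcal C=\emptyset$, $P$ is non-reversing, and I take the $W''$-parts of Segments $1,2,3$ to be $\mathcal A\setminus\mathcal B$, $\mathcal A\cap\mathcal B$, $\mathcal B\setminus\mathcal A$ (the middle block may be empty). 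If $G\neq\emptyset$ then $\mathcal C\neq\emptyset$, $P$ is reversing, and the nonempty middle block separates the prefix $\mathcal A$ from the suffix $\mathcal B$, so $\mathcal A\cap\mathcal B=\emptyset$; here I take the $W''$-parts to be $\mathcal A$, $\mathcal C$, $\mathcal B$. In both regimes the three parts are contiguous along $P$ in the stated order. It then remains to check that a node in each part has exactly the set of shortest-path orderings prescribed by the lemma, which is immediate from \cref{lem:sp} and the disjointness of $\mathcal C$ from $\mathcal A$ and $\mathcal B$: a node of $\mathcal A\setminus\mathcal B$ has orderings exactly $\{a\to w\to b,\ b\to a\to w\}$, a node of $\mathcal B\setminus\mathcal A$ exactly $\{a\to w\to b,\ w\to b\to a\}$, a node of $\mathcal A\cap\mathcal B$ exactly $\{a\to w\to b,\ b\to a\to w,\ w\to b\to a\}$, and a node of $\mathcal C$ exactly $\{a\to w\to b,\ b\to w\to a\}$. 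Finally I extend the three $W''$-parts to a contiguous partition of all of $P$ by assigning each remaining node of $P$ to a segment so that each segment stays an interval, with $a$ in Segment $1$ and $b$ in Segment $3$; this is always consistent because $a$ precedes and $b$ follows every node of $W''$ on $P$.

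I expect the main obstacle to be everything in the previous two paragraphs: establishing the prefix/suffix property of $\mathcal A$ and $\mathcal B$, and recognizing that the reversing/non-reversing distinction is not an extra case to manage but is \emph{forced} by whether $\mathcal A$ and $\mathcal B$ overlap/touch versus leave a gap — the gap, when present, being exactly the set $\mathcal C$ of $W'$-nodes lying on a $b$-to-$a$ shortest path. The one delicate point is that the three-ordering Segment-$2$ type (nodes in $\mathcal A\cap\mathcal B$) genuinely cannot coexist with any $\mathcal C$-node, which is precisely what the gap argument delivers; the rest is triangle-inequality bookkeeping, modulo the cosmetic nuisance of the degenerate triples that would arise if $a$ or $b$ belonged to $W'$, which is why I peel $a$ and $b$ off at the outset.
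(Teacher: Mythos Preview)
Your proposal is correct and follows essentially the same approach as the paper's proof: both classify nodes of $W'$ by which of the three nontrivial orderings $b\to a\to w$, $w\to b\to a$, $b\to w\to a$ hold, use the triangle inequality (equivalently, concatenation of shortest-path orderings) to show the first class forms a prefix and the second a suffix along $P$, and then read off the three-segment structure. Your framing via the sets $\mathcal A,\mathcal B,\mathcal C$ and the gap $G=W''\setminus(\mathcal A\cup\mathcal B)$ packages the reversing/non-reversing dichotomy slightly more cleanly than the paper's boundary-node argument (defining $w_1,w_2$ and then separately ruling out coexistence of the two Segment-2 types), but the mathematical content is the same.
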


\begin{proof}
We first note that for any node $w$ on $P$, $a\rightarrow w \rightarrow b$ is a shortest-path ordering, since $P$ is a shortest path from $a$ to $b$. 

Given an arbitrary node $w\in W'$, we will enumerate the possible shortest-path orderings for $a,w,b$ besides $a\rightarrow w \rightarrow b$. First, $a\rightarrow w \rightarrow b$ is the only shortest-path ordering of $a,w,b$ that starts with $a$, simply because $\dist(a,w) < \dist(a,b)$. 
Similarly, $a\rightarrow w \rightarrow b$ is the only shortest-path ordering of $a,w,b$ that ends with $b$, simply because $\dist(w,b) < \dist(a,b)$. 
The remaining possible orderings are (1) $b\rightarrow a \rightarrow w$, (2) $w\rightarrow b\rightarrow a$, and (3) $b \rightarrow w\rightarrow a$.

By \cref{lem:sp}, if $w$ admits ordering 3, then $w$ cannot admit either of the first two orderings.

Let $w_1$ be the last node on $P$ that admits the ordering 1 ($b\rightarrow a \rightarrow w_1$) and \emph{not} ordering 2 ($w_1\rightarrow b \rightarrow a$), if such a node exists. Define Segment 1 as the prefix of $P$ up to an including $w_1$. We claim that if $w$ is in Segment 1 then $w$ also admits ordering 1 and \emph{not} ordering 2. Since $P$ is a shortest path, $a\rightarrow w\rightarrow w_1\rightarrow b$ is a shortest-path ordering. Combining this with $b\rightarrow a \rightarrow w_1$, we have that $b\rightarrow a \rightarrow w \rightarrow w_1$ is a shortest-path ordering. Thus, $b\rightarrow a \rightarrow w$ is a shortest-path ordering, so $w$ admits ordering 1, as desired. Now for the contrapositive of the second part of our claim, we will show that if $w$ admits ordering 2 ($w\rightarrow b \rightarrow a$), then so does $w_1$. This is by a similar argument: Combining  $w\rightarrow b \rightarrow a$ with $a\rightarrow w\rightarrow w_1\rightarrow b$ yields $w\rightarrow w_1 \rightarrow b \rightarrow a$, and thus $w_1\rightarrow b \rightarrow a$.

Now we claim that if $w$ is in Segment 1 then $a,w,b$ have no other shortest-path orderings besides $a\rightarrow w \rightarrow b$ and $b\rightarrow a \rightarrow w$. This is simply due to the above enumeration of possible shortest-path orderings: We know that $w$ admits ordering 1, we have already ruled out ordering 2, and we have already shown that ordering 3 cannot coexist with ordering 1. Thus, we have shown that the shortest-path orderings of $a,w,b$ are precisely $a\rightarrow w \rightarrow b$, and $b\rightarrow a \rightarrow w$, as desired.

Let $w_2$ be the first node on $P$ that admits ordering  2 ($w_2 \rightarrow b\rightarrow a$) and \emph{not} ordering 1 ($b \rightarrow a \rightarrow w_2$),  if such a node exists. Define Segment 3 as the suffix of $P$ starting at $w_2$. An argument that is exactly symmetric to the above argument for Segment 1 proves that every node $w$ in Segment 3 has precisely the shortest-path orderings $a\rightarrow w \rightarrow b$, and $w\rightarrow b\rightarrow a$, as desired.

Let $w_2$ be the first node on $P$ that admits ordering  2 ($w_2 \rightarrow b\rightarrow a$) and \emph{not} ordering 1 ($b \rightarrow a \rightarrow w_2$),  if such a node exists. Define Segment 3 as the suffix of $P$ starting at $w_2$. An argument that is exactly symmetric to the above argument for Segment 1 proves that every node $w$ in Segment 3 has precisely the shortest-path orderings $a\rightarrow w \rightarrow b$, and $w\rightarrow b\rightarrow a$, as desired.

Now, suppose $w$ is in Segment 2 i.e. $w$ lies between $w_1$ and $w_2$. By the definition of $w_1$, we know that $w$ either admits both orderings 1 and 2, or $w$ does not admit ordering 1. By the definition of $w_2$, we know that $w$ either admits both orderings 1 and 2, or $w$ does not admit ordering 2. Putting these together, $w$ either admits \emph{both} or \emph{neither} of orderings 1 or 2. If $w$ admits both orderings, then we have shown that $w$ cannot admit ordering 3, and thus $w$ does not admit any additional shortest-path orderings besides $a\rightarrow w\rightarrow b$. In this case, $w$ satisfies the non-reversing path case of Segment 2. On the other hand, if $w$ admits neither ordering 1 nor 2, then because $P$ is not a trivial path with respect to $W'$, $w$ must admit ordering 3. In this case, $w$ satisfies the reversing path case of Segment 2. 

It remains to show that $P$ cannot contain both a node $w$ that satisfies the reversing path case of Segment 2, and a node $w'$ that satisfies the non-reversing path case of Segment 2. This is simple to show: If $w'$ appears before $w$ on $P$, then $w'\rightarrow w\rightarrow b$ is a shortest-path ordering. By definition, $w'\rightarrow b\rightarrow a$ is also a shortest-path ordering. Combining these, we get the shortest-path ordering $w'\rightarrow w\rightarrow b \rightarrow a$, which implies the shortest-path ordering $w\rightarrow b \rightarrow a$, a contradiction to the definition of $w$. If $w'$ appears after $w$ on $P$, then a symmetric argument applies. This completes the proof.
\end{proof}

\subsection{Critical Node Lemma}
\label{subsec:swap}

In this section, we establish a lemma that specifies conditions under which we can perform subpath swaps on directed graphs. 
The intuition behind the lemma is motivated by our general approach of identifying ``critical nodes,'' outlined previously in the technical overview.

In the following lemma, the nodes $T$ are the critical nodes.
Although in the technical overview the critical nodes were said to come from a specific path $P$, no such path is mentioned in the statement of the lemma -- instead, such a path will be involved when the lemma is applied.

\begin{lemma}[Critical Node Lemma]
    \label{lem:reroute}
   Assume the premise of \cref{thm:dirgen}. 
   Suppose additionally we are given nodes $a_1, \dots, a_r, b_1, \dots, b_r$, a subset of nodes $W^*\subseteq W$, and a further subset of nodes 
   \[T=\left(\set{a_1, \dots, a_r, b_1, \dots, b_r}\cup U\right)\subseteq W^* \]
   satisfying $|T|\le 9$, such that:
    \begin{enumerate}
        \item for each path $P\in\mathcal{R}$ that contains every node in $T$, for all $i$ the node $a_i$ appears before $b_i$ on $P$, and
        \item for all $w\in W^*$, $w$ is on a shortest path from $a_i$ to $b_i$ for exactly one $i$. 
    \end{enumerate}
    Then, given these conditions, we can perform a sequence of subpath swaps so that in the resulting path collection $\mathcal{S'}$, some path $P'$ passes through all nodes in $W^*$, and all paths in $\mathcal{R}$ that do not contain any nodes in $W^*\setminus T$ appear in $\mathcal{R}'$.
\end{lemma}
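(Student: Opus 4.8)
The plan is to fix one ``working'' path $P'\in\mathcal R$ containing $T$ and reroute it by subpath swaps until it runs through all of $W^*$. Everything rests on two consequences of hypothesis~2. Write $W^*_i=\{w\in W^*: w \text{ lies on some shortest } a_i\text{-}b_i \text{ path}\}$; hypothesis~2 says that $W^*_1,\dots,W^*_r$ partition $W^*$. \emph{Block disjointness:} if $Q$ is any shortest path through $T$, then by hypothesis~1 each $a_i$ precedes $b_i$ on $Q$, so $Q[a_i,b_i]$ is itself a shortest $a_i$-$b_i$ path; hence any node of $W^*$ on $Q[a_i,b_i]$ lies in $W^*_i$, so $Q[a_i,b_i]$ contains no node of $W^*_j$ with $j\neq i$, and (by a short distance argument using positive weights) every node of $W^*_i$ that lies on $Q$ actually lies on $Q[a_i,b_i]$. \emph{Forced order within a block:} if $w$ precedes $w'$ on some shortest $a_i$-$b_i$ path then $\dist(a_i,w)+\dist(w,w')+\dist(w',b_i)=\dist(a_i,b_i)=\dist(a_i,w')+\dist(w',b_i)$, so $\dist(a_i,w)<\dist(a_i,w')$; hence the nodes of $W^*_i$ can occur on a shortest $a_i$-$b_i$ path in only one order, namely by increasing distance from $a_i$, with $a_i$ first and $b_i$ last. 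Combining the two, on every shortest path $Q\supseteq T$ the nodes of $W^*_i$ lying on $Q$ appear, inside $Q[a_i,b_i]$, in this canonical order.

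If $W^*=T$ we are done by property $(\dagger)$, so assume otherwise, fix $w_0\in W^*\setminus T$, and use $(\dagger)$ on $T\cup\{w_0\}$ to choose $P'\in\mathcal R$ containing $T\cup\{w_0\}$. I would then process the blocks $W^*_1,\dots,W^*_r$ in turn. To process $W^*_i$, enumerate it in canonical order as $a_i=x_0,x_1,\dots,x_{k_i},x_{k_i+1}=b_i$, and bring $x_1,\dots,x_{k_i}$ onto $P'$ one at a time, skipping any node already on $P'$. Inductively maintain $\{x_0,\dots,x_\ell\}\subseteq P'$ (true for $\ell=0$ since $a_i\in T\subseteq P'$); to add $x_{\ell+1}$, apply $(\dagger)$ to $T\cup\{x_\ell,x_{\ell+1}\}$ — a set of size at most $|T|+2\le 11$, which is the only place the bound $|T|\le 9$ is used — to get $Q\in\mathcal R$ through all of it, and swap $P'[x_\ell,b_i]$ with $Q[x_\ell,b_i]$. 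This is a legal subpath swap because $x_\ell$ and $b_i$ lie on both $P'$ and $Q$ with $x_\ell$ before $b_i$ (both nodes are in $W^*_i$, hence inside the $a_i$-$b_i$ block of each path). After the swap: $P'$ still contains $x_0,\dots,x_\ell$, since by the canonical order these sit on $P'[a_i,x_\ell]$, which the swap retains; $P'$ now also contains $x_{\ell+1}$, since on $Q$ the canonical order puts $x_\ell$ before $x_{\ell+1}$ before $b_i$; every node of $T$ excised from $P'$ lay strictly between $x_\ell$ and $b_i$, hence inside $P'[a_i,b_i]$, hence lies in $W^*_i$ at a canonical position past $x_\ell$, hence is resupplied by $Q[x_\ell,b_i]$, so $P'\supseteq T$ is preserved; and the whole swap occurs inside $P'[a_i,b_i]$, which by block disjointness meets $W^*$ only in $W^*_i$, so the nodes of blocks $W^*_j$ with $j<i$ that are already on $P'$ are untouched. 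Iterating the insertion exhausts $W^*_i$, and iterating over all $i$ produces a path $P'$ containing $T\cup W^*_1\cup\dots\cup W^*_r=W^*$.

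It remains to check the side condition. Every swap is between $P'$, which from the start contains $w_0\in W^*\setminus T$, and a helper $Q$, which contains the node $x_{\ell+1}$ being inserted — and $x_{\ell+1}\in W^*\setminus T$ precisely because we skip nodes already on $P'$ while $T\subseteq P'$. Hence no path of $\mathcal R$ that avoids $W^*\setminus T$ is ever modified, so all such paths persist into the final collection. The number of swaps is at most $|W^*|\le n=O(n^3)$, so by the robustness hypothesis property $(\dagger)$ remains valid throughout and every query above is justified (padding small query sets up to $11$ nodes). The hard part is the next-to-last point of the previous paragraph: confining each swap to a single block $P'[a_i,b_i]$ and exploiting that hypothesis~2 forces this block to intersect $W^*$ only in $W^*_i$ is exactly what prevents inserting one block's nodes from destroying another's, while the forced canonical order inside $W^*_i$ is what guarantees that any critical or auxiliary node cut away along the way is automatically recovered from the helper path.
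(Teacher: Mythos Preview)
Your argument is correct and differs from the paper's in a clean way. The paper concatenates the blocks into one long list $v_1,v_2,\dots$ and, at each inductive step, picks a helper path through $T\cup\{v_\ell,v_{\ell+1}\}$ and performs up to $r$ swaps to transfer \emph{all} previously-collected nodes from the old working path onto the helper, which then becomes the new working path. You instead keep a single working path $P'$ fixed throughout and do exactly one swap per insertion, confined to the current block $P'[a_i,b_i]$; your observation that the segments $P'[a_1,b_1],\dots,P'[a_r,b_r]$ are pairwise node-disjoint (forced by the ``exactly one $i$'' in hypothesis~2, applied even to the $a_j,b_j$ themselves) is precisely what lets you skip the paper's transfer step and argue that already-processed blocks survive untouched. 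Both approaches use the canonical within-block order for the same purpose, and both spend the same $|T|+2\le 11$ budget on the helper query.

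One point you leave implicit deserves a sentence: hypothesis~1 is stated for the initial $\mathcal R$, but you invoke it on helpers $Q$ drawn from the \emph{evolving} collection and on $P'$ after it has been modified, so you need ``every path in the current collection containing $T$ has each $a_j$ before $b_j$'' as an invariant. It is indeed preserved by your swap --- for $j=i$ the endpoints $a_i,b_i$ bracket the swapped segment, and for $j\neq i$ block disjointness places $a_j,b_j$ entirely outside $[a_i,b_i]$ on both $P'$ and $Q$, so their relative order is untouched --- but the paper checks this explicitly, and your phrase ``any shortest path through $T$'' slightly overstates what hypothesis~1 actually gives you.
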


\begin{proof}
Partition the set $W^*\setminus T$ into $W_1,\dots,W_r$ where $w\in W_i$ if $a_i\rightarrow w\rightarrow b_i$ is a shortest-path ordering. For all $u\in U$, if $u$ is on a shortest path from $a_i$ to $b_i$ for some $i$, then $u$ is also included in $W_i$. By condition 2 of the lemma statement, every node $w\in W^*\setminus T$ is in exactly one set $W_i$, and every node $u\in U$ is in at most one set $W_i$. Now, label all of the nodes in $\cup_i (W_i\cup \set{a_i,b_i})$ by $v_1,v_2,v_3\dots$ according to the following order: $a_1,W_1,b_1,a_2,W_2,b_2,\dots a_r,W_r,b_r$, where the nodes within each $W_i$ are ordered by distance from $a_i$. (For the sake of this definition we break ties arbitrarily, but in reality, there cannot be any ties.)

For ease of notation, we consider $\mathcal{R}$ to be changing over time via subpath swaps, and we will let $\mathcal{R}$ denote the current value of $\mathcal{R}$. 

We proceed by induction. Suppose $\mathcal{R}$ contains a path $P$ containing $\{v_1,\dots,v_\ell\}\cup T$ for some $\ell$. Further suppose assumption 1 of the lemma statement: for each path in $\mathcal{R}$ that contains every node in $T$, for all $i$ the node $a_i$ appears before $b_i$. We will show that we can perform subpath swaps so that the resulting path collection $\mathcal{R}$ has a path that contains $\{v_1,\dots,v_{\ell+1}\}\cup T$, and assumption 1 of the lemma statement still holds. 

If $v_{\ell+1}\in T$, then we are already done, so suppose $v_{\ell+1}\not\in T$. Let $j$ be such that $v_{\ell+1}\in W_j$. By assumption 1 of the lemma statement, $a_i$ appears before $b_i$ on $P$, for all $i$. By \cref{lem:sp}, for all $i$, every node $s\in W_i\cap P$ appears between $a_i$ and $b_i$ on $P$. 

We also claim that every node in $\{v_1,\dots,v_{\ell-1}\}\cap W_j$ appears between $a_j$ and $v_{\ell}$ on $P$. From the previous paragraph, we know that every node in $\{v_1,\dots,v_{\ell}\}\cap W_j$ appears between $a_j$ and $b_j$ on $P$. Along the subpath of $P$ from $a_j$ to $b_j$, the nodes of $\{v_1,\dots,v_{\ell}\}\cap W_j$ appear in that order, simply because the $v_i$'s were defined to be ordered by distance from $a_j$.  Thus, every node in $\{v_1,\dots,v_{\ell-1}\}\cap W_j$ appears between $a_j$ and $v_{\ell}$ on $P$.

Consider the set $\{v_1,v_\ell,v_{\ell+1}\}\cup T$, which is of size at most $11$ since $|T|\leq 9$ and $v_1\in T$. 
By property $(\dagger)$ there exists a path $P'\in \mathcal{R}$ that contains every node in $\{v_1,v_\ell,v_{\ell+1}\}\cup T$. 
By assumption 1 of the lemma statement, $a_i$ appears before $b_i$ on $P'$ for all $i$. 

Now we can define our sequence of subpath swaps. For all $i<j$, we swap the subpath $P[a_i,b_i]$ with $P'[a_i,b_i]$. We also swap the subpath $P[a_j,v_\ell]$ with $P'[a_j,v_\ell]$. From above we know that before these swaps, for all $i<j$ every node $w\in W_i$ appears between $a_i$ and $b_i$ on $P$. Thus, after these swaps, for all $i<j$ every node $w\in W_i$ appears between $a_i$ and $b_i$ on $P'$. Similarly, before these swaps, every node in $\{v_1,\dots,v_{\ell-1}\}\cap W_j$ appears between $a_j$ and $v_{\ell}$ on $P$, so after these swaps every node in $\{v_1,\dots,v_{\ell-1}\}\cap W_j$ appears between $a_j$ and $v_{\ell}$ on $P'$. 

Now, we we consider the set $U$. Before these swaps, $P'$ contains all of $U$, and we claim that after the swap $P'$ still contains all of $U$. Fix $u\in U$. If $u\in \{v_1,\dots,v_{\ell-1}\}\cap W_j$, then $P'$ contains $u$ after the swaps, from the previous paragraph. Otherwise, before the swaps, $u$ is not on a path segment of $P'$ that got swapped, so $u$ remains on $P'$. Thus, we have shown that after the swaps, $P'$ contains every node in $\{v_1,\dots,v_{\ell+1}\}\cup T$ as desired.

For the induction, we need to verify that after the swap, assumption 1 still holds. This is simply because we only changed paths $P$ and $P'$, and after the swap, both of them still have $a_i$ appearing before $b_i$ for all $i$.

Finally, we need to verify that all paths in $\mathcal{R}$ that do not contain any nodes in $W^*\setminus T$ are also contained in $\mathcal{S'}$.
This is true simply because the path $P'$ is the only other path besides $P$ that is affected by the swaps and by definition $P'$ contains nodes in $W^*\setminus T$.
\end{proof}

\subsection{Proof of \texorpdfstring{\cref{thm:dirgen}}{Lemma 2.8}}\label{subsec:lem}
We now use the lemmas proved in the previous sections to complete the proof of \cref{thm:dirgen}.

  \paragraph*{Proof Structure}
  
  The structure of the proof is as follows. Let $P$ be the path of $\mathcal{R}$ passing through the most nodes in $W$ (with ties are broken arbitrarily). If $P$ passes through every node in $W$, then we are done.

     Otherwise, as stated in the technical overview, we consider two cases: in case 1 we perform subpath swaps to construct a path $P'$ so that $P$ and $P'$ together satisfy condition 2 of \cref{thm:dirgen}, in which case we are done, and in case 2 we augment one path.

   More, specifically, for case 2, let $W'=P\cap W$. We construct a sequence of subpath swaps that yields a path $P'$ passing through $W'$ and at least one node in $W\setminus W'$.

      After executing case 2, if we are not done we redefine $P'$ to be the path $P$ and repeatedly apply the two cases until we are done.
    Since each iteration of case 2 causes the number of nodes of $W$ on the path $P$ to grow, repeated application of these cases completes the proof. 
    In the full proof below, case 2 is actually broken into 4 cases labeled cases 2-5.

    To begin the description of the cases, fix an iteration, and fix the current value of $\mathcal{R}$. For ease of notation, we consider $\mathcal{R}$ to be changing over time via subpath swaps, and we will always let $\mathcal{R}$ denote the current value of $\mathcal{R}$. Fix the current path $P$. Let $W'=P\cap W$. Let $a$ and $b$ be the first and last nodes on $P$ from $W$. 

   For all of the cases, it will help to have the following definition:

\begin{definition}[Trapping nodes]
   For any node $v\in W$, we say that the \emph{trapping nodes} of $v$, denoted $\text{trap}(v)$ is the pair $\set{\text{trap}_1(v),\text{trap}_2(v)}$ of nodes in $W'$ such that $\dist(a,\text{trap}_1(v))$ is maximized and $\dist(a,\text{trap}_2(v))$ is minimized
   under the condition that $\dist(a,\text{trap}_1(v))\leq\dist(a,v)\leq\dist(a,\text{trap}_2(v))$.
\end{definition}

    \paragraph*{Case 1: Create Two Paths} Our assumption in case 1 is that for \emph{all} $v\in W\setminus W'$, \emph{all} paths in $\mathcal{R}$ that contain $\set{a,b,v}\cup \text{trap}(v)$ have $b\rightarrow v \rightarrow a$ in that order (not necessarily consecutively). Then, letting $v$ be an arbitrary node in $W\setminus W'$, we can apply \cref{lem:reroute} (the Critical Node Lemma) with $W^* =W\setminus W'$, and $T=\set{a,b,v}\cup \text{trap}(v)$, where $a_1=a$, $b_1=b$, and $U=\set{v}\cup \text{trap}(v)$. After the subpath swaps from \cref{lem:reroute}, $\mathcal{R}$ contains a path $P'$ containing every node in $W\setminus W'$. Furthermore,  \cref{lem:reroute} guarantees that $P$ still appears in $\mathcal{R}$. Thus, $\mathcal{R}$ contains two paths $P$ and $P'$ whose union contains every node in $W$.

        To show that $P$ and $P'$ satisfy condition 2 of \cref{thm:dirgen}, we need to show that $b$ is the first node on $P'$ from $W$ and $a$ is the last. This is true simply because of the assumption that for all $v\in W\setminus W'$, all paths in $\mathcal{R}$ that contain $\set{a,b,v}\cup \text{trap}(v)$ have $b\rightarrow v \rightarrow a$ in that order.

        \paragraph*{Case 2: Simple Insertion} Our assumption in case 2 is that there exists a node $v\in W\setminus W'$ such that there exists a path $P'\in\mathcal{R}$ containing any one of the three following orderings of nodes (not necessarily consecutively):
    \begin{itemize}
        \item $a\rightarrow b \rightarrow v$, or
        \item $v \rightarrow a \rightarrow b$, or
        \item $\text{trap}_1(v)\rightarrow v\rightarrow \text{trap}_2(v)$.
    \end{itemize}

    For the first two orderings, we can simply swap the subpath $P[a,b]$ with $P'[a,b]$.
    As a result of this, the path $P'$ contains $W'\cup\set{v}$, so we are done. 
    For the third ordering, we can simply swap $P[\text{trap}_1(v),\text{trap}_2(v)]$ with $P'[\text{trap}_1(v),\text{trap}_2(v)]$. As a result, $P$ contains $W'\cup\set{v}$, so we are done.

    For the sake of analyzing future cases, we claim that we are in case 2 in the following situations: 
    \begin{enumerate}
        \item there exists a node $v\in W\setminus W'$ such that there exists a path $P'\in \mathcal{R}$ containing $\set{a,b,v}\cup \text{trap}(v)$ that has $a$ before $b$, or
        \item the subpath $P[a,b]$ is a trivial path with respect to some $w\in W'$. In this case, by the definition of a trivial path there exists a path $P'\in\mathcal{R}$ containing $\set{a,b,v,w}\cup \text{trap}(v)$ that has $a$ before $b$ by property $(\dagger)$.
    \end{enumerate} 

   For either situation, fix such a path $P'$. If $P'$ has $v$ before $a$ or after $b$, then we fall into case 2 by definition.
Otherwise, $P'$ has $v$ between $a$ and $b$. By \cref{lem:sp}, $P'$ also has both nodes in $\text{trap}(v)$ between $a$ and $b$.
    Because $\set{v}\cup \text{trap}(v)$ all appear between $a$ and $b$ on $P'$, we know that since $\dist(a,\text{trap}_1(v))\leq\dist(a,v)\leq\dist(a,\text{trap}_2(v))$ they appear in the order $\text{trap}_1(v)\rightarrow v\rightarrow \text{trap}_2(v)$ on $P'$. Thus, we fall into case 2.

        Therefore, for the remaining cases, we know that the subpath $P[a,b]$ is not a trivial path with respect to any node in $W'$. Thus, we have two cases to consider: $P[a,b]$ is a reversing path with respect to $W'$, or  $P[a,b]$ is non-reversing with respect to $W'$.

    \paragraph*{Case 3: Reversing Path}
    
    Suppose $P[a,b]$ is a reversing path with respect to $W'$. Let $v$ be an arbitrary node in $W\setminus W'$. Our goal is to perform subpath swaps so that as a result, $\mathcal{R}$ has a path that contains every node in $W'\cup \set{v}$.

    We apply \cref{lem:combo} (the Reversing/Non-Reversing Path Lemma) to $P[a,b]$, where $W'$ in \cref{lem:combo} is the same as our set $W'$, to get
   a decomposition of $P[a,b]$ into three segments $P_1, P_2, P_3$.

    Let $w$ be the last node on $P_1$ from $W'$, 
    and let $z$ be the first node on $P_3$ from $W'$. That is,
 $P[a,w]\subseteq P_1$, 
 and $P[z, b]\subseteq P_3$.

      Let $p$ and $q$ be the nodes in $P_2\cap W$ which are the closest and furthest from $b$ in terms of distance respectively (such nodes exist because $P_2$ is non-empty by the definition of a reversing path).
    Define the set $K = \set{a, b, w, z, p, q, v}\cup \text{trap}(v)$. $K$ will be our set of \emph{critical nodes}. By property $(\dagger)$, there is a path $Q\in\mathcal{R}$ containing every node in $K$.

    Next, we will analyze the order in which $Q$ traverses the nodes of $K$. After establishing this structure, we will perform subpath swaps by applying \cref{lem:reroute} (the Critical Node Lemma).

   Because we are not in {\bf case 2}, we know that \emph{all} paths in $\mathcal{R}$ that contain $\set{a,b,v}\cup \text{trap}(v)$ have $b$ before $a$, and in particular $Q$. Because $p$ and $q$ are in $P_2$, by \cref{lem:combo} $p$ and $q$ are both on a shortest path from $b$ to $a$. Thus, by \cref{lem:sp}, $p$ and $q$ both appear between $b$ and $a$ on $Q$.  Furthermore, by considering the properties of $P_1$ and $P_3$ from the statement of \Cref{lem:combo}, we know that $z$ occurs before $b$ on $Q$, and $w$ occurs after $a$ on $Q$.  By our choice of $p$ and $q$ as the closest and furthest nodes from $b$ respectively, we know that $p$ appears before $q$ on $Q$.

       In summary, $Q$ is such that $z$ appears before $b$, $p$ appears before $q$, and $a$ appears before $w$. With this structure, we are ready to apply \cref{lem:reroute} with the following choice of parameters: $T$ is set to $K$, $W^*$  is set to $W'\cup \set{v}$, $(a_1,b_1),(a_2,b_2),(a_3,b_3)$ are set to $(z,b),(p,q),(a,w)$, and $U$ is set to $\set{v}\cup \text{trap}(v)$. We will now verify that the assumptions of \cref{lem:reroute} indeed hold. For the first assumption, $Q$ was chosen to be an arbitrary path in $\mathcal{R}$ containing every node in $K$, so the properties we proved for $Q$ are indeed true for every such path. For the second assumption, by definition, every node in $P_1\cap W$ falls on a shortest path from $a$ to $w$, every node in $P_2\cap W$ falls on a shortest path from $b$ to $a$, and every node in $P_3\cap W$ falls on a shortest path from $z$ to $b$. No node falls into multiple path segments $P_i$, so each node in $W'$ is indeed on a shortest path from $a_i$ to $b_i$ for exactly one $i$.  The consequence of \cref{lem:reroute} is that $\mathcal{R}$ has a path containing every node in $W'\cup\set{v}$, as desired.

        The remaining cases consider the situation where $P[a,b]$ is a non-reversing path with respect to $W'$.

        \paragraph*{Non-Reversing Path Lemma}
When $P[a,b]$ is a non-reversing path with respect to $W'$, we will split into two cases. Before presenting either of the cases, we prove a lemma that will be useful for both cases.

\begin{lemma}\label{lem:cycle}
Suppose $u,w\in W'$ and $u$ is before $w$ on $P$. Let $u'$ be any node in $W'$ that appears before $u$ on $P$, and let $w'$ be any node in $W'$ that appears after $w$ on $P$. Then, any path $Q\in \mathcal{R}$ containing $\set{u,w,u',w',a,b}$ that has $w$ before $u$, has the order $w\to w' \to b \to a\to u' \to u$.
\end{lemma}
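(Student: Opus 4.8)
The plan is to pin down the order in which $Q$ visits the six nodes $w,w',b,a,u',u$ one node at a time, using only \cref{lem:sp} together with the orderings guaranteed by \cref{lem:combo} applied to the non-reversing path $P[a,b]$ with the set $W'$. Throughout we use the basic fact that since $P[a,b]$ is a shortest path, any sub-collection of $a,u',u,w,w',b$ taken in the order they occur on $P$ (namely $a,u',u,w,w',b$) is a shortest-path ordering, and that since $Q$ is a shortest path containing all six nodes, the order in which $Q$ visits any sub-collection is also a shortest-path ordering. The recurring move is: if $P$ visits $x,y,z$ in that order (so $x\to y\to z$ is a shortest-path ordering) but $Q$ visits them in an order that \cref{lem:sp} forbids, we obtain a contradiction.

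First I would determine the order of $a,w,b$ on $Q$. Since $w\in W'$ and $P[a,b]$ is non-reversing with respect to $W'$, \cref{lem:combo} implies the shortest-path orderings of $a,w,b$ form a subset of $\{a\to w\to b,\ b\to a\to w,\ w\to b\to a\}$, so $Q$ visits these three nodes in one of these three orders. In the first two, $a$ precedes $w$ on $Q$; combined with the hypothesis that $w$ precedes $u$ on $Q$, this would make $Q$ visit $a,w,u$ in that order. But $P$ visits $a,u,w$ in that order, so $a\to u\to w$ is a shortest-path ordering, and \cref{lem:sp} forbids $a\to w\to u$ --- a contradiction. Hence $Q$ visits $a,w,b$ in the order $w\to b\to a$; in particular $w$ precedes $b$ and $b$ precedes $a$ on $Q$.

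Next I would place $u$, then $u'$, then $w'$, each time considering the triple with $a$ and $b$: since each of these nodes lies in $W'$, \cref{lem:combo} restricts its ordering with $a,b$ to three options, and knowing $b$ precedes $a$ on $Q$ eliminates one, leaving two, the ``wrong'' one of which is killed by a further application of \cref{lem:sp}. Concretely, for $u$: if $Q$ visited $u,b,a$ in that order then (as $w$ precedes $u$ on $Q$) $Q$ would visit $w,u,b$ in that order, contradicting that $u\to w\to b$ is a shortest-path ordering from $P$; so $Q$ visits $b,a,u$, i.e., $a$ precedes $u$ on $Q$. For $u'$: if $Q$ visited $u',b,a$ in that order, then $u'$ and $w$ both precede $b$ on $Q$, and since $w\to u'\to b$ is forbidden (because $u'\to w\to b$ comes from $P$) we would get $u'$ before $w$ on $Q$, making $Q$ visit $u',w,u$ in that order, which contradicts that $u'\to u\to w$ is a shortest-path ordering from $P$; so $Q$ visits $b,a,u'$. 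For $w'$: if $Q$ visited $b,a,w'$ in that order then $w'$ follows $b$ on $Q$ while $w$ precedes $b$, so $Q$ would visit $w,b,w'$ in that order, contradicting that $w\to w'\to b$ is a shortest-path ordering from $P$; so $Q$ visits $w',b,a$. Finally, $u'$ and $u$ both follow $a$ on $Q$ and $a\to u'\to u$ is a shortest-path ordering, so \cref{lem:sp} (applied to $a,u',u$) forces $u'$ before $u$ on $Q$; symmetrically $w$ and $w'$ both precede $b$ on $Q$ and $w\to w'\to b$ is a shortest-path ordering, so $w$ precedes $w'$ on $Q$. Assembling these relations yields that $Q$ visits the six nodes in the order $w\to w'\to b\to a\to u'\to u$, as claimed.

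The main obstacle is choosing the right auxiliary triple at each step so that \cref{lem:sp} actually bites; the delicate case is $u'$ (and symmetrically $w'$), where a single triple is insufficient and one must first deduce that $u'$ precedes $w$ on $Q$ and only then extract the contradiction from the triple $\{u',u,w\}$. I would also briefly check the degenerate situations $u'=a$ or $w'=b$, in which the corresponding portion of the claimed order is vacuous and the same argument specializes cleanly.
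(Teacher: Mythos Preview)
Your argument is correct and rests on the same two ingredients as the paper's proof: \cref{lem:sp} together with the fact that, since $P[a,b]$ is non-reversing with respect to $W'$, every node of $W'$ has its ordering with $a,b$ restricted to the three cyclic shifts of $a\to\cdot\to b$.

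The paper organizes the argument a little differently. It first uses only distance considerations (no non-reversing yet) to deduce that $a$ comes after $w$, that $b$ comes before $u$, and that $b$ comes before $a$ on $Q$; only then does it invoke the non-reversing hypothesis to rule out $b\to w\to a$ (and symmetrically $b\to u\to a$), obtaining the backbone order $w\to b\to a\to u$ directly. After that, $w'$ and $u'$ are placed in one line each: since $w'$ lies on a shortest $w$-to-$b$ path, \cref{lem:sp} forces it between $w$ and $b$ on $Q$, and symmetrically for $u'$. Your approach instead applies \cref{lem:combo} from the outset to each of the six nodes in turn; this works fine but makes the $u'$ step (and symmetrically $w'$) a two-stage argument, whereas the paper's order of operations reduces it to a single application of \cref{lem:sp}. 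Neither route gains or loses anything substantive; the paper's is a bit shorter for $u'$ and $w'$.
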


\begin{proof} Let $Q\in \mathcal{R}$ be a path containing $\set{u,w,u',w',a,b}$ that has $w$ before $u$.
    Node $a$ must occur after $w$ on $Q$, as otherwise we would have the shortest-path ordering $a\rightarrow w \rightarrow u$, which by \cref{lem:sp} contradicts the shortest-path ordering $a\rightarrow u \rightarrow w$, which we have by assumption.
    Symmetric reasoning shows that node $b$ must occur before $u$ on $Q$.
    Finally, $b$ must appear before $a$ on $Q$, because otherwise we would have the shortest-path ordering $w\rightarrow a \rightarrow b \rightarrow u$, which by \cref{lem:sp} contradicts the shortest-path ordering $a \rightarrow u\rightarrow w \rightarrow b$, which we have by assumption.
    
        So far we have simply used distance considerations to restrict the order these nodes appear on $Q$. 
    We now use the fact that $P[a,b]$ is non-reversing (with respect to $W'$) to pin down the precise order that $Q$ passes through these nodes.
    
    We claim that $b$ appears after $w$ on $Q$.
    Indeed, suppose to the contrary that $b$ appeared before $w$.
    Then since $w$ appears before $u$, which in turn appears before $a$, on $Q$, we would have the shortest-path ordering $b\rightarrow w\rightarrow a$.
    However, this contradicts the fact that $P[a,b]$ is non-reversing with respect to $W'$.  
    Symmetric reasoning shows that $a$ appears after $u$ on $Q$.
    
    Combining all these observations together, we see that $Q$ traverses these nodes in the order
    $w\to b\to a\to u$.
    
    Since $w'$ falls on a shortest path from $w$ to $b$, by \cref{lem:sp} $w'$ falls between $w$ and $b$ on $Q$. Similarly, since $u'$ is on a shortest path from $a$ to $u$, by \cref{lem:sp} $u'$ falls between $a$ and $u$ on $Q$. Thus, $Q$ has the order $w\to w' \to b \to a\to u' \to u$, as desired.
\end{proof}

   \paragraph*{Case 4: Non-Reversing Path Easy Case}
   Suppose $P[a,b]$ is a non-reversing path with respect to $W'$. Our further assumption in case 4 is that there exists a node $v\in W\setminus W'$ with
      the following property: there exist nodes $u,w\in W'$ where $u$ is before $w$ on $P$ and no node of $W'$ falls between $u$ and $w$ on $P$, such that every path in $\mathcal{R}$ containing $\set{u,w,v}$, has $w$ before $u$. Fix such a node $v$. Our goal is to perform subpath swaps so that as a result, $\mathcal{R}$ has a path that contains every node in $W'\cup \set{v}$.

  Define the set $K=\set{v, a, b, u, w}$, which will be our set of \emph{critical nodes}. By property $(\dagger)$ there is a path $Q\in \mathcal{R}$ passing through all the nodes in $K$.
      By assumption, $w$ occurs before $u$ on $Q$. By \cref{lem:cycle}, $Q$ traverses these nodes in the order
     $w\to b\to a\to u$.

    Now we are ready to apply \Cref{lem:reroute} (the Critical Node Lemma) with the following choice of parameters: $T$ is set to $K$, $W^*$ is set to $W'\cup \set{v}$, $(a_1,b_1),(a_2,b_2)$ are set to $(w,b),(a,u)$, and $U$ is set to $\set{v}$. We will now verify that the assumptions of \cref{lem:reroute} indeed hold.  For the first assumption, $Q$ was chosen to be an arbitrary path in $\mathcal{R}$ containing every node in $K$, so the properties we proved for $Q$ are indeed true for every such path. For the second assumption, by definition, every node in $W'$ either falls on exactly one of $P[w,b]$ or $P[a,u]$. The consequence of \cref{lem:reroute} is that $\mathcal{R}$ has a path containing every node in $W'\cup\set{v}$, as desired.

       \paragraph*{Case 5: Non-Reversing Path Hard Case}
   In case 5, like in case 4, we assume $P[a,b]$ is a non-reversing path with respect to $W'$.

     Because we are not in {\bf case 2}, we know that \emph{every} node $v\in W\setminus W'$ has the property that \emph{all} paths in $\mathcal{R}$ which contain $\set{a,b,v}\cup \text{trap}(v)$ have $b$ before $a$. 
     Moreover, because we are not in {\bf case 1}, we know there exists a node $v\in W\setminus W'$ such that there is a path in $\mathcal{R}$ containing $\set{a,b,v}\cup \text{trap}(v)$ where $a,b,v$ appear in one of the following two orders: $v\rightarrow b\rightarrow a$ or $b\rightarrow a \rightarrow v$. 
     Fix such a $v$. Our goal is to perform subpath swaps such that afterwards $\mathcal{R}$ has a path that contains every node in $W'\cup \set{v}$.

        Since we are not in {\bf case 4}, for every choice of $u,w\in W'$ where $u$ is before $w$ on $P$ and no node of $W'$ falls between $u$ and $w$ on $P$, there exists a path in $\mathcal{R}$ through $u, w,$ and $v$ in which $u$ appears before $w$. Moreover, because we are not in {\bf case 2}, we can assume that $v$ does not appear between $u$ and $w$ on such a path. It follows that for any choice of such $u$ and $w$, there is a path in $\mathcal{R}$ with the ordering $v\to u \to w$ or $u\to w\to v$.

    Notice that we have reached the same conclusion for the nodes $u,w$ as we have for the nodes $b,a$. Because of this, it will be convenient to imagine the path $P$ as a cycle in the following sense.  Enumerate the nodes in $W'$ in the order that they appear on $P$ as
        $a = x_1, x_2, \dots, x_{|W'|} = b$,
    and interpret the indices modulo $|W'|$, so that $x_i = x_{i\pmod {|W'|}}$ whenever $i > |W'|$. In the language of this notation, we have shown that for all $i$, there is a path in $\mathcal{R}$ with the ordering $v\to x_i \to x_{i+1}$ or $x_i\to x_{i+1}\to v$. 
    We call the former type a \emph{beginning path} and the latter type an \emph{ending path}, referring to the position of $v$ relative to $x_i$ and $x_{i+1}$ on the respective paths.

        In the next step of the argument, we consider for which values of $i$ the nodes $x_i,x_{i+1}$ admit a beginning path, for which values of $i$ they admit an ending path, and for which values of $i$ they admit both.

     Let $s$ be such that $x_{s+1}$ is the node in $W'$ with minimum distance from $v$. 
    Then no path in $\mathcal{R}$ through $x_{s}$, $x_{s+1}$, and $v$ can be a beginning path, because this would mean that $\dist(v,x_{s}) < \dist(v,x_{s+1})$.
        Thus, $\mathcal{R}$ must have an ending path through $x_{s}$, $x_{s+1}$, and $v$.

        Symmetrically, by considering the node in $W'$ with minimum distance to $v$, we conclude that there exists an $i$ such that there is no ending path through $x_i$, $x_{i+1}$

            Let $t$ be the integer greater than $s$ that minimizes $t-s \pmod {|W'|}$ such that there is no ending path through $x_t$, $x_{t+1}$, and $v$ (such a $t$ exists due to the previous paragraph). Thus, $\mathcal{R}$ must have a beginning path through $v$, $x_t$, $x_{t+1}$. 
    Now, define our set $K$ of \emph{critical nodes} as
    \begin{equation}
    \label{eq:K-case5}
    K = \set{a, b, x_s, x_{s+1}, x_t, x_{t+1}, v}\cup \text{trap}(v).
    \end{equation}

        By property $(\dagger)$ there exists a path $Q\in\mathcal{R}$ containing all of the nodes in $K$.
        
            We now proceed with casework on the relative orderings of these nodes on $Q$.

    Suppose that $x_s$ occurs before $x_{t+1}$ on $Q$. We begin by proving the following claim.
    \begin{claim}\label{claim:bet}
        Node $v$ is between $x_s$ and $x_{t+1}$ on $Q$.
    \end{claim}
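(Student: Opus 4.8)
The plan is to argue by contradiction. Suppose $v$ is not between $x_s$ and $x_{t+1}$ on $Q$; since $x_s$ precedes $x_{t+1}$ on $Q$, this means either $v$ precedes $x_s$ on $Q$, or $v$ follows $x_{t+1}$ on $Q$. These two situations are symmetric in spirit (the node $x_{s+1}$ closest to $v$ plays the role that the pair $x_t,x_{t+1}$ admitting no ending path plays), so I would write out the first case and then note the second is analogous.

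First I would pin down, for the three-element sets $\{x_s,x_{s+1},v\}$ and $\{x_t,x_{t+1},v\}$, exactly which orders are possible on $Q$. The ordering $x_s\to x_{s+1}\to v$ is realized by an ending path in $\mathcal{R}$, hence is a shortest-path ordering, so by \cref{lem:sp} neither $x_s\to v\to x_{s+1}$ nor $x_{s+1}\to x_s\to v$ is a shortest-path ordering; and $v\to x_s\to x_{s+1}$ is not a shortest-path ordering because $x_{s+1}$ is the node of $W'$ closest to $v$, so $\dist(v,x_{s+1})<\dist(v,x_s)$. Hence the order of these three nodes on $Q$ is one of $x_s,x_{s+1},v$ or $x_{s+1},v,x_s$ or $v,x_{s+1},x_s$; in particular, if $v$ precedes $x_s$ on $Q$ then $x_{s+1}$ precedes $x_s$ on $Q$. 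Symmetrically, the beginning path realizes $v\to x_t\to x_{t+1}$, so by \cref{lem:sp} neither $x_t\to v\to x_{t+1}$ nor $v\to x_{t+1}\to x_t$ is a shortest-path ordering, and $x_t\to x_{t+1}\to v$ cannot be the order on $Q$, since a subpath of $Q$ realizing it would be an ending path through $x_t,x_{t+1},v$, contradicting the choice of $t$. Thus the possible orders of $\{x_t,x_{t+1},v\}$ on $Q$ are $v,x_t,x_{t+1}$ or $x_{t+1},v,x_t$ or $x_{t+1},x_t,v$; in particular, since in our case $v$ precedes $x_s$ which precedes $x_{t+1}$, the order is forced to be $v,x_t,x_{t+1}$, so $v$ precedes $x_t$ which precedes $x_{t+1}$ on $Q$.

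Next I would exploit that $P[a,b]$ is non-reversing with respect to $W'$: no node of $W'$ lies on a $b$-to-$a$ shortest path, so on $Q$ (which has $b$ before $a$, since we are not in Case 2) no node of $W'$ lies strictly between $b$ and $a$. Combining this with the fact that $a\to w\to b$ is a shortest-path ordering for every $w\in W'$ on $P$ and with \cref{lem:sp}, each of $x_s,x_{s+1},x_t,x_{t+1}$ must lie entirely before $b$ or entirely after $a$ on $Q$. Together with the orderings just derived ($x_{s+1}$ before $x_s$, and $v$ before $x_t$ before $x_{t+1}$, and $x_s$ before $x_{t+1}$, and $b$ before $a$), and using the closest-node inequality for $x_{s+1}$ and the distance relations from the beginning/ending path orderings, I would locate all of $a,b,x_s,x_{s+1},x_t,x_{t+1},v$ on $Q$ up to a small number of sub-cases; in the sub-cases where $(x_s,x_{s+1})$ (respectively $(x_t,x_{t+1})$) appears reversed on $Q$ relative to $P$, I would invoke \cref{lem:cycle} (with $a,b$ serving as the outer $W'$-nodes) to force $Q$ to have the form $x_{s+1}\to b\to a\to x_s$ (respectively $x_{t+1}\to b\to a\to x_t$), and then contradict either the strict minimality of $\dist(v,x_{s+1})$ among $W'$ or \cref{lem:sp} applied to a suitable triple among these nodes. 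Finally I would handle the degenerate situations where one of $x_s,x_{s+1},x_t,x_{t+1}$ equals $a$ or $b$: for instance $x_s=b$ (so $x_{s+1}=a$) is impossible here, since our case forces $x_{s+1}$ before $x_s$ on $Q$, contradicting $b$ before $a$; whereas $x_{s+1}=b$ or $x_s=a$ need a short separate argument because \cref{lem:cycle} requires a $W'$-node strictly outside the pair on $P$.

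The main obstacle I expect is not a single deep idea but the bookkeeping in the middle step: the constraints above still leave several possible arrangements of $\{a,b,x_s,x_{s+1},x_t,x_{t+1},v\}$ on $Q$, and each must be closed off by combining the "wrap-around" structure coming from \cref{lem:cycle}, the non-reversing restriction on where $W'$-nodes can sit relative to the $b$-to-$a$ stretch of $Q$, and the several distance inequalities coming from the closest-node property and the beginning/ending path orderings. Organizing this casework cleanly — rather than proving any one inequality — is where the effort lies.
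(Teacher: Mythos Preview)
Your opening analysis of the triples $\{x_s,x_{s+1},v\}$ and $\{x_t,x_{t+1},v\}$ is correct, but you are taking a longer route than necessary and leaving the hard part as an unexecuted plan. The paper's proof rests on one observation you never make: \emph{independently of where $v$ sits}, the nodes $x_{s+1}$ and $x_t$ must lie between $x_s$ and $x_{t+1}$ on $Q$. This is proved by casework on the cyclic position of $x_s,x_{s+1},x_t,x_{t+1}$ along $P$: if $x_s$ precedes $x_{t+1}$ on $P$ then $P$ realizes $x_s\to x_{s+1}\to x_t\to x_{t+1}$ and \cref{lem:sp} forces $x_{s+1},x_t$ between $x_s,x_{t+1}$ on $Q$; in the wrap-around sub-case $x_t\to x_{t+1}\to x_s\to x_{s+1}$ on $P$, one applies \cref{lem:cycle} with $(u,w)=(x_{t+1},x_s)$ and outer witnesses $(u',w')=(x_t,x_{s+1})$; the remaining sub-case again falls to \cref{lem:sp}. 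Once this sandwich fact is known, the contradiction is a single line in each direction: if $v$ precedes $x_s$ then $Q$ itself realizes $v\to x_s\to x_{s+1}$, a beginning path contradicting the choice of $s$; if $v$ follows $x_{t+1}$ then $Q$ realizes $x_t\to x_{t+1}\to v$, an ending path contradicting the choice of $t$.

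Your plan instead deduces (correctly) that if $v$ precedes $x_s$ then $x_{s+1}$ precedes $x_s$ on $Q$, and then proposes to locate all seven nodes via further casework using the non-reversing property and \cref{lem:cycle} applied to the pair $(x_s,x_{s+1})$ with $a,b$ as outer witnesses. This is not obviously wrong, but it is not carried out, and the degenerate cases you flag (when one of $x_s,x_{s+1},x_t,x_{t+1}$ coincides with $a$ or $b$) are real obstructions to that particular invocation of \cref{lem:cycle}. The paper sidesteps all of this: its application of \cref{lem:cycle} uses $x_t$ and $x_{s+1}$ themselves as the outer witnesses, so no separate treatment of the endpoints $a,b$ is ever needed. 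In short, you have the right ingredients but are combining them in an order that manufactures avoidable bookkeeping; proving the sandwich fact for $x_{s+1},x_t$ first collapses your entire ``main obstacle'' paragraph to two sentences.
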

  \begin{proof}
    Suppose for sake of contradiction that $Q$ passes through the nodes from the claim statement either in the order $v\to x_s\to x_{t+1}$ or in the order $x_s\to x_{t+1}\to v$.
    
        We claim that the nodes $x_{s+1}$ and $x_t$ must appear (in that order) between $x_s$ and $x_{t+1}$ on $Q$.
    To see this, consider the following cases: If $x_s$ comes before $x_{t+1}$ on $P$ then the order of these nodes on $P$ is $x_s\to x_{s+1} \to x_t \to x_{t+1}$, in which case apply \cref{lem:sp}. On the other hand, if $x_{t+1}$ comes before $x_s$ on $P$ then there are two possible orderings of these nodes on $P$: $x_t \to x_{t+1} \to x_s\to x_{s+1}$, in which case apply \cref{lem:cycle}, or $x_{t+1} \to x_s\to x_{s+1} \to x_t$, in which case apply \cref{lem:sp}.

         Thus, in the first case where $Q$ has the ordering $v\to x_s\to x_{t+1}$, $Q$ must have the ordering $v\to x_s\to x_{s+1} \to x_{t+1}$, so there is a path in $\mathcal{R}$ with the ordering $v\to x_s\to x_{s+1}$, which contradicts our choice of $s$. 
    Similarly, in the second case where $Q$ has $x_s\to x_{t+1}\to v$, $Q$ must have $x_s\to x_t \to x_{t+1}\to v$, so there is a path in $\mathcal{R}$ with the ordering $x_t\to x_{t+1}\to v$, which contradicts our choice of $t$. 
    \end{proof}

    Next, we prove a claim that we can perform subpath swaps to obtain a path in $\mathcal{R}$ through $x_s\to x_{s+1} \to \dots \to x_{t} \to v$, where the dots indicate all $x_i$ between $x_{s+1}$ and $x_{t}$: 

    \begin{claim}\label{claim:induct}
        For any index $j$ such that $s< j\le t$, we can perform subpath swaps so that as a result $\mathcal{R}$ has a path with the ordering $x_s\to x_{s+1} \to \dots \to x_{j} \to v$ (where the dots indicate all $x_i$ between $x_{s+1}$ and $x_{j}$).
    \end{claim}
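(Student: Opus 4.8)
The plan is to prove \cref{claim:induct} by induction on $j$ over the range $s < j \le t$, where each inductive step uses a single subpath swap to extend a path ending at $v$ so that it additionally passes through $x_{j+1}$ immediately before $v$. For the base case $j = s+1$ no swap is needed: we already showed that $\mathcal{R}$ must contain an ending path through $x_s$, $x_{s+1}$, and $v$, which is by definition a path with the ordering $x_s \to x_{s+1} \to v$.

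For the inductive step, assume that after some sequence of swaps $\mathcal{R}$ contains a path $R$ with the ordering $x_s \to x_{s+1} \to \dots \to x_j \to v$ for some $j$ with $s < j < t$. Since $j < t$, the minimality in the choice of $t$ gives an ending path $R' \in \mathcal{R}$ through $x_j$, $x_{j+1}$, and $v$, i.e., with the ordering $x_j \to x_{j+1} \to v$. On both $R$ and $R'$ the node $x_j$ precedes $v$, so by \cref{defn:swap} we may swap $R[x_j, v]$ with $R'[x_j, v]$. I would then check that the path obtained from $R$, namely $(R \setminus R[x_j,v]) \cup R'[x_j,v]$, has the ordering $x_s \to x_{s+1} \to \dots \to x_{j+1} \to v$: its segment up to $x_j$ is inherited from $R$ and still visits $x_s, \dots, x_{j-1}$ in order, while the spliced segment $R'[x_j, v]$ visits $x_j, x_{j+1}, v$ in order; it remains a shortest path by \cref{obs:subpath-swap}. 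Iterating up to $j = t$ proves the claim.

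The step I expect to be the main obstacle is ensuring, inside the inductive step, that the required ending path through $x_j$, $x_{j+1}$, $v$ still exists once the earlier $j-s-1$ swaps have been performed: the existence of ending paths through consecutive pairs was deduced from our not being in Cases 1, 2, or 4, which describe $\mathcal{R}$ as it stood at the start of Case 5. To address this I would strengthen the induction with the invariant that, for every index $i$ with $j \le i < t$, the current $\mathcal{R}$ still contains an ending path through $x_i$, $x_{i+1}$, $v$, and argue that the one swap performed at step $j$ touches only $R$ and the single ending path it consumes, leaving the ending paths for larger indices intact (in the corner case where one ending path realizes several consecutive indices at once, one simply splices in that longer path and advances the index by more than one). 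Verifying this bookkeeping is the delicate part of the argument; the swap and the order-tracking themselves are routine.
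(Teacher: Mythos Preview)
Your argument is exactly the paper's: induct on $j$, take the base case from the ending path through $x_s, x_{s+1}, v$ guaranteed by the choice of $s$, and for the inductive step swap $R[x_j,v]$ with $R'[x_j,v]$ where $R'$ is an ending path through $x_j, x_{j+1}, v$. The paper's own proof is five lines and does not address the persistence concern you raise in your final paragraph---at each step it simply writes ``By our choice of $t$, we know that $\mathcal{R}$ has an ending path $P''$'' without pausing over whether the earlier swaps might have disturbed it---so the extra invariant you propose goes beyond the level of detail the paper itself provides.
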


    \begin{proof}
    We proceed by induction on $j$.
    
        The base case of $j=s+1$ holds simply because $\mathcal{R}$ has an ending path through $x_s, x_{s+1}$, and $v$ (in that order) by choice of $s$.
    
    For the inductive step, suppose that for some index $j$ such that $s < j < t$, $\mathcal{R}$ has a path $P'$ with the ordering $x_s\to x_{s+1}\to \dots\to x_j\to v$.
    By our choice of $t$, we know that $\mathcal{R}$ has an ending path $P''$ passing through $x_j, x_{j+1}$, and $v$ (in that order).
    Now, we simply swap $P'[x_j,v]$ with $P''[x_j,v]$. As a result, $P'$ has the ordering $x_s\to x_{s+1}\to \dots\to x_{j+1}\to v$.

    This completes the induction.
      \end{proof}

    Let $P'$ be the path from \cref{claim:induct} with $j=t$. Swap $P'[x_{s},v]$ with $Q[x_{s}, v]$; we can do so because $x_s$ falls before $v$ on $Q$ by \cref{claim:bet}. 
      As a result, $Q$ contains the ordering 
      \[x_s\to x_{s+1}\to \dots\to x_{t}\to v\to x_{t+1}.\]

Thus, $x_{t}\to v\to x_{t+1}$ is a shortest-path ordering. However, from the definition of $t$, $\mathcal{R}$ has a beginning path through $v$, $x_t$, $x_{t+1}$, so $v\to x_t\to x_{t+1}$ is also a shortest-path ordering. These two orderings contradict \cref{lem:sp}. Thus, case 5(a) is impossible in the first place.

   \subparagraph*{Case 5(b): $x_{t+1}$ before $x_s$}$ $\\
    
    Since {\bf case 5(a)} cannot happen, we know that $x_{t+1}$ appears before $x_s$ on $Q$.
    
    First, we determine the relative ordering of $x_{s+1}$ and $x_t$ on $Q$ with the following claim:

    \begin{claim}
        \label{claim:case5b-order}
      Node $x_{s+1}$ appears before $x_t$ on $Q$.
    \end{claim}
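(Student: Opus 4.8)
The plan is to argue by contradiction. Suppose that $x_t$ appears before $x_{s+1}$ on $Q$. We already know that $x_{t+1}$ appears before $x_s$ on $Q$ (this is the defining hypothesis of Case~5(b)) and that $b$ appears before $a$ on $Q$ (since we are not in Case~2). I would first combine these facts with \cref{lem:sp} and \cref{lem:cycle} to determine the precise order in which $Q$ visits the six landmark nodes $a,b,x_s,x_{s+1},x_t,x_{t+1}$, and then reach a contradiction by locating $v$ on $Q$.

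For the first step I would split into cases according to the relative order of $x_s,x_{s+1},x_t,x_{t+1}$ along $P$; as in the proof of \cref{claim:bet}, there are only three possibilities. In the principal case the order along $P$ is $x_s\to x_{s+1}\to x_t\to x_{t+1}$. Then the pairs $(x_s,x_{t+1})$ and $(x_{s+1},x_t)$ each occur in opposite orders along $P$ and along $Q$, so applying the reasoning of \cref{lem:cycle} to each pair (with endpoints $a$ and $b$) shows that $Q$ visits $x_{t+1},b,a,x_s$ in that order and also visits $x_t,b,a,x_{s+1}$ in that order. Now $x_t$ and $x_{t+1}$ both precede $b$ on $Q$, and $x_t\to x_{t+1}\to b$ is a shortest-path ordering (being read off of $P$), so \cref{lem:sp} forces $x_t$ before $x_{t+1}$ on $Q$; symmetrically, since $x_s,x_{s+1}$ both follow $a$ on $Q$ and $a\to x_s\to x_{s+1}$ is a shortest-path ordering, \cref{lem:sp} forces $x_s$ before $x_{s+1}$ on $Q$. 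Altogether $Q$ visits these six nodes in the order $x_t\to x_{t+1}\to b\to a\to x_s\to x_{s+1}$. The two remaining (``cyclically wrapped'') possibilities for the order along $P$ must be handled in the same spirit but more delicately: there \cref{lem:cycle} does not apply directly to $(x_s,x_{t+1})$ and $(x_{s+1},x_t)$, and one instead has to exploit that $(a,b)$ itself is the reversed pair on $Q$ together with the minimality of $t$ and the extremality of $s$ to rule out the unwanted orders. I expect this to be the main technical obstacle.

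Given the order $x_t\to x_{t+1}\to b\to a\to x_s\to x_{s+1}$ of these six pairwise-distinct nodes on $Q$ (they are distinct since $v\notin W'$), I would now locate $v$. If $v$ appears after $x_{t+1}$ on $Q$, then since $x_t$ precedes $x_{t+1}$, the path $Q$ witnesses the shortest-path ordering $x_t\to x_{t+1}\to v$, i.e.\ an ending path through $x_t,x_{t+1},v$, contradicting the choice of $t$. If $v$ appears strictly between $x_t$ and $x_{t+1}$ on $Q$, then $x_t\to v\to x_{t+1}$ is a shortest-path ordering, so $\dist(x_t,x_{t+1})=\dist(x_t,v)+\dist(v,x_{t+1})$; combining this with the beginning path $v\to x_t\to x_{t+1}$, which gives $\dist(v,x_{t+1})=\dist(v,x_t)+\dist(x_t,x_{t+1})$, yields $0=\dist(x_t,v)+\dist(v,x_t)$, impossible for positive edge weights. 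Finally, if $v$ appears before $x_t$ on $Q$, then $v$ precedes both $x_s$ and $x_{s+1}$, so $v\to x_s\to x_{s+1}$ is a shortest-path ordering and hence $\dist(v,x_{s+1})=\dist(v,x_s)+\dist(x_s,x_{s+1})>\dist(v,x_s)$, contradicting the choice of $x_{s+1}$ as the node of $W'$ closest to $v$. These three cases are exhaustive, so we reach a contradiction in every case; therefore $x_{s+1}$ appears before $x_t$ on $Q$, as claimed.

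In summary, the argument has two pieces: (i) nailing down the order of the six landmark nodes on $Q$, and (ii) a short three-way case analysis on the position of $v$ that plays the extremal choices of $s$ (closest node to $v$) and $t$ (first index with no ending path) against the beginning path $v\to x_t\to x_{t+1}$. Piece (ii) is routine once (i) is in hand; piece (i) is where the work lies, and within it the cyclically wrapped $P$-orderings are the delicate subcase, to be dispatched exactly as in \cref{claim:bet} using \cref{lem:sp}, \cref{lem:cycle}, and the minimality/extremality of $t$ and $s$.
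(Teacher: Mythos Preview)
Your approach is essentially the paper's: argue by contradiction, pin down the order of the landmark nodes on $Q$, then locate $v$ and contradict the extremal choices of $s$ and $t$. The differences are in execution.

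For piece (i), the paper sidesteps your case split on the $P$-order. It asserts (via \cref{lem:cycle}) that on any path in $\mathcal{R}$ containing $\{x_s,x_{s+1},x_t,x_{t+1},a,b\}$, the four nodes $x_s,x_{s+1},x_t,x_{t+1}$ must appear in some cyclic rotation of $x_s\to x_{s+1}\to x_t\to x_{t+1}$. Because this is a statement about \emph{cyclic} order, it is insensitive to where the wrap $b\to a$ falls along $P$; the ``wrapped'' $P$-orderings you worry about simply correspond to different starting points of the same cycle. Combining this one fact with the two linear constraints ($x_{t+1}$ before $x_s$ from Case~5(b), and $x_t$ before $x_{s+1}$ by hypothesis) immediately yields the order $x_t\to x_{t+1}\to x_s\to x_{s+1}$ on $Q$, uniformly across all $P$-orderings. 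This is precisely the obstacle you flag as the main technical difficulty, and the cyclic-order viewpoint dissolves it; in particular you do not need to invoke the minimality of $t$ or the extremality of $s$ to establish the order, nor do you need to track $a$ and $b$ separately.

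For piece (ii), the paper splits at $x_s$ rather than around $x_t,x_{t+1}$, giving two cases instead of three: if $v$ lies before $x_s$ on $Q$ then $v\to x_s\to x_{s+1}$ is a beginning path, contradicting the choice of $s$; if $v$ lies after $x_s$ then, since $x_{t+1}$ precedes $x_s$, $v$ also lies after $x_{t+1}$, so $x_t\to x_{t+1}\to v$ is an ending path, contradicting the choice of $t$. Your middle case ``$v$ between $x_t$ and $x_{t+1}$'' is subsumed by ``$v$ before $x_s$'' and needs no separate distance computation.
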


        \begin{proof}
       Suppose for the sake of contradiction that $x_t$ appears before $x_{s+1}$ on $Q$. 

              By \Cref{lem:cycle} we know that the ordering of $x_s, x_{s+1}, x_t, x_{t+1}$ on any path in $\mathcal{R}$ containing \[\set{x_s, x_{s+1}, x_t, x_{t+1},a,b}\] must be equivalent to
            \[x_s\to x_{s+1}\to x_t\to x_{t+1}\]
        up to cyclic reordering.

       Applying the assumption of case 5(b) that $x_{t+1}$ appears before $x_s$ on $Q$, and the assumption (for contradiction) that $x_t$ appears before $x_{s+1}$ on $Q$, we conclude that $Q$ passes through these nodes in the order
            \[x_{t}\to x_{t+1}\to x_s \to x_{s+1}.\]

        Now, where can $v$ be located on $Q$, relative to the above nodes?
      By our choice of $s$, we know that $v$ cannot occur before $x_s$, because if it did then $Q$ would go through $v\to x_s\to x_{s+1}$ and thus be a beginning path.
        Thus, $v$ must occur after $x_s$. Thus, $v$ occurs after $x_{t+1}$.
        But then $Q$ goes through $x_t\to x_{t+1}\to v$ as an ending path.
        This contradicts our choice of $t$ and completes the proof.
            \end{proof}

    With \Cref{claim:case5b-order} established, we know that $x_{s+1}$ occurs before $x_t$, and $x_{t+1}$ occurs before $x_s$ on $Q$. Additionally, recall from the first paragraph of case 5 that because we are not in {\bf case 2}, $b$ falls before $a$ on $Q$.

        Consider the position of $b$ on the cyclic order given by the $x_i$'s. There are 3 cases:
    \begin{enumerate}
        \item $b=x_s$ (and thus $a=x_{s+1}$) or $b=x_t$ (and thus $a=x_{t+1}$),
        \item $b$ is between $x_{s+1}$ and $x_{t-1}$ (inclusive), or
        \item $b$ is between $x_{t+1}$ and $x_{s-1}$ (inclusive).
    \end{enumerate}

    Each of these 3 cases are similar, and for each we will apply \Cref{lem:reroute} (the Critical Node Lemma). For all 3 instantiations of \cref{lem:reroute}, $T$ will be set to $K$ where $K$ is defined in \cref{eq:K-case5}, $W^*$ will be set to $W'\cup \set{v}$, and $U$ will be set to $\text{trap}(v)\cup \set{v}$. The pairs $(a_i,b_i)$ will be different for each of the 3 cases.  The consequence of \cref{lem:reroute} for all 3 cases is that $\mathcal{R}$ has a path containing every node in $W'\cup\set{v}$, as desired. Thus, it remains to define the pairs $(a_i,b_i)$ for each of the 3 cases and verify that the assumptions of \cref{lem:reroute} hold for each case.

    In case 1, $(a_1, b_1)$ and $(a_2, b_2)$ are defined as $(x_{t+1},x_s)$ and $(x_{s+1},x_t)$. We will now verify that the assumptions of \cref{lem:reroute} hold. For the first assumption, $Q$ was chosen to be an arbitrary path in $\mathcal{R}$ containing every node in $K$, so the properties we proved for $Q$ are indeed true for every such path. For the second assumption, $P[x_{t+1},x_s]\cup P[x_{s+1},x_t]$ by definition contains every node in $W'$, and these subpaths are non-overlapping.

    In case 2, $(a_1, b_1)$, $(a_2, b_2)$, and $(a_3,b_3)$ are defined as $(a,x_t)$, $(x_{t+1},x_s)$, and $(x_{s+1},b)$ (where the last pair is not included if $b=x_{s+1}$).  We will now verify that the assumptions of \cref{lem:reroute} hold.

    For the first assumption, $a$ appears before $x_t$ on $Q$ because otherwise $Q$ would have the ordering $x_{s+1}\to x_t\to a$, where $b$ is between $x_t$ and $a$ because $b$ cannot fall before $x_{s+1}$ or $x_t$ due to the orderings imposed by \cref{lem:combo} for non-reversing paths. 
    This contradicts the ordering on $P$ of $a\to x_t \to x_{s+1}\to b$ by \cref{lem:sp}. Similarly, $x_{s+1}$ appears before $b$ for a symmetric reason: otherwise $Q$ would have the ordering $b\to a \to x_{s+1} \to x_t$ which again contradicts the ordering of these nodes on $P$. $Q$ was chosen to be an arbitrary path in $\mathcal{R}$ containing every node in $K$, so the properties we proved for $Q$ are true for every such path. 
    
    For the second assumption, $P[a,x_t]\cup P[x_{t+1},x_s]\cup P[x_{s+1},b]$ by definition contains every node in the set $W'$, and these subpaths are non-overlapping. 
    
    Case 3 is precisely the same as case 2 but with $s$ and $t$ switched.
    
   This completes case 5, which completes the proof of \cref{thm:dirgen}.

\section*{Acknowledgements}

We thank Virginia Vassilevska Williams for helpful discussions related to this work.

\bibliographystyle{alpha}
\bibliography{main.bib}

\appendix

\section{Related Work on Disjoint Paths Problems}\label{app:rel}

The problem of finding disjoint paths in a graph is a classical problem, which has been studied extensively (e.g., for applications in VLSI layout \cite{korte1990paths}). 
In the standard formulation of the problem, we are given node pairs $(s_1,t_1),\dots,(s_k,t_k)$ and the goal is to find $k$ node-disjoint paths between each $s_i$ and its corresponding $t_i$. 
When $k$ is part of the input, this problem is known to be \textsf{NP}-complete \cite{DBLP:journals/siamcomp/EvenIS76}. 
For fixed $k$ in undirected graphs, Robertson and Seymour \cite{DBLP:journals/jct/RobertsonS95b} showed that this problem can be solved in  $O(n^3)$ time.
This running time has since been improved to $O(n^2)$
\cite{DBLP:journals/jct/KawarabayashiKR12}.

For directed graphs, this problem is much harder. 
Even for $k=2$, the problem is  \textsf{NP}-complete \cite{DBLP:journals/tcs/FortuneHW80}. 
However, it can be solved in polynomial time for fixed $k$ on certain special classes of directed graphs, such as DAGs \cite{DBLP:journals/tcs/FortuneHW80}, planar graphs, \cite{DBLP:journals/siamcomp/Schrijver94,DBLP:conf/focs/CyganMPP13}, tournament graphs \cite{chudnovsky2015disjoint}, and graphs of bounded treewidth \cite{schefflerpractical}.
These results on restricted graph classes, besides being interesting in their own right, have also found applications in showing that certain ``detour'' problems are fixed-parameter tractable on those same restricted graph classes \cite{detours}.

It is natural to try to \emph{minimize the lengths} of the disjoint paths. There are several ways to define this, such as the \emph{min-sum} version where the goal is to minimize the sum of the lengths of the disjoint paths, or the \emph{min-max} version where the goal is to minimize the maximum length of a path in the solution. Because these variants are generalizations of the original problem, they remain \textsf{NP}-complete for directed graphs even for $k=2$. However, even for undirected graphs, these problems pose barriers. 
The min-max version is \textsf{NP}-hard even for $k=2$ \cite{DBLP:journals/dam/LiMS90}. 
The min-sum version is in polynomial time for $k=2$ \cite{bjorklund2014shortest} and its complexity is open for $k\ge 3$. Considerable effort has been put into solving the min-sum version for $k\ge 3$, however the most general graph classes for which it is known to be in polynomial time are planar graphs with certain additional restrictions on the placement of the terminals relative to the faces \cite{DBLP:journals/disopt/KobayashiS10,DBLP:journals/talg/VerdiereS11,DBLP:conf/esa/BorradaileNZ15,DBLP:conf/fsttcs/DattaIK018}.

With the intractability of the above length minimization variants in mind, it is natural to study the $k$-disjoint shortest paths problem ($k$-DSP) which imposes that \emph{all paths} in the solution must be shortest. See \cref{sec:back} for the known results on $k$-DSP.

\section{Counterexample for Generalizing \texorpdfstring{\cref{lem:dag}}{Lemma 2.3} to Directed Graphs}\label{app:counter}

Recall \cref{lem:dag}:

\lemdag*

We will show a counterexample for extending \cref{lem:dag} to general directed graphs. The counterexample graph is simply a directed cycle on $n$ nodes. Let $k=n$ and let $c=3n/4+1$, so $d=n/4-1$, so indeed the assumption $k>3d$ holds. 
The input pairs $(s_1,t_1),\dots,(s_k,t_k)$ for the $(k,c)$-SPC instance  are defined as follows. Suppose the nodes are numbered from 1 to $n$ around the cycle. Then, for all $i$ from 1 to $n$, $s_i$ is defined as node $i$ in the cycle, and $t_i$ is node $i+3n/4 \pmod n$. Notice that there is only one path from each $s_i$ to the corresponding $t_i$. Taking these paths together, for all $i$, forms a valid $(k,c)$-SPC solution where every node has congestion $c=3n/4+1$; that is, every node is a max-congestion node. This is the only solution because there is a unique shortest path between each pair $(s_i,t_i)$. That is, there does not exist a solution where some solution path $P_i$ passes through all max-congestion nodes. This completes the counterexample. The parameters can be adjusted to allow for a constant other than 3 in the assumption $k>3d$.

\section{\texorpdfstring{\cref{lm:heavy-path}}{Lemma 2.5} implies \texorpdfstring{\cref{thm:undir}}{Lemma 1.4}}\label{app:cor}

We recall the statements of \cref{lm:heavy-path,thm:undir} below:

\lmheavy*

\thmundir*

The proof that \cref{lm:heavy-path} implies \cref{thm:undir} is almost the same as the analogous proof for DAGs in \cite{SDP-congestion-DAG}. 

\begin{proof}[Proof of \Cref{thm:undir} given \cref{lm:heavy-path}]

The following lemma is a generalization of \cite[Lemma 5]{SDP-congestion-DAG} which applies to undirected graphs instead of DAGs. 
Its proof is identical to the proof in \cite{SDP-congestion-DAG}, except that one repeatedly applies \Cref{lm:heavy-path} instead of applying an analogous claim for DAGs, and $(3d,2d)$ is replaced with $(4d,3d)$.

\begin{lemma}
	\label{lm:structure}
	Let $k > 4d$.
	Then any instance of $(k,c)$-SPC either has no solution, or has a solution consisting of paths $P_1, \dots, P_k$ such that there exists a subset 
	$I\sub \set{1, \dots, k}$ of size $4d$ so that the paths $\set{P_i}_{i\in I}$ solve the $(4d, 3d)$-SPC
	problem with sources $\set{s_i}_{i\in I}$ and targets $\set{t_i}_{i\in I}$.
	
\end{lemma}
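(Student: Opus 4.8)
The plan is to establish \cref{lm:structure} by induction on $k$ (for $k \ge 4d$), peeling off one solution path at a time using \cref{lm:heavy-path}. Throughout, I would keep in mind that $d = k - c$ is a fixed quantity, and that a path ``passing through all max-congestion nodes'' is a constraint relative to a \emph{given} solution.

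For the base case $k = 4d$ we have $c = 3d$, so the instance \emph{is} a $(4d,3d)$-SPC instance; taking $I = \set{1, \dots, k}$ together with any solution (if one exists) suffices, and if no solution exists there is nothing to prove. For the inductive step, assume $k > 4d$ and that the instance admits a solution. First apply \cref{lm:heavy-path} to obtain a solution $P_1, \dots, P_k$ in which some path -- which after relabeling indices we may take to be $P_k$ -- passes through every max-congestion node. I would then delete $P_k$ and argue that $P_1, \dots, P_{k-1}$ is a solution to the $(k-1, c-1)$-SPC instance on the pairs $(s_1,t_1), \dots, (s_{k-1}, t_{k-1})$: any node lying on exactly $c$ of $P_1, \dots, P_k$ is a max-congestion node, hence lies on $P_k$, hence lies on exactly $c-1$ of the remaining paths; and any other node was on at most $c-1$ of the $P_i$ to begin with. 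Since $(k-1) - (c-1) = d$ and $k - 1 \ge 4d$, the inductive hypothesis applies to this smaller instance.

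Applying the inductive hypothesis yields a solution $P_1', \dots, P_{k-1}'$ of the $(k-1, c-1)$-SPC instance together with a set $I \sub \set{1, \dots, k-1}$ with $|I| = 4d$ such that $\set{P_i'}_{i \in I}$ solves the $(4d, 3d)$-SPC instance with sources $\set{s_i}_{i \in I}$ and targets $\set{t_i}_{i \in I}$. To finish, reattach $P_k$: the collection $P_1', \dots, P_{k-1}', P_k$ consists of shortest paths between the correct endpoints, and since every node lies on at most $c - 1$ of $P_1', \dots, P_{k-1}'$, it lies on at most $c$ of $P_1', \dots, P_{k-1}', P_k$, so this is a valid solution to the original $(k,c)$-SPC instance; the same set $I$, now regarded as a subset of $\set{1, \dots, k}$, witnesses the conclusion.

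The argument has no real obstacle beyond bookkeeping; the one point to handle carefully is the invariant that removing a path through all max-congestion nodes decreases $k$ and $c$ by exactly one each, so that $d$ -- and hence the target pair $(4d,3d)$ and the applicability of \cref{lm:heavy-path} -- is preserved throughout the induction, and that reinserting $P_k$ at the end respects the congestion bound $c$. This is precisely the DAG argument of Amiri and Wargalla with $(3d,2d)$ replaced by $(4d,3d)$ and \cref{lm:heavy-path} used in place of its DAG counterpart.
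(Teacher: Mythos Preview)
Your proposal is correct and follows exactly the approach the paper indicates: the paper's own ``proof'' simply cites \cite[Lemma 5]{SDP-congestion-DAG} and says the argument is identical up to replacing the DAG heavy-path lemma by \cref{lm:heavy-path} and $(3d,2d)$ by $(4d,3d)$, which is precisely the induction you wrote out. The bookkeeping point you flag---that removing a path through all max-congestion nodes decreases $k$ and $c$ by one each, preserving $d$, and that reattaching $P_k$ restores congestion at most $c$---is the entire content, and you have it right.
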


We also need the following lemma from \cite{SDP-congestion-DAG} which observes that $(k,c)$-SPC can be reduced to $k$-DSP 
on a slightly larger graph.

\begin{lemma}[{\cite[Corollary 2]{SDP-congestion-DAG}}]
    \label{eq:reduce-uncongested}
    If $k$-DSP can be solved in $f(n,k)$ time, then $(k,c)$-SPC can be solved in $f(cn, k)$ time.
\end{lemma}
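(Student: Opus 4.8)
The plan is to exhibit a reduction that turns any $(k,c)$-SPC instance on an $n$-vertex graph $G$ into a $k$-DSP instance on a graph $G'$ with at most $cn$ vertices, in such a way that solutions of one correspond to solutions of the other; running the assumed $f(n,k)$-time $k$-DSP algorithm on $G'$ then solves the original instance within $f(cn,k)$ time (using that $f$ is nondecreasing in its first argument).

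First I would construct $G'$ by ``cloning'' vertices: replace each vertex $v$ of $G$ with $c$ copies $v^1,\dots,v^c$, and for every edge $(u,v)$ of $G$ of weight $w$ and every pair $(i,j)\in\{1,\dots,c\}^2$, put the edge $(u^i,v^j)$ of weight $w$ into $G'$ (keeping the orientation when $G$ is directed). The crucial structural point is that distinct copies of the same vertex are never adjacent in $G'$, so any path of $G'$ uses at most one copy of each vertex of $G$; the projection $v^i\mapsto v$ therefore maps paths of $G'$ to \emph{simple} walks of $G$ of the same total weight, while conversely a simple shortest path from $s$ to $t$ in $G$ lifts to a path of $G'$ between any prescribed copy of $s$ and any prescribed copy of $t$ of the same weight. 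Consequently the distance in $G'$ from $s^a$ to $t^b$ equals the distance in $G$ from $s$ to $t$, for all $a,b$. To finish building the $k$-DSP instance I would first peel off the trivial pairs (those with $s_i=t_i$, whose only solution path is the single vertex $s_i$), deleting one copy of each such vertex to ``charge'' for the congestion it forces, and then give each remaining pair $i$ a source $s_i^{\alpha_i}$ and sink $t_i^{\beta_i}$, choosing the designated copies so that the designated copies at any one vertex of $G$ are pairwise distinct; this is possible exactly when the number of terminal roles landing on a vertex never exceeds its remaining number of copies, which holds in any feasible instance, and otherwise we report ``no solution.''

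Next I would check the correspondence both ways. Forward: given a $(k,c)$-SPC solution in $G$, every vertex $v$ is used by at most $c$ of the paths, so I can assign the paths using $v$ as a designated terminal their designated copies and the other paths through $v$ the remaining copies; lifting each solution path through its assigned copies gives node-disjoint paths in $G'$ with the prescribed endpoints, which are shortest by the distance identity. Backward: a $k$-DSP solution in $G'$ is a family of node-disjoint shortest paths with the prescribed endpoints; projecting each one to $G$ gives a simple shortest path from $s_i$ to $t_i$ of equal weight, and since each vertex of $G$ has at most $c$ copies and the lifted paths are disjoint, no vertex of $G$ is covered by more than $c$ projected paths (together with the peeled-off trivial paths). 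This establishes the equivalence of the instances and hence the claimed running time.

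I expect the main obstacle to be the bookkeeping around vertices that serve as several terminals at once -- especially trivial pairs $s_i=t_i$, which cannot be encoded directly (positive weights forbid a zero-length path between two distinct copies) and must be handled by the preprocessing step, together with the need to pre-commit distinct copies to distinct terminal roles so that the forward lifting is always possible. The remaining ingredients -- that projection and lifting preserve weight, simplicity, disjointness, and the congestion bound -- are routine verifications.
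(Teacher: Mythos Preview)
The paper does not give its own proof of this lemma; it merely quotes the result from \cite{SDP-congestion-DAG} and uses it as a black box. Your vertex-cloning reduction is exactly the standard construction behind that corollary, and the overall shape of your argument is correct.

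There is, however, one faulty step in your reasoning. You write that ``distinct copies of the same vertex are never adjacent in $G'$, so any path of $G'$ uses at most one copy of each vertex of $G$,'' and conclude that projections of paths are simple. Non-adjacency of copies does not imply this: for instance, if $G$ contains a triangle $u,v,w$, then $u^1\to v^1\to w^1\to u^2$ is a perfectly good simple path in $G'$ whose projection $u\to v\to w\to u$ is not simple. The correct argument is simply that all edge weights are positive, so a \emph{shortest} path in $G'$ cannot project to a walk that revisits a vertex of $G$ (else one could excise a positive-weight cycle and obtain a strictly shorter walk). Since the only paths you ever project are shortest paths, this suffices. Your handling of trivial pairs is also more elaborate than needed: for $s_i=t_i=v$ you can simply set $\alpha_i=\beta_i$ so that the single-vertex path lifts to a single copy, rather than peeling off pairs and deleting copies; either way the counting goes through because the total number of paths through $v$ is at most $c$.
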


Now we will use these lemmas to prove the result.

    If $k \le 4d$, then $c\le 4d$ as well.
    Thus, by \Cref{eq:reduce-uncongested}, we can solve the given $(k,c)$-SPC instance in $f(n,4d)$
    time, which certainly implies the desired result.
    
    Otherwise, $k > 4d$.
    In this case, by \Cref{lm:structure} we know that if the $(k,c)$-SPC instance is solvable, it must have a solution where some subset of $4d$ (source, target) pairs solve a $(4d, 3d)$-SPC instance. 
    If we find a set of $4d$ paths solving such an instance, then we can easily complete it to a solution to the original $(k,c)$-SPC problem simply by finding
    any shortest paths between the remaining source and target nodes.
    This works because the congestion of the resulting $k$ paths is at most 
    $3d + (k-4d) = k-d = c.$
    
        Thus, we can solve the $(k,c)$-SPC instance by trying out all $\binom{k}{4d}$ possible subsets of $4d$ (source, target) pairs, attempting to solve $(4d,3d)$-SPC with respect to these terminal nodes, and if we find a solution, conclude by finding $k-4d$ shortest paths between the remaining terminal nodes (e.g., by Dijkstra's algorithm). 
    By \Cref{eq:reduce-uncongested}, each of the $(4d,3d)$-SPC instances constructed in this way can be solved in $f(3dn,4d)$ time. 
    It follows that we can solve the original problem in 
        $O(\binom{k}{4d}f(3dn, 4d))$
    time as desired.
\end{proof}

\end{document}